\newtheorem{theorem}{Theorem}
\newtheorem{proposition}{Proposition}
\newtheorem{corollary}{Corollary}
\newtheorem{lemma}{Lemma}
\newtheorem{observation}{Observation}
\newtheorem{definition}{Definition}
\def\sss {\scriptscriptstyle}
\def\tr {\mathrm{Tr}}
\title{General Schemes for Quantum Entanglement and Steering Detection}
\author
{Ma-Cheng Yang$^{1}$, Cong-Feng Qiao$^{1,2\ast}$\\
\\
\normalsize{$^{1}$School of Physical Sciences, University of Chinese Academy of Sciences}\\
\normalsize{YuQuan Road 19A, Beijing 100049, China}\\
\normalsize{$^{2}$Key Laboratory of Vacuum Physics, University of Chinese Academy of Sciences}\\
\normalsize{YuQuan Road 19A, Beijing 100049, China}\\
\\
\normalsize{$^\ast$To whom correspondence should be addressed; E-mail: qiaocf@ucas.ac.cn.}
}
\date{}
\begin{document} 


\baselineskip24pt

\maketitle 

\begin{abstract}
Separability problem is a long-standing tough issue in quantum information theory. In this paper, we propose a general method to detect entanglement via arbitrary measurement $\boldsymbol{X}$, by which several novel criteria are established. The new criteria are found incorporate many of the prevailing ones in the literatures. Our method is applicable as well to the steering detection, which possesses a merit of ignoring the knowledge of involved quantum states. A concept of measurement orbit, which plays an important role in the detection of entanglement and steering, is introduced, which enlightens our understanding of uncertainty relation. Moreover, an extension of symmetric informationally complete positive operator-valued measures (SIC-POVM), viz. symmetric complete measurements (SCM), is proposed, and employed to reconstruct the quantum state analytically.
\end{abstract}

\section{Introduction}\label{sec:intro}
\noindent
Since Einstein, Podolsky and Rosen questioned the completeness of quantum mechanics \cite{einstein35}, the enigmatic correlation between far apart quantum systems arouses more and more attentions with the time going. People now realize that there are several different types of nonclassical correlation, such as quantum entanglement \cite{horodecki09,guhne09}, Bell nonlocality \cite{brunner14}, quantum steering \cite{uola20} and so on. These correlations are normally associated with certain entangled states, and to ascertain the entanglement tends to be an NP-hard issue with the increase of dimension, even for the bipartite system \cite{gurvits04}.

In recent years, many methods were proposed to distinguish the entangled states from the unentangled states \cite{horodecki09,guhne09}. Nevertheless, these methods are either hard to operate, like positive map theory \cite{horodecki96-TXC1O}, or satisfy only one of the necessary and sufficient conditions for entanglement. It should be noted that the positive partial transpose (PPT) criterion \cite{peres96} is necessary and sufficient just for systems $2\times 2$ and $2\times 3$ \cite{horodecki96-TXC1O}. For high dimensional systems, the PPT entangled states \cite{horodecki97,bennett99} may result in the bound entanglement problem \cite{horodecki98}. To capture these ``weaker'' PPT entangled states, some new criteria have been proposed. We know that the uncertainty relation is closely related to the entanglement \cite{hofmann03} and especially the method by means of it (named local uncertainty relation criterion, LUR) is effective for the detection of some PPT entangled states \cite{hofmann03-aoEiI}. More generally, entanglement can be uncovered by a concave-function uncertainty relation \cite{huang10}. In addition, it is proved that covariance matrix criterion is equivalent to a LUR criterion \cite{guhne07,gittsovich08}. Quantum Fisher information (QFI) has also been employed to detect the entanglement, which complements the criterion based on variance and LUR \cite{li13}. The computable cross norm or realignment (CCNR) criterion \cite{chen03,rudolph05} is also an effective way of detecting PPT entangled states. However, it is not a necessary and sufficient condition, even for 2-qubit system \cite{rudolph03}. Furthermore, the correlation matrix criteria based on Bloch representation provides as well a class of effective methods \cite{vicente07,li18}. Recently, a entanglement criterion based on SIC-POVM is proposed \cite{shang18}, which may outperform the CCNR in some cases.

In comparison with the bipartite system, multipartite entanglement encounters more challenges due to the complexity of multipartite system. It is realized that the bipartite methods implies somehow the detection of the multipartite entanglement, and in fact various multipartite criteria have been successfully developed, such as the genuine multipartite entanglement criterion \cite{jungnitsch11}, quantum Fisher information method for multipartite system \cite{toth12,hyllus12,hong15,akbari-kourbolagh19,ren21}, correlation tensor criteria \cite{hassan08,de11} and so on \cite{wu00,krammer09,li15-genu,huber10,ketterer19}. Very recently, we notice that a necessary and sufficient criterion for $N$-qubit system is given \cite{wu22}. More detailed progresses about entanglement criteria can be found in reviews \cite{horodecki09,guhne09}.

Though Schrödinger proposed the concept of steering as early as 1935 \cite{schrodinger35,schrodinger36}, people pay less attention to it until Wiseman \emph{et al.} \cite{wiseman07} formulated the definition in 2007. Quantum steering is an intermediate quantum correlation between quantum entanglement and Bell nonlocality \cite{wiseman07}. Recently, many steering criteria have been developed, similar to entanglement detection, such as criteria via various uncertainty relations \cite{schneeloch13,pramanik14,zhen16,costa18,li21} and so on (see e.g. recent review \cite{uola20}). Notice, the phenomenon of quantum steering has been explicitly displayed by a series of miraculous experiments \cite{,saunders10,handchen12,smith12,wittmann12,sun14}.

In attempt to detect quantum correlations, such as quantum entanglement and steering, one needs first to determine a group of observables or measurements. The CCNR and ESIC criteria belong to the orthogonal measurement (OM) and SIC-POVM measurement, respectively. In fact, generally, one can employ arbitrary measurement to set up a nonlocal correlation criteron, in which each observable can be aligned along arbitrary direction on the Poincar{\'e} sphere. In this work, we are about to propose a general method to obtain entanglement and steering criteria for the arbitrary measurement. To this end, a concept of the measurement orbit will be introduced, which exhibits explicitly the entanglement and steering detection. We will prove the equivalence of correlation matrix criteria and optimal entanglement witness which can be found in the measurement orbit. In addition, it is demonstrated that the measurements lying in the same measurement orbit possess a common variance uncertainty relation, which results in the optimal LUR criterion in the measurement orbit.

This paper is organized as follows. In section 2 and 3, we develop some necessary concepts and tools for the establishment of entanglement and steering criteria. In section 4, we give out a general variance uncertainty relation in summation belonging to the whole measurement orbit $\mathcal{O}(\boldsymbol{X})$ of arbitrary measurement $\boldsymbol{X}$ and introduce a
class of measurement SCM for the qudit system. In section 5, entanglement and steering criteria are obtained for arbitrary measurement. The last section is devoted to a brief conclusion.

\section{Quantum observables and representation}\label{sec:poincare_sphere}
\noindent
Generally, physical observables are described as Hermitian operators in quantum mechanics. For qubit system, it is convenient, for instance, to parametrize traceless operators by unit vectors on the Poincar{\'e} sphere, i.e., $X=\vec{n}\cdot\vec{\sigma}$, where $\vec{\sigma}$ is the vector with components of Pauli matrices; $\vec{n}$ is the unit vector in Euclidean space of dimension 3. Similarly, for qudit system, an arbitrary observable $X$ can be parametrized as
\begin{align}
X = \vec{n}\cdot \boldsymbol{\Pi} = n_{0}\Pi_{0} + \hat{n}\cdot\vec{\pi} \; ,
\label{eq:obser_param}
\end{align}
where $\Pi_{0}=\sqrt{\frac{2}{d}}\mathds{1}$ and $\{\Pi_{\mu}=\pi_{\mu}\}_{\mu=1}^{d^2-1}$ is orthogonal basis of $\mathfrak{su}(d)$ Lie algebra, satisfying the following relations:
\begin{align}
&\pi_{\mu}=\pi_{\mu}^{\dagger}\; , \qquad \mathrm{Tr}[\pi_{\mu}]=0 \; , \notag \\
&\pi_{\mu}\pi_{\nu} = \frac{2}{d}\delta_{\mu\nu}\mathds{1} + \sum_{\gamma}(if_{\mu\nu \gamma} + g_{\mu\nu\gamma})\pi_{\gamma} \; ,
\label{eq:su_lie_alge_gene}
\end{align}
with $g_{\mu\nu \gamma}$ and $f_{\mu\nu \gamma}$ the completely symmetric and antisymmetry tensors respectively. It is easy to check that $\{\Pi_{\mu}\}_{\mu=0}^{d^2-1}$ satisfies the following orthogonal relation
\begin{align}
\tr[\Pi_{\mu}\Pi_{\nu}] = 2\delta_{\mu\nu} \; .
\end{align}
Obviously, $\vec{n}$ is a real $d^2$-dimensional vector which origins from the hermiticity of observable $X$ and the first component $n_{0}$ of $\vec{n}$ gives the trace of $X$, that is, $\tr[X]=\sqrt{2d}\,n_{0}$.

According to above, an observable can be decomposed into two parts, with trace and without trace, of which the traceless one constitutes a sphere with radius $\alpha$ in $(d^2-1)$-dimensional space. When we only consider the traceless observables, the sphere can be viewed as a generalized Poincar{\'e} sphere of qubit system.

\cref{eq:obser_param} establishes a relation between real vector $\vec{n}$ and observable $X$, that means an operation may map to a real vector. Correspondingly, we can define a measurement set $\mathcal{M}_{m}$:
\begin{align}
\mathcal{M}_{m} := \{\boldsymbol{X}=(X_{1},\cdots,X_{m})^{\mathrm{T}}|X_{\mu}=X_{\mu}^{\dagger}\} \; ,
\end{align}
where the measurement $\boldsymbol{X}$ means a group of operators $X_{1},X_{2},\cdots, X_{m}$.

A measurement is termed homogeneous if it is traceless, its components possess the same length $|\hat{n}|=\sqrt{\sum_{i=1}^{d^2-1}n_{i}^2}=\alpha$. Accordingly, we have
\begin{align}
\tr[X^2] = 2|\vec{n}|^2 = 2\alpha^2 + \frac{1}{d}\tr^2[X] \; .
\label{eq:tr_meas_squar}
\end{align}
Therefore, a homogeneous measurement corresponds to $m$ real vectors $\hat{n}_{i}$ of dimension $(d^2-1)$ with $|\hat{n}_{i}|=\alpha$, i.e.,
\begin{align}
\boldsymbol{X} \longrightarrow \operatorname{CP} := \Bigg(\hat{n}_{1},\hat{n}_{2},\cdots,\hat{n}_{m}\Bigg) \; .
\end{align}
Here, the ends of vectors $\hat{n}_{1},\hat{n}_{2},\cdots,\hat{n}_{m}$ can be viewed as vertexes of a convex polytope(CP) in $(d^2-1)$-dimensional space, which implies a measurement set is linked with a convex polytope. We then have:
\begin{proposition}
A traceless homogeneous measurement $\boldsymbol{X}$ corresponds to a convex polytope in $(d^2-1)$-dimensional space.
\end{proposition}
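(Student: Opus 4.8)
The plan is simply to unpack the two hypotheses ``traceless'' and ``homogeneous'' in terms of the parametrization of \cref{sec:poincare_sphere}, and then invoke the elementary fact that the convex hull of a finite point set is a convex polytope. First I would recall from \cref{eq:obser_param} that, because $\{\Pi_{\mu}\}_{\mu=0}^{d^2-1}$ is a basis of the space of Hermitian operators with $\tr[\Pi_{\mu}\Pi_{\nu}]=2\delta_{\mu\nu}$, each component $X_{\mu}$ of $\boldsymbol{X}$ is in bijective correspondence with a real vector $\vec{n}_{\mu}=(n_{0,\mu},\hat{n}_{\mu})\in\mathbb{R}^{d^2}$. The ``traceless'' assumption together with $\tr[X_{\mu}]=\sqrt{2d}\,n_{0,\mu}$ forces $n_{0,\mu}=0$ for every $\mu$, so each $X_{\mu}$ is faithfully encoded by the $(d^2-1)$-dimensional vector $\hat{n}_{\mu}$; the ``homogeneous'' assumption then gives $|\hat{n}_{\mu}|=\alpha$ for all $\mu$, i.e.\ all these vectors lie on the sphere of radius $\alpha$ in $\mathbb{R}^{d^2-1}$.

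Next I would make the claimed correspondence explicit: send $\boldsymbol{X}=(X_{1},\dots,X_{m})^{\mathrm{T}}$ to the finite configuration $\{\hat{n}_{1},\dots,\hat{n}_{m}\}$ and set $\operatorname{CP}(\boldsymbol{X}):=\operatorname{conv}\{\hat{n}_{1},\dots,\hat{n}_{m}\}$. Being the convex hull of finitely many points, $\operatorname{CP}(\boldsymbol{X})$ is by definition a bounded convex polytope living in $\mathbb{R}^{d^2-1}$, whose vertex set is a subset of $\{\hat{n}_{1},\dots,\hat{n}_{m}\}$. Finally I would observe that this assignment is well defined: reordering the components of $\boldsymbol{X}$ merely permutes the points $\hat{n}_{\mu}$ and hence leaves $\operatorname{CP}(\boldsymbol{X})$ unchanged.

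I do not anticipate a real obstacle, since the statement is essentially a repackaging of the constructions already introduced. The only subtlety worth a remark is that the $\hat{n}_{i}$ are the \emph{generators} of the polytope rather than necessarily its extreme points --- a given $\hat{n}_{i}$ may lie in the convex hull of the others --- but since the proposition asserts a correspondence with \emph{a} convex polytope, and makes no claim of full-dimensionality in $\mathbb{R}^{d^2-1}$, this does not affect the argument.
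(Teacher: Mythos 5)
Your proposal is correct and takes essentially the same approach as the paper, which likewise notes that tracelessness kills the $n_{0}$ component, so each observable is encoded by a vector $\hat{n}_{\mu}\in\mathbb{R}^{d^2-1}$ of length $\alpha$, and then identifies the resulting finite configuration $(\hat{n}_{1},\dots,\hat{n}_{m})$ with (the vertex set of) a convex polytope. Your closing remark that some $\hat{n}_{i}$ may fail to be extreme points of the hull is a small but sensible sharpening of the paper's informal claim that the endpoints ``can be viewed as vertexes,'' and does not alter the argument.
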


Moreover, we define the orbit of measurement set $\mathcal{M}_{m}$ under orthogonal group $O(m)$ as
\begin{align}
\mathcal{O}(\boldsymbol{X}) := \{O\boldsymbol{X}|O\in O(m)\} \; , \forall \boldsymbol{X}\in \mathcal{M}_{m} \; ,
\end{align}
which partitions $\mathcal{M}_{m}$ to disjoint equivalence
classes under the action of $O(m)$ \cite{dummit03}. According to the definition, two measurements are equivalent if $\boldsymbol{Y}=O\boldsymbol{X}$, or in other words they lie in the same orbit.

\section{Representation of quantum state}

\noindent
Based on the discussion in preceding section we now attempt to characterize a quantum state by measurement $\boldsymbol{X}$. Similar to observable, a quantum state, in terms of density matrix can be parameterized as
\begin{align}
\rho = \frac{1}{2}\vec{r}\cdot\boldsymbol{\Pi} = \frac{1}{d}\mathds{1} + \frac{1}{2}\hat{r}\cdot\vec{\pi} \; .
\label{eq:dens_matr_param}
\end{align}
Here, $r_{0}=\sqrt{2/d}$ and $|\vec{r}|\leq \sqrt{2}$ due to the normalization and the positive semidefiniteness of $\rho$, respectively. $\hat{r}$ is usually referred as the Bloch vector of quantum state $\rho$ in the literature. In light of quantum mechanics, the expectation of observable $X_{\mu}$ reads
\begin{align}
\braket{X_{\mu}} = \tr[\rho X_{\mu}] = \vec{n}_{\mu}\cdot \vec{r} \; .
\end{align}
Therefore, a given quantum state can be characterized by the real vector stemming from the measurement $\boldsymbol{X}$:
\begin{align}
\vec{x} := (\braket{X_{1}},\braket{X_{2}},\cdots)^{\mathrm{T}} \; ,
\end{align}
which inspires us to define a convex set of $\mathcal{B}(\boldsymbol{X})$ to characterize a set of quantum states illustrated in \cref{fig:measurement}.
\begin{figure}
\centering
\includegraphics[width=1\linewidth]{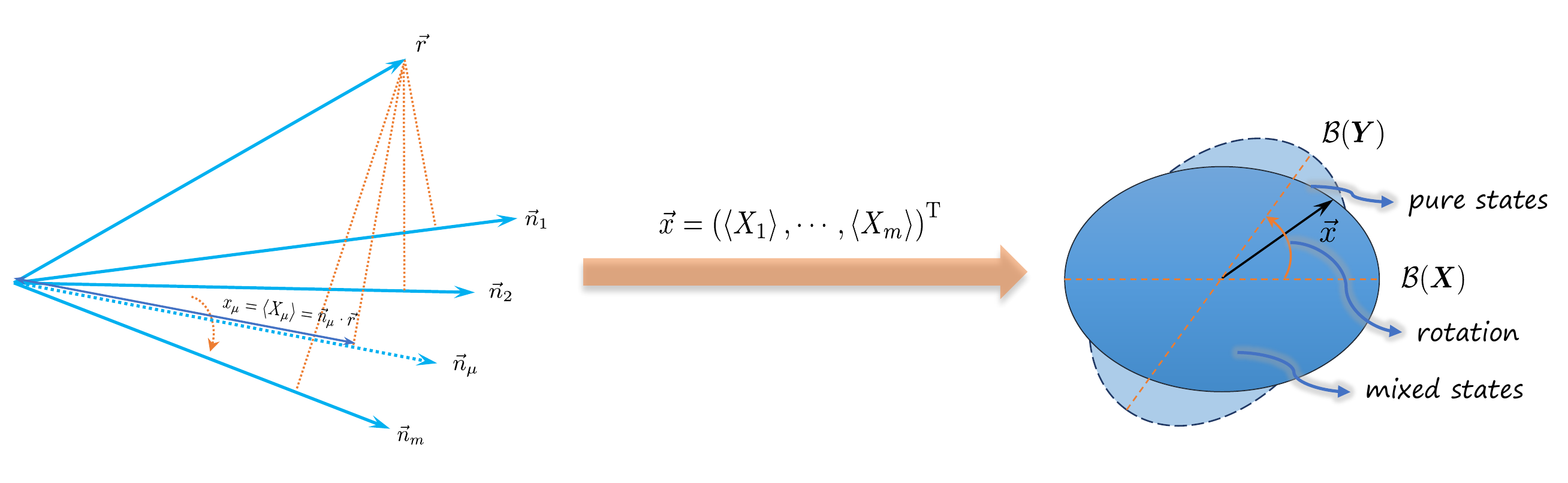}
\caption{Representation of quantum state under measurement $\boldsymbol{X}$. Every quantum state maps to a real vectors $\vec{x}$, whose set is denoted as $\mathcal{B}(\boldsymbol{X})$. The boundary of $\mathcal{B}(\boldsymbol{X})$ corresponds to the pure states with the mixed states inside. If measurements $\boldsymbol{X}$ and $\boldsymbol{Y}$ lie in the same orbit, $\mathcal{B}(\boldsymbol{X})$ and $\mathcal{B}(\boldsymbol{Y})$ are equivalent up to a rotation.}
\label{fig:measurement}
\end{figure}
\begin{definition}
Given a measurement $\boldsymbol{X}$, one can establish a mapping from density matrix $\rho$ to $m$-dimensional real vector $\vec{x}$, $\rho \xmapsto{x_{\mu}=\tr[\rho X_{\mu}]} \vec{x}$. The $\mathcal{B}(\boldsymbol{X})$ is employed to signify the set of various $\vec{x}$, i.e. $\mathcal{B}(\boldsymbol{X})=\{\vec{x}|x_{\mu}=\tr[\rho X_{\mu}],\forall \rho\}$.
\end{definition}
From the above definition, one can readily find $\mathcal{B}(\boldsymbol{X})$  possessing the following properties (see \cref{appen:state_rep_property} for proof).
\begin{proposition}
$\mathcal{B}(\boldsymbol{X})$ is a bounded closed convex set; $\mathcal{B}(\boldsymbol{X})$ is a convex hull consisting of vectors $\vec{x}$ corresponding to pure states, that is, $\mathcal{B}(\boldsymbol{X})=\mathrm{conv}(\{\vec{x}|x_{\mu}=\tr[\rho X_{\mu}],\rho^2=\rho,\forall \rho\})$;
$\forall \boldsymbol{X}\in \mathcal{M}_{m}$, measurements lying in the same orbit $\mathcal{O}(\boldsymbol{X})$ have a common $\mathcal{B}(\boldsymbol{X})$ up to a rotation in $m$-dimensional space. \label{prop:state_rep_property}
\end{proposition}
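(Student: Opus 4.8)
The plan is to reduce all three claims to the single structural observation that $\rho\mapsto\vec{x}$ is the restriction to the density-matrix set of a \emph{linear} map $T$ on the real vector space of Hermitian $d\times d$ matrices, $T(A)=(\tr[A X_1],\dots,\tr[A X_m])^{\mathrm{T}}\in\mathbb{R}^m$ (the entries are real because $A$ and the $X_\mu$ are Hermitian). Everything then follows from elementary facts about linear images of convex and of compact sets.

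First I would treat boundedness, closedness and convexity together. The density-matrix set $\mathcal{D}=\{\rho:\rho=\rho^{\dagger},\ \rho\geq0,\ \tr\rho=1\}$ is convex, is bounded (from $\rho\geq0$ and $\tr\rho=1$ one gets $0\leq\rho\leq\mathds{1}$), and is closed, being the intersection of the closed conditions $\rho=\rho^{\dagger}$, $\rho\geq0$ and $\tr\rho=1$; since Hermitian matrices form a finite-dimensional space, $\mathcal{D}$ is compact. As $\mathcal{B}(\boldsymbol{X})=T(\mathcal{D})$ and $T$ is linear and continuous, $\mathcal{B}(\boldsymbol{X})$ is convex and compact, hence a bounded closed convex set.

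For the convex-hull characterization I would invoke the spectral theorem: every $\rho\in\mathcal{D}$ equals $\sum_k p_k\ket{\psi_k}\!\bra{\psi_k}$ with $p_k\geq0$, $\sum_k p_k=1$, so $\mathcal{D}=\mathrm{conv}(\mathcal{P})$ with $\mathcal{P}=\{\rho:\rho^2=\rho,\ \tr\rho=1\}$ the set of pure states. For any linear map one has $T(\mathrm{conv}(S))=\mathrm{conv}(T(S))$: the inclusion $\supseteq$ is clear, and $\subseteq$ holds because $T$ sends a convex combination of points of $S$ to the same convex combination of their images. Taking $S=\mathcal{P}$ yields $\mathcal{B}(\boldsymbol{X})=\mathrm{conv}(\{\vec{x}\,:\,x_\mu=\tr[\rho X_\mu],\ \rho^2=\rho,\ \forall\rho\})$. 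As a consistency check, $\mathcal{P}$ is itself compact (a continuous image of the unit sphere of $\mathbb{C}^d$), so by Carath\'eodory's theorem its convex hull is compact, matching the previous step.

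Finally, for orbit invariance, write $\boldsymbol{Y}=O\boldsymbol{X}$ with $O\in O(m)$, i.e. $Y_\mu=\sum_\nu O_{\mu\nu}X_\nu$. Then $y_\mu=\tr[\rho Y_\mu]=\sum_\nu O_{\mu\nu}\tr[\rho X_\nu]=\sum_\nu O_{\mu\nu}x_\nu$ for every $\rho$, so the induced map on representation vectors is exactly $\vec{x}\mapsto O\vec{x}$; hence $\mathcal{B}(\boldsymbol{Y})=O\,\mathcal{B}(\boldsymbol{X})$, which is $\mathcal{B}(\boldsymbol{X})$ up to the orthogonal transformation $O$ (a rotation, possibly composed with a reflection). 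I do not expect a genuine obstacle in this proposition; the only points that merit care are the interchange $T(\mathrm{conv}(S))=\mathrm{conv}(T(S))$ and the appeal to finite-dimensionality to keep the image closed — an argument that would fail for a non-closed operator range in infinite dimensions — so that is where I would be most explicit.
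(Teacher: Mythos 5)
Your proposal is correct and follows essentially the same route as the paper: spectral decomposition of $\rho$ into pure states plus linearity of $\rho\mapsto\vec{x}$ gives the convex-hull description and boundedness, and $x'_\mu=\sum_\nu O_{\mu\nu}x_\nu$ gives $\mathcal{B}(O\boldsymbol{X})=O\,\mathcal{B}(\boldsymbol{X})$. Your treatment is in fact slightly more complete, since you justify closedness via compactness of the state set and continuity of the linear map $T$, a point the paper's proof asserts but does not argue explicitly.
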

Generally, in order to expand a vector in $d$-dimensional space, we need at least $d$ linearly independent vectors. In this sense, one can expand the density matrix by measurement $\boldsymbol{X}$
\begin{align}
\rho = \sum_{\mu}\omega_{\mu}X_{\mu} \; .
\end{align}
Here, $\dim \operatorname{span}\{X_{\mu}\}_{\mu=1}^m=d^2$ and the coefficients $\omega_{\mu}$ will not be unique in case $m> d^2$. If density matrix can be constructed from measurement $\boldsymbol{X}$, the coefficient $\omega_{\mu}$ can then be expressed in linear combination of $x_{\mu}=\tr[\rho X_{\mu}]$, which is related to the Gram matrix
\begin{align}
\Omega_{\mu\nu} := \tr[X_{\mu}X_{\nu}] \; .
\end{align}
$\Omega$ is generally positive semidefinite and will be positive definite when $\{X_{\mu}\}$ are linearly independent \cite{horn12}. Because $x_{\mu}=\sum_{\nu}\omega_{\nu}\tr[X_{\mu}X_{\nu}]=\sum_{\nu}\Omega_{\mu\nu}\omega_{\nu}$, we can condense it to
\begin{align}
\vec{x} = \Omega \vec{\omega} \; .
\label{eq:linear_eqs}
\end{align}
Given a measurement $\boldsymbol{X}$, \cref{eq:linear_eqs} is a set of linear equations with respect to $\omega_{\mu}$, whose general solution can be obtained through the Moore-Penrose inverse $\Omega^{-}$ \cite{penrose55,ben-israel03}
\begin{align}
\vec{\omega} = \Omega^{-}\vec{x} + (\mathds{1}-\Omega^{-}\Omega)\vec{x}_{0} \;
\end{align}
with $\vec{x}_{0}$ an arbitrary vector. By dint of the properties of the Moore-Penrose inverse, we may have the following relation (see \cref{appen:purity_equality} for proof)
\begin{align}
\tr[\rho^2] = \vec{\omega}^{\mathrm{T}}\Omega\vec{\omega} = \vec{x}^{\mathrm{T}}\Omega^{-}\vec{x} \; ,
\label{eq:purity_equality}
\end{align}
which is irrelevant with the vector $\vec{x}_{0}$. Since $\Omega^{-}$ is a real symmetric matrix, it can be diagonalized by some orthogonal matrix $O$. Then, we have
\begin{align}
\tr[\rho^2] = \sum_{\mu}\lambda_{\mu}x_{\mu}'^2 \; .
\end{align}
Here, $\vec{x}\,'=O\vec{x}$ and $O\Omega^{-}O^{\mathrm{T}}=\operatorname{diag}\{\lambda_{1},\lambda_{2},\cdots,\lambda_{d^2}\}$. Hence, for linearly independent measurement $\{X_{\mu}\}$, $\mathcal{B}(\boldsymbol{X})$ satisfies constraint
\begin{align}
\frac{\tr[\rho^2]}{\lambda_{\max}} \leq |\vec{x}|^2 \leq \frac{\tr[\rho^2]}{\lambda_{\min}} \; .
\end{align}
Here, $\lambda_{\max}$ and $\lambda_{\min}$ are maxima and minima of $\{\lambda_{1},\lambda_{2},\cdots,\lambda_{d^2}\}$, respectively; and $\lambda_{i}>0$ due to the positive definiteness of $\Omega^{-}$\footnote{If $X_{\mu}$ are linearly independent, then $\Omega$ will be invertible and $\Omega^{-}=\Omega^{-1}$.}.

\section{Complete measurements and uncertainty relation}
\noindent
With the preceding discussion, we now come to the issue of uncertainty relation for arbitrary measurement $\boldsymbol{X}$ in form of variance summation.

The variance of observable $X$ on the quantum state $\rho$ is given by
\begin{align}
V(\rho,X) = \tr[\rho X^2] - \tr[\rho X]^2 \; .
\end{align}
We can therefore describe the uncertainty of a measurement $\boldsymbol{X}$ by
\begin{align}
\mathcal{V}(\rho,\boldsymbol{X}) := \sum_{\mu}V(\rho,X_{\mu}) \; .
\end{align}
One may readily find the following properties for $\mathcal{V}(\rho,\boldsymbol{X})$ (see \cref{appen:variance_sum_prop} for proof).
\begin{proposition}\label{prop:variance_sum_prop}
1. Variance of observable $X$ is independent of its trace, i.e.
\begin{align}
V(\rho,X=\vec{n}\cdot\boldsymbol{\Pi}) = V(\rho,X=\hat{n}\cdot\vec{\pi}) \; .
\end{align}

\noindent
2. The quantity $\mathcal{V}(\rho,\boldsymbol{X})$ is an invariant of the measurement orbit $\mathcal{O}(\boldsymbol{X})$, $\forall \boldsymbol{X}\in \mathcal{M}_{m}$.

\noindent
3. Additivity of $\mathcal{V}(\rho,\boldsymbol{X})$ exists for product states:
\begin{align}
\mathcal{V}\left(\bigotimes_{i}\rho^{(i)},\sum_{i}\boldsymbol{X}^{(i)}\right) = \sum_{i}\mathcal{V}(\rho^{(i)},\boldsymbol{X}^{(i)}) \; .
\end{align}
Here, $\boldsymbol{X}^{(i)}$ stands for the i'th subsystem of a comprehensive measurement.

\noindent
4. The quantity $\mathcal{V}(\rho,\boldsymbol{X})$ is a concave function of the convex set of density matrices, i.e.
\begin{align}
\mathcal{V}\left(\sum_{i}p_i\rho_i,\boldsymbol{X}\right) \geq \sum_{i}p_{i}\mathcal{V}\left(\rho_{i},\boldsymbol{X}\right) \;
\end{align}
with $p_i$ the probability of state being the $\rho_{i}$.
\end{proposition}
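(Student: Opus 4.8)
The plan is to prove the four items separately, each reducing to a direct manipulation of the definition $V(\rho,X)=\tr[\rho X^2]-\tr[\rho X]^2$ together with the algebraic structure already set up in Sections 2 and 3.

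For item 1, I would write $X=\vec n\cdot\boldsymbol\Pi=c\mathds 1+X'$ with $c=n_{0}\sqrt{2/d}$ and $X'=\hat n\cdot\vec\pi$ traceless, and expand both terms of $V(\rho,c\mathds 1+X')$. Since $\tr[\rho(c\mathds 1+X')^2]=c^{2}+2c\,\tr[\rho X']+\tr[\rho X'^{2}]$ and $\tr[\rho(c\mathds 1+X')]^{2}=c^{2}+2c\,\tr[\rho X']+\tr[\rho X']^{2}$, the $c$-dependent pieces cancel and $V(\rho,X)=V(\rho,X')$. For item 2, write $\boldsymbol Y=O\boldsymbol X$ so that $Y_{\mu}=\sum_{\nu}O_{\mu\nu}X_{\nu}$, and compute the two contributions to $\sum_{\mu}V(\rho,Y_{\mu})$ by exchanging sums: $\sum_{\mu}\tr[\rho Y_{\mu}^{2}]=\sum_{\nu,\gamma}\big(\sum_{\mu}O_{\mu\nu}O_{\mu\gamma}\big)\tr[\rho X_{\nu}X_{\gamma}]$ and likewise $\sum_{\mu}\tr[\rho Y_{\mu}]^{2}=\sum_{\nu,\gamma}\big(\sum_{\mu}O_{\mu\nu}O_{\mu\gamma}\big)\tr[\rho X_{\nu}]\tr[\rho X_{\gamma}]$. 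The orthogonality relation $\sum_{\mu}O_{\mu\nu}O_{\mu\gamma}=\delta_{\nu\gamma}$ (i.e.\ $O^{\mathrm T}O=\mathds 1$) collapses both to the corresponding sums over $\boldsymbol X$, giving $\mathcal V(\rho,\boldsymbol Y)=\mathcal V(\rho,\boldsymbol X)$.

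For item 3, interpret $X_{\mu}=\sum_{i}X_{\mu}^{(i)}$ with each $X_{\mu}^{(i)}$ supported on the $i$-th tensor factor, and take $\rho=\bigotimes_{i}\rho^{(i)}$. Then $\tr[\rho X_{\mu}]=\sum_{i}\tr[\rho^{(i)}X_{\mu}^{(i)}]$, and in $\tr[\rho X_{\mu}^{2}]$ the off-diagonal terms factorize, $\tr[\rho\,X_{\mu}^{(i)}X_{\mu}^{(j)}]=\tr[\rho^{(i)}X_{\mu}^{(i)}]\,\tr[\rho^{(j)}X_{\mu}^{(j)}]$ for $i\neq j$, so they cancel exactly against the cross terms of $\tr[\rho X_{\mu}]^{2}$; what survives is $V(\rho,X_{\mu})=\sum_{i}V(\rho^{(i)},X_{\mu}^{(i)})$, and summing over $\mu$ yields the stated additivity. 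For item 4, observe that $\rho\mapsto\tr[\rho X^{2}]$ is affine while $\rho\mapsto\tr[\rho X]^{2}$ is the square of an affine functional, hence convex; concretely, Jensen's inequality gives $\big(\sum_{i}p_{i}\tr[\rho_{i}X]\big)^{2}\le\sum_{i}p_{i}\tr[\rho_{i}X]^{2}$ since $\sum_{i}p_{i}=1$, so $V(\sum_{i}p_{i}\rho_{i},X)\ge\sum_{i}p_{i}V(\rho_{i},X)$; summing over $\mu$ proves concavity of $\mathcal V(\cdot,\boldsymbol X)$.

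None of the four steps presents a genuine obstacle; the only points demanding care are the correct use of $O^{\mathrm T}O=\mathds 1$ in item 2 and, above all, the bookkeeping of the tensor-product embedding in item 3 — making explicit that $X_{\mu}^{(i)}$ is shorthand for $\mathds 1\otimes\cdots\otimes X_{\mu}^{(i)}\otimes\cdots\otimes\mathds 1$ and that the partial traces of the unused factors are $1$ — so that the cancellation of cross terms is transparent. I would therefore present items 1, 2 and 4 tersely and spend the bulk of the write-up on item 3.
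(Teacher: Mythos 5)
Your proposal is correct and follows essentially the same route as the paper's proof: item 2 via the orthogonality relation $\sum_{\mu}O_{\mu\nu}O_{\mu\gamma}=\delta_{\nu\gamma}$, item 3 via cancellation of the factorized cross terms for product states, and item 4 via nonnegativity of variance (Jensen), with item 1 being the same trivial cancellation of the trace part that the paper leaves to the reader. No gaps; your added bookkeeping on the tensor embedding in item 3 is just a more explicit version of the paper's computation.
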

For an arbitrary measurement $\boldsymbol{X}$, one can calculate $\sum_{\mu} X_{\mu}^2$ as per \cref{eq:obser_param}:
\begin{align}
\sum_{\mu=1}^{m}X_{\mu}^2 = \frac{mh^2}{d^2}\mathds{1} + \frac{2h}{d}\sum_{\mu}\hat{n}_{\mu}\cdot\vec{\pi} + \mathcal{C}' \; .
\end{align}
Here, $\tr[X_{\mu}]=h$, $\mathcal{C}'=\sum_{\mu}(\hat{n}_{\mu}\cdot\vec{\pi})^2=\sum_{k,l}N_{kl}\pi_{k}\pi_{l}$ and $N=\sum_{\mu}\hat{n}_{\mu}\hat{n}_{\mu}^{\mathrm{T}}$. The first term is a constant matrix, the last two terms rely on the structure of the measurement convex polytope. $\mathcal{C}'$ is a homogeneous polynomial of generators $\pi_{\mu}$, which is just the quadratic Casimir operator of $\mathfrak{su}(d)$ Lie algebra. According to \cref{prop:variance_sum_prop}, for $\mathcal{V}(\rho,\boldsymbol{X})$ we only need to consider the traceless measurement $\boldsymbol{X}$, i.e. $h=0$, and
\begin{align}
\mathcal{V}(\rho,\boldsymbol{X}) = \braket{\mathcal{C}'} - |\vec{x}|^2  \; .
\label{eq:uncer_equality}
\end{align}
Hence, $\mathcal{V}(\rho,\boldsymbol{X})$ is a measurement orbit invariant. Note, \cref{eq:uncer_equality} holds for the whole measurement orbit $\mathcal{O}(\boldsymbol{X})$.

The above discussion may be summarized as the following theorem:
\begin{theorem}
Given a measurement $\boldsymbol{X}$, all measurements in the same measurement orbit satisfy a common state-independent uncertainty relation
\begin{align}
\mathcal{V}(\rho,\boldsymbol{X}) \geq \mathcal{S} \; , \; \mathcal{S}=\min_{\rho}(\braket{\mathcal{C}'} - |\vec{x}|^2) \; .
\label{eq:uncer_rela}
\end{align}
Here, $\mathcal{C}'=\sum_{k,l}N_{kl}\pi_{k}\pi_{l}$, $N=\sum_{\mu}\hat{n}_{\mu}\hat{n}_{\mu}^{\mathrm{T}}$ and $x_{\mu}=\tr[\rho X_{\mu}]$.
\end{theorem}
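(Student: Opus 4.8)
The plan is to assemble ingredients that are already in place. First I would use part~1 of \cref{prop:variance_sum_prop}: since $V(\rho,X)$ is independent of $\tr[X]$, replacing each $X_\mu$ by its traceless part $\hat{n}_\mu\cdot\vec{\pi}$ leaves $\mathcal{V}(\rho,\boldsymbol{X})$ unchanged, so it suffices to argue for a traceless measurement ($h=0$); the general statement then follows without modification.

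Second, I would evaluate $\sum_{\mu=1}^m X_\mu^2$ from the parametrization \cref{eq:obser_param} together with the product rule \cref{eq:su_lie_alge_gene}. For $h=0$ only $\mathcal{C}'=\sum_\mu(\hat{n}_\mu\cdot\vec{\pi})^2=\sum_{k,l}N_{kl}\pi_k\pi_l$ remains, with $N=\sum_\mu\hat{n}_\mu\hat{n}_\mu^{\mathrm{T}}$. Because $N$ is symmetric, its contraction with the antisymmetric structure constants $f_{kl\gamma}$ vanishes, so $\mathcal{C}'$ reduces to a constant multiple of $\mathds{1}$ plus a real combination of the $\pi_\gamma$; in particular it is Hermitian (the quadratic Casimir operator when $N\propto\mathds{1}$). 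Taking the expectation value and subtracting $\sum_\mu\braket{X_\mu}^2=|\vec{x}|^2$ gives the operator identity $\mathcal{V}(\rho,\boldsymbol{X})=\braket{\mathcal{C}'}-|\vec{x}|^2$, i.e. \cref{eq:uncer_equality}.

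Third, I would check that the right-hand side is constant along the orbit. If $\boldsymbol{Y}=O\boldsymbol{X}$ with $O\in O(m)$, then each traceless component is sent to $\hat{n}_\mu\mapsto\sum_\nu O_{\mu\nu}\hat{n}_\nu$, whence $N\mapsto\sum_{\nu,\lambda}(O^{\mathrm{T}}O)_{\nu\lambda}\hat{n}_\nu\hat{n}_\lambda^{\mathrm{T}}=N$ since $O^{\mathrm{T}}O=\mathds{1}$; thus $\mathcal{C}'$, and hence $\braket{\mathcal{C}'}$, is literally the same operator for every element of $\mathcal{O}(\boldsymbol{X})$, while $\vec{x}\mapsto O\vec{x}$ keeps $|\vec{x}|^2$ fixed. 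This is exactly part~2 of \cref{prop:variance_sum_prop} in the form I need, and it shows that $\rho\mapsto\mathcal{V}(\rho,\boldsymbol{X})$ is one and the same function of $\rho$ for the whole orbit.

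Finally I would set $\mathcal{S}:=\min_\rho(\braket{\mathcal{C}'}-|\vec{x}|^2)$. The density matrices form a compact set and $\rho\mapsto\braket{\mathcal{C}'}-|\vec{x}|^2$ is continuous (indeed concave, by part~4 of \cref{prop:variance_sum_prop}), so the minimum is attained and equals a finite number depending only on $N$, hence only on the orbit. Combining with the identity above yields $\mathcal{V}(\rho,\boldsymbol{X})=\braket{\mathcal{C}'}-|\vec{x}|^2\geq\mathcal{S}$ for all $\rho$ and, simultaneously, for all measurements in $\mathcal{O}(\boldsymbol{X})$, which is the theorem. The step I expect to be the most delicate is the explicit algebra of $\sum_\mu X_\mu^2$, keeping careful track of which structure-constant terms survive; extracting a closed form for $\mathcal{S}$, presumably by restricting to pure states via concavity and optimizing over the generalized Poincar\'e sphere, is harder still, but the theorem only asserts that $\mathcal{S}$ exists, which is immediate from compactness.
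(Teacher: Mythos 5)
Your proposal follows essentially the same route as the paper: reduce to the traceless case via part~1 of \cref{prop:variance_sum_prop}, expand $\sum_\mu X_\mu^2$ to obtain the identity $\mathcal{V}(\rho,\boldsymbol{X})=\braket{\mathcal{C}'}-|\vec{x}|^2$, invoke orbit invariance, and define $\mathcal{S}$ as the (attained) minimum over the compact state space. Your remark that the antisymmetric structure constants drop out because $N$ is symmetric, so that $\mathcal{C}'$ is the quadratic Casimir only when $N\propto\mathds{1}$, is in fact slightly more careful than the paper's phrasing, but the argument is the same.
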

Furthermore, the bound $\mathcal{S}$ can be written as $\mathcal{S}=\min_{\rho}\braket{\mathcal{C}'} - \max_{\vec{x}\in\mathcal{B}(\boldsymbol{X})}|\vec{x}|^2$. For qubit system, one can choose Pauli matrices $\vec{\sigma}$ as Lie algebra base, $\mathcal{C}' = \sum_{\mu}(\hat{n}_{\mu}\cdot\vec{\sigma})^2 = \sum_{\mu}|\hat{n}_{\mu}|^2\mathds{1}$.
\begin{corollary}
For qubit system, given that a traceless homogeneous measurement $\boldsymbol{X}$, then all the measurements lying in its measurement orbit have the common state-independent uncertainty relation
\begin{align}
\mathcal{V}(\rho,\boldsymbol{X}) \geq \mathcal{S} \; , \; \mathcal{S} = m\alpha^2 - \max_{\vec{x}\in\mathcal{B}(\boldsymbol{X})}|\vec{x}|^2 \; .
\end{align}
Here, $\alpha=|\hat{n}_{\mu}|$ and $m$ is the component number of measurement $\boldsymbol{X}$ or the vertex number of the measurement convex polytope.
\end{corollary}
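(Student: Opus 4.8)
The plan is to deduce the corollary directly from the preceding theorem by evaluating the Casimir-type operator $\mathcal{C}'$ explicitly in the qubit case. First I would fix the $\mathfrak{su}(2)$ basis to be the Pauli matrices, $\pi_{\mu}=\sigma_{\mu}$ for $\mu=1,2,3$. With the normalisation of \cref{eq:su_lie_alge_gene} this is consistent, since $\tfrac{2}{d}=1$ for $d=2$ and the completely symmetric tensor $g_{\mu\nu\gamma}$ vanishes for $\mathfrak{su}(2)$, so that $\pi_{\mu}\pi_{\nu}=\delta_{\mu\nu}\mathds{1}+i\sum_{\gamma}\epsilon_{\mu\nu\gamma}\sigma_{\gamma}$; equivalently one has the Pauli identity $(\hat{n}\cdot\vec{\sigma})^2=|\hat{n}|^2\mathds{1}$.

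Next I would compute $\mathcal{C}'=\sum_{k,l}N_{kl}\pi_{k}\pi_{l}$ with $N=\sum_{\mu}\hat{n}_{\mu}\hat{n}_{\mu}^{\mathrm{T}}$. Since $N$ is a real symmetric matrix, its contraction with the antisymmetric piece $i\epsilon_{klm}\sigma_{m}$ vanishes, leaving $\mathcal{C}'=\mathrm{Tr}(N)\,\mathds{1}=\big(\sum_{\mu}|\hat{n}_{\mu}|^2\big)\mathds{1}$; the homogeneity hypothesis $|\hat{n}_{\mu}|=\alpha$ then gives $\mathcal{C}'=m\alpha^2\mathds{1}$. (Equivalently, apply $(\hat{n}_{\mu}\cdot\vec{\sigma})^2=|\hat{n}_{\mu}|^2\mathds{1}$ termwise and sum over $\mu$.) Consequently $\braket{\mathcal{C}'}=\tr[\rho\,\mathcal{C}']=m\alpha^2$ for every state $\rho$, hence $\min_{\rho}\braket{\mathcal{C}'}=m\alpha^2$. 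Substituting this into the bound $\mathcal{S}=\min_{\rho}\braket{\mathcal{C}'}-\max_{\vec{x}\in\mathcal{B}(\boldsymbol{X})}|\vec{x}|^2$ supplied by the theorem and \cref{eq:uncer_rela} yields $\mathcal{S}=m\alpha^2-\max_{\vec{x}\in\mathcal{B}(\boldsymbol{X})}|\vec{x}|^2$; the maximum is attained because $\mathcal{B}(\boldsymbol{X})$ is bounded and closed by \cref{prop:state_rep_property}. Orbit-invariance of the whole relation is inherited from the theorem together with item 2 of \cref{prop:variance_sum_prop}, and recalling that $m$ is the number of components of $\boldsymbol{X}$, i.e.\ the vertex number of the measurement convex polytope, completes the identification.

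The computation is essentially routine; the only point requiring care is the justification that the symmetric structure constants vanish in $d=2$ (equivalently that $(\hat{n}\cdot\vec{\sigma})^2=|\hat{n}|^2\mathds{1}$), since this is precisely what collapses the quadratic Casimir operator to a multiple of the identity and thereby renders $\braket{\mathcal{C}'}$ state-independent — the feature that singles out the qubit case from the general qudit bound of the theorem.
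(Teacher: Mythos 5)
Your proposal is correct and follows essentially the same route as the paper: the paper also chooses the Pauli basis and uses $(\hat{n}_{\mu}\cdot\vec{\sigma})^2=|\hat{n}_{\mu}|^2\mathds{1}$ to reduce $\mathcal{C}'$ to $m\alpha^2\mathds{1}$, so that $\braket{\mathcal{C}'}$ is state-independent and the bound of the theorem collapses to $\mathcal{S}=m\alpha^2-\max_{\vec{x}\in\mathcal{B}(\boldsymbol{X})}|\vec{x}|^2$. Your extra care in noting that the symmetric structure constants vanish (equivalently, that the antisymmetric part of $\pi_k\pi_l$ drops out against the symmetric matrix $N$) and that the separation $\min_{\rho}(\braket{\mathcal{C}'}-|\vec{x}|^2)=m\alpha^2-\max|\vec{x}|^2$ is legitimate precisely because $\braket{\mathcal{C}'}$ is constant only makes explicit what the paper leaves implicit.
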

Because of $x_{\mu}=\tr[\rho X_{\mu}]=\hat{n}_{\mu}\cdot\hat{r}$, then we have
\begin{align}
\max_{\vec{x}\in\mathcal{B}(\boldsymbol{X})}|\vec{x}|^2 = \max_{\hat{r}}\sum_{\mu}(\hat{n}_{\mu}\cdot\hat{r})^2 \; .
\label{eq:qubit_uncer_rela_bound}
\end{align}
Here, $\hat{r}$ is Bloch vector. Because the Bloch space of qubit system is a $3$-dimensional unit sphere, it is ready to calculate \cref{eq:qubit_uncer_rela_bound}. Consider a dichotomy measurement $\Big(\hat{n}_{1},\hat{n}_{2}\Big)$, which has a angle $\theta$ between its components, we have
\begin{align}
\mathcal{S} = 2 - \max_{\hat{r}}((\hat{n}_{1}\cdot\hat{r})^2+(\hat{n}_{2}\cdot\hat{r})^2) \;
\end{align}
with $|\hat{n}_{1}|=|\hat{n}_{2}|=\alpha=1$. Without loss of generality, one can fix measurement in $x-y$ plane and take $\hat{n}_{1}=(1,0,0)^{\mathrm{T}},\hat{n}_{2}=(\cos\theta,\sin\theta,0)^{\mathrm{T}}$. It is enough to only consider Bloch vector in $x-y$ plane and one can take $\hat{r}=(\cos\delta,\sin\delta,0)^{\mathrm{T}}$, then
\begin{align}
\mathcal{S} &= 2 - \max_{\delta} (\cos^2\delta+\cos^2(\delta-\theta))  \\
&= 1 - |\cos\theta| \; .
\end{align}
It can be proved that the extreme value can be achieved in $\delta=\frac{\theta}{2} + n\pi,\theta\in[0,\pi/2]$ and $\delta=\frac{\pi-\theta}{2}+n\pi,\delta\in[\pi/2,\pi]$ respectively. Thus, for qubit system, a dichotomy measurement with a angle $\theta$ between its components satisfies the state-independent uncertainty relation
\begin{align}
\mathcal{V}(\rho,\boldsymbol{X}) \geq 1 - |\cos\theta| \; ,
\label{eq:ex_sum_var_relation}
\end{align}
which reproduces the result of Ref. \cite{busch14-uncer}. It is worth noting that the variance summation uncertainty relation has a well-defined state-independent bound $\mathcal{S}$, which still holds in mixed form \cite{Li-Qiao-2015,abbott16}. However, we can demonstrate there is no state-independent bound in variance product situation. Taking qubit system as an example, according to Robertson uncertainty relation \cite{robertson29}, a dichotomy measurement with angle $\theta$ between its components satisfies uncertainty relation
\begin{align}
V(\rho,X_{1})V(\rho,X_{2}) \geq \left|\frac{1}{2}\braket{\left[X_1,X_2\right]}\right|^2 \; .
\label{eq:robertson_uncer_relation}
\end{align}
Under the Bloch representation, the commutator between two observables can be expressed as
\begin{align}
[X_{1},X_{2}] = 2i(\hat{n}_{1}\times\hat{n}_{2})\cdot\vec{\sigma} \; .
\end{align}
Here, we make use of the commutation relation $[\sigma_{\mu},\sigma_{\nu}]=2i\sum_{\gamma}\epsilon_{\mu\nu\gamma}\sigma_{\gamma}$ and $\epsilon_{\mu\nu\gamma}$ signifies the $3$-order completely antisymmetric tensor. Thus, we have
\begin{align}
\left|\braket{[X_{1},X_{2}]}\right| = 4|(\hat{n}_{1}\times\hat{n}_{2})\cdot\hat{r}| \; .
\end{align}
Uncertainty relation \cref{eq:robertson_uncer_relation} can be reformulated as Bloch form
\begin{align}
V(\rho,X_{1})V(\rho,X_{2}) \geq 4|(\hat{n}_{1}\times\hat{n}_{2})\cdot\hat{r}|^2 \; .
\end{align}
Obviously if Bloch vector $\hat{r}$ lies in the plane of $\hat{n}_{1},\hat{n}_{2}$, the right hand side is identically zero, and the left-hand side hence has no nontrivial state-independent bound.

Furthermore, considering anticommutator relation $\{\sigma_{\mu},\sigma_{\nu}\}=2\delta_{\mu\nu}\mathds{1}$ and hence $\braket{\{X_{1},X_{2}\}}=2\hat{n}_{1}\cdot\hat{n}_{2}$, the Schrödinger type uncertainty relation \cite{schrodinger30}
\begin{align}
V(\rho,X_{1})V(\rho,X_{2}) \geq \left|\frac{1}{2}\braket{[X_{1},X_{2}]}\right|^2 + \left|\frac{1}{2}\braket{\{X_{1},X_{2}\}} - \braket{X_{1}}\braket{X_{2}}\right|^2 \;
\end{align}
can then be expressed as
\begin{align}
V(\rho,X_{1})V(\rho,X_{2}) \geq 4|(\hat{n}_{1}\times\hat{n}_{2})\cdot\hat{r}|^2 + \left|\hat{n}_{1}\cdot\hat{n}_{2} - (\hat{n}_{1}\cdot\hat{r})(\hat{n}_{2}\cdot\hat{r})\right|^2 \; .
\end{align}
If Bloch vector $\hat{r}$ is perpendicular to the plane of measurement $\hat{n}_{1},\hat{n}_{2}$, then
\begin{align}
V(\rho,X_{1})V(\rho,X_{2}) \geq 1 + 3\sin^2\theta \; .
\end{align}
And if Bloch vector $\hat{r}$ is in the plane of the measurement, without loss of generality, one can fix measurement in $x-y$ plane and take $\hat{n}_{1}=(1,0,0)^{\mathrm{T}},\hat{n}_{2}=(\cos\theta,\sin\theta,0)^{\mathrm{T}}$ and Bloch vector $\hat{r}=(\cos\delta,\sin\delta,0)^{\mathrm{T}}$, then
\begin{align}
V(\rho,X_{1})V(\rho,X_{2}) \geq \left|\cos\theta-\cos\delta\cos(\delta-\theta)\right|^2
\end{align}
If $\delta=\theta$, the right hand side still vanishes identically. It is concluded that the variance product uncertainty relation has no nontrivial state-independent bound.

Generally, for a qudit system, it is not readily to obtain $\mathcal{S}$. $\mathcal{C}'$ depends on structure of the measurement convex polytope. The highly symmetrical measurement will definitely simplify calculations. For the simplest case, the measurement convex polytope just is a regular $(d^2-1)$-simplex. Thus, we define the following measurement.
\begin{definition}
For a $d$-dimensional quantum system, there always exist the following symmetrical complete measurement (SCM)
\begin{align}
\{X_{\mu}=\vec{n}_{\mu}\cdot\boldsymbol{\Pi}|\cos\braket{\hat{n}_{\mu},\hat{n}_{\nu}}=\frac{-1}{d^2-1},\mu\neq\nu\}_{\mu=1}^{d^2} \; ,
\end{align}
where $\tr[X_{\mu}]=h$, $|\hat{n}_{\mu}|=\alpha$ and $\{\hat{n}_{\mu}\}$ constitutes a regular $(d^2-1)$-simplex geometrically.
\end{definition}
Obviously, the symmetric informationally complete positive operator-valued measures (SIC-POVM) \cite{renes04,graydon16} is a special case of SCM. Based on above definition, one can conclude the following properties of SCM (see \cref{appen:scm_prope} for proof).
\begin{proposition} \label{prop:scm_prope}
Elements of SCM satisfy the following properties
\begin{align}
\sum_{\mu=1}^{d^2}X_{\mu} = dh\mathds{1}& \; , \; \sum_{\mu=1}^{d^2}X_{\mu}^2 = (h^2 + 2d\alpha^2)\mathds{1} \; , \\
\tr[X_{\mu}X_{\nu}] =& \frac{h^2}{d} + \frac{2\alpha^2(d^2\delta_{\mu\nu}-1)}{d^2-1} \; .
\end{align}
And if $\alpha^2=\frac{d-1}{2d^3},h=\frac{1}{d}$ and $X_{\mu}\geq 0$, SCM will be an SIC-POVM.
\end{proposition}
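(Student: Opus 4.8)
The plan is to reduce everything to two geometric facts about a regular $(d^2-1)$-simplex and one algebraic fact about the quadratic Casimir. First I would fix the parametrisation: by \cref{eq:obser_param} together with $\tr[X_\mu]=\sqrt{2d}\,n_{\mu0}=h$ one has $X_\mu=\tfrac{h}{d}\mathds{1}+\hat{n}_\mu\cdot\vec{\pi}$, so all the content sits in the configuration $\{\hat{n}_\mu\}_{\mu=1}^{d^2}\subset\mathbb{R}^{d^2-1}$ with $|\hat{n}_\mu|=\alpha$ and the single inner-product formula $\hat{n}_\mu\cdot\hat{n}_\nu=\alpha^2\frac{d^2\delta_{\mu\nu}-1}{d^2-1}$ that unifies the cases $\mu=\nu$ and $\mu\neq\nu$. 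From it I would extract, by squaring the sum, $\big|\sum_\mu\hat{n}_\mu\big|^2=d^2\alpha^2+d^2(d^2-1)\cdot\frac{-\alpha^2}{d^2-1}=0$, hence $\sum_\mu\hat{n}_\mu=0$; and for the second moment $N:=\sum_\mu\hat{n}_\mu\hat{n}_\mu^{\mathrm{T}}=\frac{d^2\alpha^2}{d^2-1}\mathds{1}_{d^2-1}$. The latter holds because $N$ is invariant under the vertex permutation group $S_{d^2}$, which acts on $\mathbb{R}^{d^2-1}$ by its (absolutely irreducible) standard representation, so Schur's lemma forces $N\propto\mathds{1}$, and $\tr N=d^2\alpha^2$ fixes the constant; alternatively one uses the explicit model $\hat{n}_\mu\propto e_\mu-\tfrac{1}{d^2}\sum_i e_i$, for which $\sum_\mu\hat{n}_\mu\hat{n}_\mu^{\mathrm{T}}$ is manifestly proportional to the projector onto the simplex hyperplane.

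With these in hand the first two identities are immediate. Summing $X_\mu=\tfrac{h}{d}\mathds{1}+\hat{n}_\mu\cdot\vec{\pi}$ gives $\sum_\mu X_\mu=\tfrac{h}{d}\,d^2\mathds{1}+\big(\sum_\mu\hat{n}_\mu\big)\cdot\vec{\pi}=dh\mathds{1}$. For the Gram matrix I expand $X_\mu X_\nu=\tfrac{h^2}{d^2}\mathds{1}+\tfrac{h}{d}(\hat{n}_\mu+\hat{n}_\nu)\cdot\vec{\pi}+(\hat{n}_\mu\cdot\vec{\pi})(\hat{n}_\nu\cdot\vec{\pi})$, take the trace using $\tr[\pi_k]=0$ and $\tr[\pi_k\pi_l]=2\delta_{kl}$ (which follows from $\tr[\Pi_\mu\Pi_\nu]=2\delta_{\mu\nu}$), obtaining $\tr[X_\mu X_\nu]=\tfrac{h^2}{d}+2\,\hat{n}_\mu\cdot\hat{n}_\nu$, and then substitute the unified inner-product formula to reach $\tfrac{h^2}{d}+\frac{2\alpha^2(d^2\delta_{\mu\nu}-1)}{d^2-1}$.

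For $\sum_\mu X_\mu^2$ I square and sum: $\sum_\mu X_\mu^2=h^2\mathds{1}+\tfrac{2h}{d}\big(\sum_\mu\hat{n}_\mu\big)\cdot\vec{\pi}+\sum_\mu(\hat{n}_\mu\cdot\vec{\pi})^2=h^2\mathds{1}+\mathcal{C}'$, with $\mathcal{C}'=\sum_{k,l}N_{kl}\pi_k\pi_l$. Using $N=\frac{d^2\alpha^2}{d^2-1}\mathds{1}$ this collapses to $\mathcal{C}'=\frac{d^2\alpha^2}{d^2-1}\sum_k\pi_k^2$, and the remaining sum is the quadratic Casimir of $\mathfrak{su}(d)$: by the complete antisymmetry of $f_{\mu\nu\gamma}$ in \cref{eq:su_lie_alge_gene} it commutes with every $\pi_j$, hence is a scalar in the defining representation, and taking the trace gives $\sum_k\pi_k^2=\tfrac{2(d^2-1)}{d}\mathds{1}$. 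Therefore $\mathcal{C}'=2d\alpha^2\mathds{1}$ and $\sum_\mu X_\mu^2=(h^2+2d\alpha^2)\mathds{1}$.

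Finally, for the SIC-POVM statement I set $h=\tfrac{1}{d}$: the first identity gives $\sum_\mu X_\mu=\mathds{1}$, so together with $X_\mu\geq0$ this is already a POVM. From the Gram identity $\tr[X_\mu]=\tfrac{1}{d}$ and, with $\alpha^2=\tfrac{d-1}{2d^3}$, $\tr[X_\mu^2]=\tfrac{h^2}{d}+2\alpha^2=\tfrac{1}{d^2}$; since $dX_\mu\geq0$ then has unit trace and unit Hilbert--Schmidt norm, its eigenvalues $p_i\ge0$ satisfy $\sum_i p_i=\sum_i p_i^2=1$, forcing each $p_i\in\{0,1\}$ and hence $dX_\mu$ to be a rank-one projector $\Pi_\mu$. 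The off-diagonal Gram value then reads $\tr[\Pi_\mu\Pi_\nu]=d^2\tr[X_\mu X_\nu]=d^2\big(\tfrac{1}{d^3}-\tfrac{2}{d^2-1}\cdot\tfrac{d-1}{2d^3}\big)=\tfrac{1}{d+1}$ for $\mu\neq\nu$, which is precisely the defining symmetry of a SIC-POVM. I expect the only genuine obstacle to be the step from ``$\{\hat{n}_\mu\}$ is a regular simplex'' to $\sum_\mu\hat{n}_\mu\hat{n}_\mu^{\mathrm{T}}\propto\mathds{1}$, and the identification of $\sum_k\pi_k^2$ with the Casimir; once these are granted, the rest is bookkeeping with the trace relations and \cref{eq:su_lie_alge_gene}.
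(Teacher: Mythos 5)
Your proof is correct, and its computational skeleton (Bloch parametrization $X_\mu=\tfrac{h}{d}\mathds{1}+\hat{n}_\mu\cdot\vec{\pi}$, reduction to $\sum_\mu\hat{n}_\mu=0$, $N=\sum_\mu\hat{n}_\mu\hat{n}_\mu^{\mathrm{T}}=\tfrac{d^2\alpha^2}{d^2-1}\mathds{1}$, and the Casimir value $\sum_k\pi_k^2=\tfrac{2(d^2-1)}{d}\mathds{1}$) coincides with the paper's. The differences are in how the two key facts are justified and in the final step. The paper obtains $\sum_\mu\hat{n}_\mu=0$ and the frame identity for $N$ from \cref{lem:simplex_orth}, i.e.\ by augmenting the unit simplex vectors with a constant row to form an orthogonal matrix $A$ and reading off $AA^{\mathrm{T}}=\mathds{1}$; you instead derive $\sum_\mu\hat{n}_\mu=0$ directly by squaring the sum with the unified inner-product formula, and get $N\propto\mathds{1}$ from Schur's lemma applied to the vertex-permutation symmetry (or from the explicit simplex model and its rank count) — equivalent tight-frame arguments, and yours is arguably more self-contained since the paper's lemma itself is only sketched in one direction. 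You also justify the Casimir identity (commutation via antisymmetry of $f_{\mu\nu\gamma}$ plus irreducibility, then a trace), which the paper merely asserts. Finally, for the SIC-POVM claim the paper just substitutes the parameter values and declares the result an SIC-POVM, leaving the rank-one property implicit; your purity argument ($\tr[dX_\mu]=\tr[(dX_\mu)^2]=1$ together with $X_\mu\geq0$ forces $dX_\mu$ to be a rank-one projector, then $\tr[\Pi_\mu\Pi_\nu]=\tfrac{1}{d+1}$ for $\mu\neq\nu$) closes that gap and is a genuine improvement; note in passing that the appendix's displayed value $\tr[X_\mu X_\nu]=\tfrac{d\delta_{\mu\nu}+1}{d(d+1)}$ has a typographical denominator ($d^2(d+1)$ is correct), consistent with your numbers.
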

Due to the symmetry of SCM, we can reconstruct quantum state by SCM and calculate $|\vec{x}|^2$ analytically, which are given in the following theorem (see \cref{appen:state_recon_scm} for proof).
\begin{theorem} \label{th:state_recon_scm}
Quantum state $\rho$ can be reconstructed as by an SCM
\begin{align}
\rho = \frac{d^2-1}{2d^2\alpha^2}\sum_{\mu}x_{\mu}X_{\mu} + \left(\frac{1}{d} - \frac{h^2(d^2-1)}{2d^2\alpha^2}\right)\mathds{1} \; ,
\end{align}
where $x_{\mu}=\tr[\rho X_{\mu}]$ and $\alpha,h>0$. And the bound of $\mathcal{B}(\boldsymbol{X})$ is related to the purity $\tr[\rho^2]$ of quantum state
\begin{align}
|\vec{x}|^2 = \frac{2d\alpha^2\left(d\tr[\rho^2]-1\right)}{d^2-1} + h^2 \; .
\end{align}
Here, $\alpha>0,h\geq 0$.
\end{theorem}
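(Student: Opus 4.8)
The plan is to build an explicit dual frame $\{F_\mu\}$ for the SCM, from which both statements drop out. First I would translate \cref{prop:scm_prope} into geometric data about the simplex directions. Writing $X_\mu=\tfrac{h}{d}\mathds{1}+\hat{n}_\mu\cdot\vec{\pi}$ and using $\tr[\pi_k\pi_l]=2\delta_{kl}$, the Gram relation of \cref{prop:scm_prope} gives $\hat{n}_\mu\cdot\hat{n}_\nu=\frac{\alpha^{2}(d^{2}\delta_{\mu\nu}-1)}{d^{2}-1}$. The $d^{2}\times d^{2}$ Gram matrix of the $\hat{n}_\mu$ therefore equals $\frac{\alpha^{2}d^{2}}{d^{2}-1}\mathds{1}-\frac{\alpha^{2}}{d^{2}-1}J$ (with $J$ the all-ones matrix), which annihilates the all-ones vector; hence $\sum_\mu\hat{n}_\mu=0$, and, since $M^{\mathrm{T}}M$ and $MM^{\mathrm{T}}$ share nonzero spectra for the $(d^{2}-1)\times d^{2}$ matrix $M=(\hat{n}_1,\dots,\hat{n}_{d^{2}})$, one gets the isotropy (tight-frame) identity $N:=\sum_\mu\hat{n}_\mu\hat{n}_\mu^{\mathrm{T}}=\frac{d^{2}\alpha^{2}}{d^{2}-1}\mathds{1}_{d^{2}-1}$.

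Next I would posit $F_\mu=aX_\mu+c\mathds{1}$ and fix the two constants by requiring $\sum_\mu\tr[AX_\mu]\,F_\mu=A$ on the two orthogonal pieces $A=\mathds{1}$ and $A$ traceless. For traceless $A=\hat{m}\cdot\vec{\pi}$ one has $\tr[AX_\mu]=2\,\hat{m}\cdot\hat{n}_\mu$, and using $\sum_\mu\hat{n}_\mu=0$ together with the isotropy identity collapses the sum to $2a\frac{d^{2}\alpha^{2}}{d^{2}-1}A$, forcing $a=\frac{d^{2}-1}{2d^{2}\alpha^{2}}$; the case $A=\mathds{1}$, via $\sum_\mu X_\mu=dh\mathds{1}$, then pins down $c$. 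Consequently $\rho=\sum_\mu x_\mu F_\mu=a\sum_\mu x_\mu X_\mu+(cdh)\,\mathds{1}$ with $cdh=\frac1d-ah^{2}=\frac1d-\frac{h^{2}(d^{2}-1)}{2d^{2}\alpha^{2}}$, which is exactly the claimed reconstruction. The hypothesis $h>0$ is used precisely here: it makes $\mathds{1}=\frac{1}{dh}\sum_\mu X_\mu$ lie in $\operatorname{span}\{X_\mu\}$, so that the $X_\mu$ span all Hermitian operators and the map $\rho\mapsto\vec{x}$ is injective, ensuring the reconstructed operator really is $\rho$.

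For the purity identity I would contract $\rho$ with its own expansion: $\tr[\rho^{2}]=\sum_\mu x_\mu\tr[\rho F_\mu]=\sum_\mu x_\mu\big(a x_\mu+c\tr[\rho]\big)=a|\vec{x}|^{2}+cdh$, and solving for $|\vec{x}|^{2}$ yields $|\vec{x}|^{2}=\frac1a\big(\tr[\rho^{2}]-\tfrac1d\big)+h^{2}=\frac{2d\alpha^{2}(d\tr[\rho^{2}]-1)}{d^{2}-1}+h^{2}$. An alternative that also covers $h=0$ is to substitute the Bloch form $\rho=\frac1d\mathds{1}+\frac12\hat{r}\cdot\vec{\pi}$ directly into $x_\mu=\frac hd+\hat{n}_\mu\cdot\hat{r}$, giving $|\vec{x}|^{2}=h^{2}+\hat{r}^{\mathrm{T}}N\hat{r}=h^{2}+\frac{d^{2}\alpha^{2}}{d^{2}-1}|\hat{r}|^{2}$, and then using $|\hat{r}|^{2}=2(\tr[\rho^{2}]-\tfrac1d)$.

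I expect the only genuine obstacle to be the first step — extracting the tight-frame/isotropy statement $N\propto\mathds{1}_{d^{2}-1}$ and $\sum_\mu\hat{n}_\mu=0$ from \cref{prop:scm_prope} — after which determining $a$ and $c$ is a couple of lines of linear algebra and the purity formula is immediate. A secondary subtlety that should be argued carefully rather than skipped is the injectivity of $\rho\mapsto\vec{x}$, i.e. the exact role of $h>0$, since without it the reconstruction would only recover $\rho$ modulo the orthogonal complement of $\operatorname{span}\{X_\mu\}$.
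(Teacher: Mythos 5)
Your proposal is correct, and it proves the theorem by a genuinely different route than the paper. The paper's proof expands $\rho=\sum_\mu\omega_\mu X_\mu$, writes the SCM Gram matrix as $\Omega=\frac{2\alpha^2d^2}{d^2-1}\mathds{1}+\bigl(\frac{h^2}{d}-\frac{2\alpha^2}{d^2-1}\bigr)J_{d^2}$, explicitly inverts matrices of the form $\xi\mathds{1}+\eta J$ (and computes the Moore--Penrose inverse for the singular $h=0$ case, adjoining $\mathds{1}$ to the measurement), and then obtains the purity identity by substituting $\Omega^{-}$ into the previously proved relation $\tr[\rho^2]=\vec{x}^{\mathrm{T}}\Omega^{-}\vec{x}$ of \cref{eq:purity_equality}. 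You instead build a dual frame $F_\mu=aX_\mu+c\mathds{1}$ in the Bloch picture, fixing $a,c$ by testing the frame identity on $\mathds{1}$ and on traceless operators; the inputs $\sum_\mu\hat{n}_\mu=0$ and $N=\frac{d^2\alpha^2}{d^2-1}\mathds{1}$ are exactly what the paper establishes in \cref{appen:scm_prope} (via \cref{lem:simplex_orth}), and your derivation of them from the Gram spectrum of $M^{\mathrm{T}}M$ versus $MM^{\mathrm{T}}$ is a clean alternative to that lemma. Your route buys the avoidance of any matrix inversion and of the separate $h=0$ analysis for the purity formula (your Bloch computation $x_\mu=\frac{h}{d}+\hat{n}_\mu\cdot\hat{r}$ covers $h\geq0$ uniformly), and it makes the role of $h>0$ transparent; the paper's route buys generality, since the $\Omega^{-}$ machinery of \cref{eq:purity_equality} applies to arbitrary, not necessarily symmetric, measurements. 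One small expository point: the "injectivity of $\rho\mapsto\vec{x}$" is not really what you use — once the frame identity $\sum_\mu\tr[AX_\mu]F_\mu=A$ holds for all Hermitian $A$ (which needs $\mathds{1}\in\operatorname{span}\{X_\mu\}$, i.e. $h>0$), applying it to $A=\rho$ already yields the reconstruction, with no injectivity argument required.
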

Simultaneously, we reach the following uncertainty relation equality for SCM.
\begin{corollary}
SCM measurement satisfies uncertainty relation equality
\begin{align}
\mathcal{V}(\rho,\boldsymbol{X}) = \frac{2d^2\left(d-\tr[\rho^2]\right)}{d^2-1}\alpha^2 \; .
\label{eq:scm_uncer_rela}
\end{align}
Here, $d$ denotes the dimension of Hilbert space.
\end{corollary}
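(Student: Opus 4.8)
\noindent\emph{Proof proposal.} The plan is to reduce the equality to two facts already established for an SCM: the operator identity $\sum_{\mu=1}^{d^{2}}X_{\mu}^{2}=(h^{2}+2d\alpha^{2})\mathds{1}$ of \cref{prop:scm_prope}, and the closed form $|\vec{x}|^{2}=\frac{2d\alpha^{2}\left(d\tr[\rho^{2}]-1\right)}{d^{2}-1}+h^{2}$ of \cref{th:state_recon_scm}. With these in hand the corollary is a one-line computation, so essentially all the work is front-loaded into those two results.

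First I would expand the definition, using $V(\rho,X)=\tr[\rho X^{2}]-\tr[\rho X]^{2}$, to get
\begin{align}
\mathcal{V}(\rho,\boldsymbol{X})=\sum_{\mu}V(\rho,X_{\mu})=\tr\!\Big[\rho\sum_{\mu}X_{\mu}^{2}\Big]-|\vec{x}|^{2}\; ;
\end{align}
for traceless $\boldsymbol{X}$ this is exactly \cref{eq:uncer_equality}. The first term collapses immediately via the operator identity and $\tr\rho=1$, giving $h^{2}+2d\alpha^{2}$. Substituting the closed form for $|\vec{x}|^{2}$, the $h^{2}$ contributions cancel, and collecting what remains over the common denominator $d^{2}-1$ yields
\begin{align}
\mathcal{V}(\rho,\boldsymbol{X})=2d\alpha^{2}-\frac{2d\alpha^{2}\left(d\tr[\rho^{2}]-1\right)}{d^{2}-1}=\frac{2d\alpha^{2}\left(d^{2}-d\tr[\rho^{2}]\right)}{d^{2}-1}=\frac{2d^{2}\alpha^{2}\left(d-\tr[\rho^{2}]\right)}{d^{2}-1}\; ,
\end{align}
which is the claimed equality.

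I do not expect a real obstacle; the only point worth a moment's care is that the trace parameter $h$ genuinely drops out, which is consistent with part 1 of \cref{prop:variance_sum_prop} and also visible in the explicit cancellation above. As consistency checks I would evaluate the extremes: a pure state, $\tr[\rho^{2}]=1$, gives $\mathcal{V}(\rho,\boldsymbol{X})=\frac{2d^{2}\alpha^{2}}{d+1}$, which is the minimum of $\mathcal{V}$ over all $\rho$ (it decreases in $\tr[\rho^{2}]$, consistent with concavity in $\rho$, part 4 of \cref{prop:variance_sum_prop}) and so reproduces the bound $\mathcal{S}$ of \cref{eq:uncer_rela}; the maximally mixed state, $\tr[\rho^{2}]=1/d$, gives $\mathcal{V}(\rho,\boldsymbol{X})=2d\alpha^{2}$, the largest value, as expected physically.
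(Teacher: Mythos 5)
Your proposal is correct and follows essentially the same route as the paper, which presents the corollary as an immediate consequence of the SCM identity $\sum_{\mu}X_{\mu}^{2}=(h^{2}+2d\alpha^{2})\mathds{1}$ from \cref{prop:scm_prope} together with the $|\vec{x}|^{2}$ formula of \cref{th:state_recon_scm}; the cancellation of $h^{2}$ and the algebra are exactly as you describe.
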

The parameter $\alpha$ is not essential and we can take $\alpha=1$. \cref{eq:scm_uncer_rela} stems from the structure of SCM. Considering that the purity $1/d\leq\tr[\rho^2]\leq 1$, the state-independent uncertainty relation follows
\begin{align}
\frac{2d^2}{d+1} \leq \mathcal{V}(\rho,\boldsymbol{X}) \leq 2d \; .
\end{align}

\section{Quantum entanglement and steering detection}\label{sec:ent_det}

\subsection{Entanglement detection}

\noindent
Next, we discuss the entanglement detection in bipartite quantum system equipped with those concepts and tools developed in above sections. A bipartite state $\rho$, lies in Hilbert space $H_{A}\otimes H_{B}$, is entangled if it cannot be decomposed into \cite{werner89}
\begin{align}
\rho = \sum_{k}p_{k}\ket{\psi_{k}}\bra{\psi_{k}}\otimes \ket{\phi_{k}}\bra{\phi_{k}} \; .
\label{eq:sep_den_mat}
\end{align}
Here, $\ket{\psi_{k}}\in H_{A}$, $\ket{\phi_{k}}\in H_{B}$ and $\sum_{k}p_{k}=1,p_{k}>0$. Consider arbitrary measurements $\boldsymbol{X}^{A}$ and $\boldsymbol{X}^{B}$ on systems A and B respectively, one may define the following correlation matrices
\begin{align}
&\mathcal{C}_{\mu\nu} := \braket{X_{\mu}^{A}\otimes X_{\nu}^{B}} \; , \\
&\gamma_{\mu\nu} := \braket{X_{\mu}^{A}}\braket{X_{\nu}^{B}} - \mathcal{C}_{\mu\nu} \; .
\end{align}
Here, $\braket{X_{\mu}^{A}}=\tr[\rho_{A}X_{\mu}^{A}]$, and the reduced density matrix $\rho_{A}=\tr_{B}[\rho]$, similarly for $B$. In general, one can expand a bipartite density matrix by $\Pi_{\mu}\otimes\Pi_{\nu}$, i.e.,
\begin{align}
\rho=\frac{1}{4}\sum_{\mu\nu}\chi_{\mu\nu}\Pi_{\mu}\otimes\Pi_{\nu} \; .
\label{eq:bipartite_state}
\end{align}
Here, the coefficient matrix $\chi_{\mu\nu}=\tr[\rho(\Pi_{\mu}\otimes\Pi_{\nu})]$ with $\chi_{00}=\frac{2}{d}$ due to $\tr[\rho]=1$, which characterizes the bipartite quantum state. $\mathcal{C}$ and $\gamma$ imply the correlation information of the quantum state corresponding to the implemented measurements. It is found that $\mathcal{C},\gamma$ and $\chi$ satisfy the following relationship (see \cref{appen:corre_matrix} for proof)
\begin{align}
&\mathcal{C} = M_{A}^\mathrm{T}\chi M_{B} \; , \label{eq:corre_matrix_c} \\
&\gamma = M_{A}^\mathrm{T}(\chi'-\chi)M_{B} \; .
\label{eq:corre_matrix_gamma}
\end{align}
Here, $M_{A}=(\vec{n}_{1}^{A},\vec{n}_{2}^{A},\cdots,\vec{n}_{m}^{A})$, $M_{B}=(\vec{n}_{1}^{B},\vec{n}_{2}^{B},\cdots,\vec{n}_{m}^{B})$ correspond to the implemented measurements in the two subsystems respectively and $\chi'_{\mu\nu}=\tr[(\rho_{A}\otimes\rho_{B})(\Pi_{\mu}\otimes\Pi_{\nu})]=\frac{d}{2}\chi_{\mu 0}\chi_{0\nu}$. This relationship can be generalized to $N$-partite system
\begin{align}
\mathcal{C}_{\mu_{1}\mu_{2}\cdots\mu_{N}} = \sum_{\nu_{1}\nu_{2}\cdots\nu_{N}}n_{\mu_{1}\nu_{1}}^{(1)}n_{\mu_{2}\nu_{2}}^{(2)}\cdots n_{\mu_{N}\nu_{N}}^{(N)}\chi_{\nu_{1}\nu_{2}\cdots\nu_{N}} \; ,
\end{align}
where $\mathcal{C}$ and $\chi$ are $N$ order tensor and $\vec{n}_{\mu_{i}}^{(i)}$ are measurements corresponding to i'th subsystem and similarly for $\gamma$. Furthermore, it is found that the trace norm of the correlation matrices $\mathcal{C},\gamma$ are invariant in the measurement orbit (seeing \cref{appen:measu_orbit_invar} for proof).
\begin{observation}\label{ob:measu_orbit_invar}
$\|\mathcal{C}\|_{\mathrm{tr}}$ and $\|\gamma\|_{\mathrm{tr}}$ are measurement orbit invariants, where $\|X\|_{\mathrm{tr}}$ denotes trace norm i.e., the sum of singular values of matrix $X$.
\end{observation}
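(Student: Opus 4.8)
The plan is to show that moving within a measurement orbit acts on the correlation matrices by one-sided orthogonal multiplications, and then to invoke the unitary invariance of the trace norm. First I would record how the data entering \cref{eq:corre_matrix_c,eq:corre_matrix_gamma} transform under the orbit action. Replacing $\boldsymbol{X}^{A}$ by an orbit-equivalent measurement $\boldsymbol{Y}^{A}=O_{A}\boldsymbol{X}^{A}$ with $O_{A}\in O(m)$ means $Y^{A}_{\mu}=\sum_{\nu}(O_{A})_{\mu\nu}X^{A}_{\nu}$; since the operator-to-vector correspondence of \cref{eq:obser_param} is linear, the vector representing $Y^{A}_{\mu}$ is $\sum_{\nu}(O_{A})_{\mu\nu}\vec{n}^{A}_{\nu}$, so the matrix $M_{A}=(\vec{n}^{A}_{1},\dots,\vec{n}^{A}_{m})$ whose columns are these vectors transforms as $M_{A}\mapsto M_{A}O_{A}^{\mathrm{T}}$, and independently $M_{B}\mapsto M_{B}O_{B}^{\mathrm{T}}$ on the other side. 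The coefficient matrices $\chi$ and $\chi'$ appearing in \cref{eq:bipartite_state} and below it depend only on $\rho$ (respectively on $\rho_{A},\rho_{B}$) and the fixed algebra basis $\{\Pi_{\mu}\otimes\Pi_{\nu}\}$, hence are untouched by the orbit action.

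Substituting into \cref{eq:corre_matrix_c,eq:corre_matrix_gamma} then gives
\begin{align}
\mathcal{C}=M_{A}^{\mathrm{T}}\chi M_{B} \;\longmapsto\; (M_{A}O_{A}^{\mathrm{T}})^{\mathrm{T}}\chi\,(M_{B}O_{B}^{\mathrm{T}})=O_{A}\,\mathcal{C}\,O_{B}^{\mathrm{T}}\; ,
\end{align}
and in exactly the same way $\gamma\mapsto O_{A}\,\gamma\,O_{B}^{\mathrm{T}}$. Thus, as one ranges over the orbits $\mathcal{O}(\boldsymbol{X}^{A})$ and $\mathcal{O}(\boldsymbol{X}^{B})$ of the two subsystems, the correlation matrices are only modified by left and right multiplication by orthogonal matrices.

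It remains to note that the trace norm is invariant under $A\mapsto UAV^{\mathrm{T}}$ for orthogonal $U,V$: the singular values of $O_{A}\mathcal{C}O_{B}^{\mathrm{T}}$ are the square roots of the eigenvalues of $O_{B}(\mathcal{C}^{\mathrm{T}}\mathcal{C})O_{B}^{\mathrm{T}}$, which coincide with the eigenvalues of $\mathcal{C}^{\mathrm{T}}\mathcal{C}$; hence $\|\mathcal{C}\|_{\mathrm{tr}}$ and $\|\gamma\|_{\mathrm{tr}}$ are unchanged, which is the assertion of the observation.

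There is no genuine obstacle here; the only point deserving a moment of care is the bookkeeping of the left-versus-right action — whether $O_{A}$ hits $M_{A}$ from the left or the right, equivalently whether $\mathcal{C}$ is conjugated on its row or its column index — which is pinned down once $Y^{A}_{\mu}=\sum_{\nu}(O_{A})_{\mu\nu}X^{A}_{\nu}$ is written out explicitly. I would also remark that the same reasoning extends to the $N$-partite correlation tensor: each single-party orbit action contracts the corresponding tensor index with an orthogonal matrix, so any fixed matricization of the tensor transforms by one-sided multiplications by orthogonal matrices (of the form $O\otimes\mathds{1}$ or $\mathds{1}\otimes O$) and its trace norm is again invariant.
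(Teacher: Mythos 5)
Your proposal is correct and follows essentially the same route as the paper's own proof: you show that the orbit action sends $\mathcal{C}\mapsto O_{A}\mathcal{C}O_{B}^{\mathrm{T}}$ (equivalently $M_{A}'^{\mathrm{T}}=O^{A}M_{A}^{\mathrm{T}}$, $M_{B}'^{\mathrm{T}}=O^{B}M_{B}^{\mathrm{T}}$ with $\chi,\chi'$ unchanged) and then use the invariance of the trace norm under left and right orthogonal multiplication, exactly as in \cref{appen:measu_orbit_invar}. The only differences are cosmetic: you justify the norm invariance via the eigenvalues of $\mathcal{C}^{\mathrm{T}}\mathcal{C}$ and add the $N$-partite remark, both of which the paper leaves implicit.
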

In Section 2, we established a map between quantum state and a bounded closed convex set $\mathcal{B}(\boldsymbol{X})$, which implies a separable criterion for the arbitrary measurements (see \cref{appen:corre_c_sep_cri} for proof).
\begin{theorem}\label{th:corre_c_sep_cri}
For arbitrary measurements $\boldsymbol{X}^{A}$ and $\boldsymbol{X}^{B}$, the correlation matrix $\mathcal{C}$ of the separable states satisfies
\begin{align}
\|\mathcal{C}\|_{\mathrm{tr}} \leq \kappa \; ,
\end{align}
otherwise it is an entangled state. Here, $\kappa=\kappa_{A}\kappa_{B}$ and $\kappa_{A}=\max_{\vec{x}^{A}\in\mathcal{B}(\boldsymbol{X}^{A})}|\vec{x}^{A}|,\kappa_{B}=\max_{\vec{x}^{B}\in\mathcal{B}(\boldsymbol{X}^{B})}|\vec{x}^{B}|$.
\label{th:univ_sep_cri}
\end{theorem}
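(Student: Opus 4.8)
The plan is to reduce the bound to the separable extreme points and exploit the convexity of the trace norm. First I would observe that, by \cref{eq:corre_matrix_c}, for any state we have $\mathcal{C}=M_A^{\mathrm{T}}\chi M_B$, but more usefully that for a \emph{pure product} state $\rho=\ket{\psi}\bra{\psi}\otimes\ket{\phi}\bra{\phi}$ the correlation matrix factorizes as $\mathcal{C}_{\mu\nu}=\braket{X_\mu^A}\braket{X_\nu^B}=x_\mu^A x_\nu^B$, i.e. $\mathcal{C}=\vec{x}^A(\vec{x}^B)^{\mathrm{T}}$ is rank one, where $\vec{x}^A\in\mathcal{B}(\boldsymbol{X}^A)$ and $\vec{x}^B\in\mathcal{B}(\boldsymbol{X}^B)$ are the representation vectors of the pure local states. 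For a rank-one matrix the trace norm is just the product of the Euclidean norms of the two defining vectors, so $\|\mathcal{C}\|_{\mathrm{tr}}=|\vec{x}^A|\,|\vec{x}^B|\le \kappa_A\kappa_B=\kappa$ for such a state, using the definitions of $\kappa_A,\kappa_B$ in the statement.

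Next I would extend this to general separable states by convex-combination bookkeeping. Writing a separable $\rho=\sum_k p_k\,\ket{\psi_k}\bra{\psi_k}\otimes\ket{\phi_k}\bra{\phi_k}$, linearity of the expectation value in $\rho$ gives $\mathcal{C}=\sum_k p_k\,\vec{x}_k^A(\vec{x}_k^B)^{\mathrm{T}}$, where $\vec{x}_k^A,\vec{x}_k^B$ are the representation vectors of the $k$-th pure local factors. The triangle inequality and absolute homogeneity of the trace norm then yield
\begin{align}
\|\mathcal{C}\|_{\mathrm{tr}} \leq \sum_k p_k \,\bigl\|\vec{x}_k^A(\vec{x}_k^B)^{\mathrm{T}}\bigr\|_{\mathrm{tr}} = \sum_k p_k\, |\vec{x}_k^A|\,|\vec{x}_k^B| \leq \Bigl(\sum_k p_k\Bigr)\kappa_A\kappa_B = \kappa \; ,
\end{align}
which is exactly the claimed criterion; a violation therefore certifies entanglement. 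The contrapositive gives the ``otherwise it is entangled'' clause.

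The only genuinely delicate point is the identity $\|\vec{u}\vec{v}^{\mathrm{T}}\|_{\mathrm{tr}}=|\vec{u}|\,|\vec{v}|$ and the fact that $\kappa_A=\max_{\vec{x}^A\in\mathcal{B}(\boldsymbol{X}^A)}|\vec{x}^A|$ is actually attained at a \emph{pure} state — but this is supplied by \cref{prop:state_rep_property}, since $\mathcal{B}(\boldsymbol{X}^A)$ is the convex hull of the pure-state vectors and the convex function $|\cdot|$ is maximized at an extreme point, so $\kappa_A$ is a bound on $|\vec{x}_k^A|$ for each pure $\ket{\psi_k}$. I would also remark that $\kappa$ is well-defined and finite because $\mathcal{B}(\boldsymbol{X})$ is bounded and closed (\cref{prop:state_rep_property}). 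The main obstacle, such as it is, lies not in the logic but in making the rank-one trace-norm computation and the attainment-at-pure-states argument fully rigorous; everything else is a one-line application of the triangle inequality.
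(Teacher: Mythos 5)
Your proposal is correct and follows essentially the same route as the paper's own proof: write $\mathcal{C}=\sum_k p_k\,\vec{x}_k^A(\vec{x}_k^B)^{\mathrm{T}}$ for a separable state, apply the triangle inequality for the trace norm, use $\|\vec{u}\vec{v}^{\mathrm{T}}\|_{\mathrm{tr}}=|\vec{u}|\,|\vec{v}|$, and bound each factor by the maximum over $\mathcal{B}(\boldsymbol{X})$. The extra remark about attainment at pure states is harmless but not needed, since $\vec{x}_k^{A}\in\mathcal{B}(\boldsymbol{X}^{A})$ already gives $|\vec{x}_k^{A}|\leq\kappa_A$ directly.
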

Considering that \cref{ob:measu_orbit_invar} and \cref{prop:state_rep_property}, any pairs of measurements in the same measurement orbit leads to the same criteria, which is general result of Proposition 3 in Ref. \cite{shang18}.
In the \cref{tab:various_measurement,tab:scm}, we list the value $\max_{\vec{x}\in\mathcal{B}(\boldsymbol{X})}|\vec{x}|$ for SCM and some measurements occurred in the references, which recovers separability criteria in these references and derives new separability criteria based on SCM.
\begin{corollary}
For measurement SCM, the correlation matrix $\mathcal{C}$ of the separable states satisfies
\begin{align}
\|\mathcal{C}\|_{\mathrm{tr}} \leq \frac{2d\alpha^2}{d+1}+h^2 \; ,
\end{align}
and if $\alpha^2=\frac{d-1}{2d^3},h=\frac{1}{d}$, ESIC criterion \cite{shang18} can be obtained.
\end{corollary}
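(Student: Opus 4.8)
The plan is to read off the corollary from \cref{th:univ_sep_cri} by specializing it to the case in which both $\boldsymbol{X}^{A}$ and $\boldsymbol{X}^{B}$ are SCMs with the same parameters $\alpha,h$ (as is implicit in the statement). By that theorem every separable state obeys $\|\mathcal{C}\|_{\mathrm{tr}}\le\kappa_{A}\kappa_{B}$ with $\kappa_{A}=\max_{\vec{x}^{A}\in\mathcal{B}(\boldsymbol{X}^{A})}|\vec{x}^{A}|$ and likewise for $\kappa_{B}$, so the whole task reduces to evaluating $\kappa:=\max_{\vec{x}\in\mathcal{B}(\boldsymbol{X})}|\vec{x}|$ for a single SCM.

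For this I would invoke \cref{th:state_recon_scm}, which already furnishes the closed form $|\vec{x}|^2=\tfrac{2d\alpha^2(d\tr[\rho^2]-1)}{d^2-1}+h^2$. Since the coefficient $\tfrac{2d\alpha^2}{d^2-1}$ of $\tr[\rho^2]$ is strictly positive and the purity satisfies $\tr[\rho^2]\le 1$ (with equality attained on pure states, which exist and lie in $\mathcal{B}(\boldsymbol{X})$ by \cref{prop:state_rep_property}), the maximum of $|\vec{x}|^2$ is reached at $\tr[\rho^2]=1$, giving $\kappa_{A}^2=\kappa_{B}^2=\tfrac{2d\alpha^2(d-1)}{d^2-1}+h^2=\tfrac{2d\alpha^2}{d+1}+h^2$. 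Hence $\kappa=\kappa_{A}\kappa_{B}=\tfrac{2d\alpha^2}{d+1}+h^2$, which is exactly the asserted bound $\|\mathcal{C}\|_{\mathrm{tr}}\le\tfrac{2d\alpha^2}{d+1}+h^2$.

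For the last sentence I would simply substitute the SIC-POVM values $\alpha^2=\tfrac{d-1}{2d^3}$, $h=\tfrac1d$ into this bound and simplify, $\tfrac{2d}{d+1}\cdot\tfrac{d-1}{2d^3}+\tfrac1{d^2}=\tfrac{d-1}{d^2(d+1)}+\tfrac1{d^2}=\tfrac{(d-1)+(d+1)}{d^2(d+1)}=\tfrac{2}{d(d+1)}$, so $\|\mathcal{C}\|_{\mathrm{tr}}\le\tfrac{2}{d(d+1)}$; by \cref{prop:scm_prope} these parameters are precisely those making an SCM a SIC-POVM, and one checks this agrees — up to the conventional rescaling between the operators $X_{\mu}$ here and the POVM elements of Ref. \cite{shang18} — with the ESIC criterion. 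I do not expect a genuine obstacle: the corollary is essentially the evaluation of $\kappa$ for a maximally symmetric measurement. The only points needing care are (i) that both subsystems carry an SCM with \emph{identical} $\alpha,h$, so that $\kappa$ factorizes as $\kappa_{A}\kappa_{B}$, and (ii) matching the normalization convention when comparing with the ESIC bound; both are routine bookkeeping.
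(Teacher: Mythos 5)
Your proposal is correct and follows the paper's own route: the corollary is obtained by specializing \cref{th:univ_sep_cri} to SCMs and evaluating $\kappa_{A}=\kappa_{B}=\max_{\vec{x}\in\mathcal{B}(\boldsymbol{X})}|\vec{x}|$ from the purity formula of \cref{th:state_recon_scm} at $\tr[\rho^2]=1$, exactly as the paper records in its tables. The substitution $\alpha^2=\tfrac{d-1}{2d^3}$, $h=\tfrac{1}{d}$ giving the ESIC bound $\tfrac{2}{d(d+1)}$ also matches the paper's computation.
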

\cref{th:univ_sep_cri} also implies how to construct entanglement witness $\mathcal{W}$ \cite{horodecki01,guhne09} for arbitrary measurements $\boldsymbol{X}^{A}$ and $\boldsymbol{X}^{B}$. The following matrix is an entanglement witness
\begin{align}
\mathcal{W} = \kappa\mathds{1} - \sum_{\mu}X_{\mu}^{A}\otimes X_{\mu}^{B} \; . \label{eq:ent_witness}
\end{align}
And there exist an optimal entanglement witness in the measurement orbits $\mathcal{O}(\boldsymbol{X}^{A})$ and $\mathcal{O}(\boldsymbol{X}^{B})$
\begin{align}
\mathcal{W}_{O} = \kappa\mathds{1} - \sum_{\mu}\widetilde{X}_{\mu}^{A}\otimes \widetilde{X}_{\mu}^{B} \; .
\label{eq:ent_witness_opt}
\end{align}
Here, $\boldsymbol{\widetilde{X}}^{A}=(O^{A})^{\mathrm{T}}\boldsymbol{X}^{A}$, $\boldsymbol{\widetilde{X}}^{B}=(O^{B})^{\mathrm{T}}\boldsymbol{X}^{B}$ and $O^{A},O^{B}$ correspond to a singular value decomposition of $\mathcal{C}$, i.e. $\mathcal{C}=O^{A}\Sigma (O^{B})^{\mathrm{T}}$.
The entanglement witness $\mathcal{W}_{O}$ is equivalent to \cref{th:univ_sep_cri} (see \cref{appen:ent_witness} for proof).
\begin{table}
\caption{\label{tab:various_measurement}The values $\max_{\vec{x}\in\mathcal{B}(\boldsymbol{X})}|\vec{x}|$ of four classes of measurements. Measurement $\{\frac{h}{\sqrt{d}}\mathds{1},\frac{\pi_{\mu}}{\sqrt{2}}\}$ corresponds to result in Ref. \cite{sarbicki20} and the orthogonal measurement (OM) $\{X_{\mu}|\tr[X_{\mu}X_{\nu}]=\delta_{\mu\nu}\}$ corresponds to the well-known CCNR criterion \cite{rudolph03,chen03} and measurement $\{\pi_{\mu}\}$ recovers de Vicente's correlation matrix criterion \cite{vicente07}.}
\begin{tabular*}{\hsize}{@{}@{\extracolsep{\fill}}ccccc@{}}
\hline\hline   
Measurement & SCM & $\{\frac{h}{\sqrt{d}}\mathds{1},\frac{\pi_{\mu}}{\sqrt{2}}\}$ & $\{\pi_{\mu}\}$ & OM \\ \midrule
$\displaystyle\max_{\vec{x}\in\mathcal{B}(\boldsymbol{X})}|\vec{x}|$ & $\sqrt{\frac{2d\alpha^2}{d+1}+h^2}$ & $\sqrt{\frac{d-1+h^2}{d}}$ & $\sqrt{\frac{2(d-1)}{d}}$ & $1$ \\
\hline\hline 
\end{tabular*}
\end{table}
\begin{table}
\caption{\label{tab:scm}SIC-POVM as a special case of SCM. If $\alpha^2=\frac{d-1}{2d^3},h=\frac{1}{d}$, we recover the separability criteria in Ref. \cite{shang18} and this criterion is independent on the existence of SIC-POVM which agrees with Ref. \cite{sarbicki20}.}
\begin{tabular*}{\hsize}{@{}@{\extracolsep{\fill}}ccc@{}}
\hline\hline   
SCM & $(\alpha^2=\frac{d-1}{2d^3},h=\frac{1}{d})$ & $h=0$   \\ \midrule
$\displaystyle\max_{\vec{x}\in\mathcal{B}(\boldsymbol{X})}|\vec{x}|$ & $\sqrt{\frac{2}{d(d+1)}}$ & $\sqrt{\frac{2d\alpha^2}{d+1}}$  \\
\hline\hline 
\end{tabular*}
\end{table}
On the other hand, we notice that the sum of the diagonal elements of $\gamma$ is related to the quantities $\mathcal{V}(\rho,\boldsymbol{X}^{A}+\boldsymbol{X}^{B})$, and $\mathcal{V}(\rho_{A},\boldsymbol{X}^{A}),\mathcal{V}(\rho_{B},\boldsymbol{X}^{B})$
\begin{align}
2\sum_{\mu}\gamma_{\mu\mu} = \left(\mathcal{V}(\rho_{A},\boldsymbol{X}^{A})+\mathcal{V}(\rho_{B},\boldsymbol{X}^{B})\right) - \mathcal{V}(\rho,\boldsymbol{X}^{A}+\boldsymbol{X}^{B}) \; .
\label{eq:sep_gamma}
\end{align}
Here, $\boldsymbol{X}^{A}+\boldsymbol{X}^{B}$ is a measurement of bipartite, i.e. $\left(\boldsymbol{X}^{A}+\boldsymbol{X}^{B}\right)_{\mu} = X_{\mu}^{A}\otimes\mathds{1}+\mathds{1}\otimes X_{\mu}^{B}$. It is interesting that \cref{eq:sep_gamma} is very similar to quantum mutual information \cite{nielsen10} in form. More importantly, \cref{eq:sep_gamma} implies a stronger separability condition. Via \cref{prop:variance_sum_prop} and the concavity of variance, we have
\begin{align}
\mathcal{V}(\rho_{\mathrm{sep}},\boldsymbol{X}^{A}+\boldsymbol{X}^{B}) \geq \mathcal{S}_{A} + \mathcal{S}_{B} \; .
\label{eq:lur_crit}
\end{align}
Here, $\mathcal{S}_{A}, \mathcal{S}_{B}$ are the optimal bound of uncertainty relation of subsystems, that is,
\begin{align}
\mathcal{S}_{A}=\min_{\rho^{A}}\mathcal{V}(\rho^{A},\boldsymbol{X}^{A}) \; , \; \mathcal{S}_{B}=\min_{\rho^{B}}\mathcal{V}(\rho^{B},\boldsymbol{X}^{B}) \; .
\end{align}
\cref{eq:lur_crit} is called as local uncertainty relation (LUR) criterion \cite{hofmann03,guhne04-frAQB}. However, LUR criterion encounters obstacle of how to find the optimal measurement in practice \cite{guhne04-frAQB}, which is an uncomfortable shortage of that. From the perspective of the optimization, this optimization problem can be expressed as
\begin{align}
&\text{minimize} \quad \mathcal{V}(\rho,\boldsymbol{X}^{A}+\boldsymbol{X}^{B}) - \mathcal{S}_{A} - \mathcal{S}_{B} \notag \\
&\text{subject to} \quad \boldsymbol{X}^{A}\in \mathcal{M}_{m}^{A}\; , \; \boldsymbol{X}^{B}\in \mathcal{M}_{m}^{B} \; .
\end{align}
To detect entanglement of quantum state $\rho$, we should accomplish the exhaustive search for all the measurements. Nevertheless, it is not necessary to exhaust all the measurements but only all the measurement orbits due to invariance of $\mathcal{V}(\rho,\boldsymbol{X})$ in the measurement orbits. Thus we conclude the following more effective LUR criterion.
\begin{theorem}
For arbitrary measurements $\boldsymbol{X}^{A}$ and $\boldsymbol{X}^{B}$, the separable states satisfy the following local uncertainty relation
\begin{align}
\min_{\boldsymbol{X}^{A}\in\mathcal{O}(\boldsymbol{X}^A),\boldsymbol{X}^{B}\in\mathcal{O}(\boldsymbol{X}^B)}\mathcal{V}(\rho,\boldsymbol{X}^{A}+\boldsymbol{X}^{B}) \geq \mathcal{S}_{A} + \mathcal{S}_{B} \; .
\end{align}
otherwise it is an entangled state. Here, $\mathcal{S}_{A}, \mathcal{S}_{B}$ are the optimal bound of uncertainty relation of subsystems, that is, $\mathcal{S}_{A}=\min_{\rho^{A}}\mathcal{V}(\rho^{A},\boldsymbol{X}^{A}),\mathcal{S}_{B}=\min_{\rho^{B}}\mathcal{V}(\rho^{B},\boldsymbol{X}^{B})$.
\label{th:opt_lur_crit}
\end{theorem}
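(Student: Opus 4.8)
The plan is to deduce the statement from the ordinary local uncertainty relation \cref{eq:lur_crit} together with part~2 of \cref{prop:variance_sum_prop}, which is what turns the right-hand side into a quantity attached to the \emph{orbit} rather than to the chosen representative measurements.

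First I would recall why \cref{eq:lur_crit} holds for a single fixed pair of local measurements $\boldsymbol{X}^{A},\boldsymbol{X}^{B}$. Writing a separable state as $\rho_{\mathrm{sep}}=\sum_{k}p_{k}\rho_{k}^{A}\otimes\rho_{k}^{B}$, concavity of $\mathcal{V}(\cdot,\boldsymbol{X}^{A}+\boldsymbol{X}^{B})$ on the density matrices (part~4 of \cref{prop:variance_sum_prop}) gives $\mathcal{V}(\rho_{\mathrm{sep}},\boldsymbol{X}^{A}+\boldsymbol{X}^{B})\geq\sum_{k}p_{k}\,\mathcal{V}(\rho_{k}^{A}\otimes\rho_{k}^{B},\boldsymbol{X}^{A}+\boldsymbol{X}^{B})$; additivity on product states (part~3) rewrites each summand as $\mathcal{V}(\rho_{k}^{A},\boldsymbol{X}^{A})+\mathcal{V}(\rho_{k}^{B},\boldsymbol{X}^{B})$; and the definitions $\mathcal{S}_{A}=\min_{\sigma}\mathcal{V}(\sigma,\boldsymbol{X}^{A})$, $\mathcal{S}_{B}=\min_{\sigma}\mathcal{V}(\sigma,\boldsymbol{X}^{B})$ bound each pair below by $\mathcal{S}_{A}+\mathcal{S}_{B}$, so summing against the weights $p_{k}$ closes the argument. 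Part~1 of \cref{prop:variance_sum_prop} lets us discard any trace part of the measurements at the outset, so no homogeneity hypothesis is needed.

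Next I would observe that the bound just produced is literally the \emph{same number} for every representative of the two orbits. By part~2 of \cref{prop:variance_sum_prop}, $\mathcal{V}(\sigma,\boldsymbol{X}^{A})$ is unchanged under $\boldsymbol{X}^{A}\mapsto O^{A}\boldsymbol{X}^{A}$, hence $\mathcal{S}_{A}$ depends only on $\mathcal{O}(\boldsymbol{X}^{A})$, and likewise $\mathcal{S}_{B}$ only on $\mathcal{O}(\boldsymbol{X}^{B})$. Therefore the first step, applied verbatim to any pair $(\boldsymbol{Y}^{A},\boldsymbol{Y}^{B})\in\mathcal{O}(\boldsymbol{X}^{A})\times\mathcal{O}(\boldsymbol{X}^{B})$, yields $\mathcal{V}(\rho_{\mathrm{sep}},\boldsymbol{Y}^{A}+\boldsymbol{Y}^{B})\geq\mathcal{S}_{A}+\mathcal{S}_{B}$ with a fixed right-hand side. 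Since $O(m)$ is compact and $(O^{A},O^{B})\mapsto\mathcal{V}(\rho,(O^{A}\boldsymbol{X}^{A})+(O^{B}\boldsymbol{X}^{B}))$ is continuous, the minimum over the product of orbits is attained, and a uniform lower bound survives minimization; the contrapositive is the entanglement criterion claimed.

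The step that deserves the most care — and the reason this is not a tautology — is recognizing that $\mathcal{V}(\rho,\boldsymbol{X}^{A}+\boldsymbol{X}^{B})$ is itself \emph{not} an orbit invariant: rotating the two local frames by \emph{independent} $O^{A}\neq O^{B}$ sends $X_{\mu}^{A}\otimes\mathds{1}+\mathds{1}\otimes X_{\mu}^{B}$ to an operator that is not a global rotation of the combined measurement, so $\mathcal{V}$ genuinely varies and the minimization over $\mathcal{O}(\boldsymbol{X}^{A})\times\mathcal{O}(\boldsymbol{X}^{B})$ — effectively over the relative orientation of the two measurement polytopes — is nontrivial. What \emph{is} invariant, and what actually carries the proof, are only the two local bounds $\mathcal{S}_{A},\mathcal{S}_{B}$; once that is isolated, the theorem reduces to the one-line remark that a uniform lower bound persists under the minimum. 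I would close with the practical point that the (generally intractable) LUR search over all local measurements thereby decomposes orbit by orbit, each orbit contributing only an $O(m)\times O(m)$ optimization against the fixed target $\mathcal{S}_{A}+\mathcal{S}_{B}$.
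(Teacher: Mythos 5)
Your proposal is correct and follows essentially the same route as the paper: the fixed-measurement LUR bound \cref{eq:lur_crit} obtained from concavity and additivity of $\mathcal{V}$ (\cref{prop:variance_sum_prop}), combined with the orbit invariance of $\mathcal{S}_{A}$ and $\mathcal{S}_{B}$, so that the uniform bound survives minimization over $\mathcal{O}(\boldsymbol{X}^{A})\times\mathcal{O}(\boldsymbol{X}^{B})$. Your remark that the joint quantity $\mathcal{V}(\rho,\boldsymbol{X}^{A}+\boldsymbol{X}^{B})$ is \emph{not} invariant under independent local rotations, and that only the two local bounds carry the argument, is a correct and slightly sharper articulation of the paper's brief "invariance" justification.
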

\cref{th:opt_lur_crit} can be reformulated by correlation matrix $\gamma$. In light of \cref{eq:sep_gamma} and the orbit invariance of $\mathcal{V}(\rho,\boldsymbol{X})$, we have
\begin{align}
\min\mathcal{V}(\rho,\boldsymbol{X}^{A}+\boldsymbol{X}^{B}) = &\left(\mathcal{V}(\rho_{A},\boldsymbol{X}^{A})+\mathcal{V}(\rho_{B},\boldsymbol{X}^{B})\right) - \notag \\
&2\max\sum_{\mu}\gamma_{\mu\mu} \; .
\end{align}
In the optimal case, we obtain a tidy result $\max\sum_{\mu}\gamma_{\mu\mu}=\|\gamma\|_{\mathrm{tr}}$ (seeing \cref{appen:gamma_orbit_opt} for proof). Thus, we conclude the following separability condition equivalent to \cref{th:opt_lur_crit}.
\begin{theorem}
For arbitrary measurements $\boldsymbol{X}^{A}$ and $\boldsymbol{X}^{B}$, the correlation matrix $\gamma$ of the separable states satisfies
\begin{align}
\|\gamma\|_{\mathrm{tr}} \leq \frac{\left(\mathcal{V}(\rho_{A},\boldsymbol{X}^{A})-\mathcal{S}_{A}\right) + \left(\mathcal{V}(\rho_{B},\boldsymbol{X}^{B})-\mathcal{S}_{B}\right) }{2} \; .
\end{align}
otherwise it is an entangled state.
\label{th:univ_sep_cri_gamma}
\end{theorem}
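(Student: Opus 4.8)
The plan is to read off \cref{th:univ_sep_cri_gamma} from the orbit-optimised local uncertainty relation \cref{th:opt_lur_crit} by re-expressing its left-hand side through the correlation matrix $\gamma$. The bridge is the identity \cref{eq:sep_gamma}, which I would first reestablish: expanding, for each $\mu$,
\begin{align}
V\left(\rho,\,X_{\mu}^{A}\otimes\mathds{1}+\mathds{1}\otimes X_{\mu}^{B}\right) = V(\rho_{A},X_{\mu}^{A}) + V(\rho_{B},X_{\mu}^{B}) + 2\left(\braket{X_{\mu}^{A}\otimes X_{\mu}^{B}} - \braket{X_{\mu}^{A}}\braket{X_{\mu}^{B}}\right) ,
\end{align}
where the marginals $\rho_{A},\rho_{B}$ enter because $X_{\mu}^{A}\otimes\mathds{1}$ and $\mathds{1}\otimes X_{\mu}^{B}$ only feel one subsystem; summing over $\mu$ and recognising the bracket as $-\gamma_{\mu\mu}$ reproduces \cref{eq:sep_gamma}.

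The next step is to carry this identity around the orbit. Under $\boldsymbol{X}^{A}\mapsto O^{A}\boldsymbol{X}^{A}$, $\boldsymbol{X}^{B}\mapsto O^{B}\boldsymbol{X}^{B}$ with $O^{A},O^{B}\in O(m)$, the marginals $\rho_{A},\rho_{B}$ are untouched, so $\mathcal{V}(\rho_{A},\boldsymbol{X}^{A})$ and $\mathcal{V}(\rho_{B},\boldsymbol{X}^{B})$ are invariant by \cref{prop:variance_sum_prop}, while $M_{A}\mapsto M_{A}(O^{A})^{\mathrm{T}}$, $M_{B}\mapsto M_{B}(O^{B})^{\mathrm{T}}$, so by \cref{eq:corre_matrix_gamma} (with $\chi,\chi'$ fixed since they depend only on $\rho$) one gets $\gamma\mapsto O^{A}\gamma (O^{B})^{\mathrm{T}}$. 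Hence
\begin{align}
\min_{\boldsymbol{X}^{A}\in\mathcal{O}(\boldsymbol{X}^{A}),\ \boldsymbol{X}^{B}\in\mathcal{O}(\boldsymbol{X}^{B})}\mathcal{V}(\rho,\boldsymbol{X}^{A}+\boldsymbol{X}^{B}) = \mathcal{V}(\rho_{A},\boldsymbol{X}^{A}) + \mathcal{V}(\rho_{B},\boldsymbol{X}^{B}) - 2\max_{O^{A},O^{B}\in O(m)}\tr\left[O^{A}\gamma (O^{B})^{\mathrm{T}}\right] .
\end{align}

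The one substantive point — and where I expect the real work — is the evaluation $\max_{O^{A},O^{B}\in O(m)}\tr[O^{A}\gamma (O^{B})^{\mathrm{T}}] = \|\gamma\|_{\mathrm{tr}}$. Because $\tr[O^{A}\gamma (O^{B})^{\mathrm{T}}] = \tr[(O^{B})^{\mathrm{T}}O^{A}\gamma]$, this reduces to $\max_{O\in O(m)}\tr[O\gamma]$; taking a singular value decomposition $\gamma = U\Sigma V^{\mathrm{T}}$ and using that every entry of an orthogonal matrix has modulus at most $1$ gives $\tr[O\gamma] = \tr[V^{\mathrm{T}}OU\,\Sigma] \leq \sum_{i}\sigma_{i} = \|\gamma\|_{\mathrm{tr}}$, with equality at $O = VU^{\mathrm{T}}$ — here it matters that the orbit group is the full $O(m)$, so no determinant obstruction arises. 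This is exactly the fact flagged as \cref{appen:gamma_orbit_opt}, and consistency with \cref{ob:measu_orbit_invar} is a useful check. Feeding it into \cref{th:opt_lur_crit}, a separable $\rho$ satisfies
\begin{align}
\mathcal{V}(\rho_{A},\boldsymbol{X}^{A}) + \mathcal{V}(\rho_{B},\boldsymbol{X}^{B}) - 2\|\gamma\|_{\mathrm{tr}} \geq \mathcal{S}_{A} + \mathcal{S}_{B} ,
\end{align}
and solving for $\|\gamma\|_{\mathrm{tr}}$ gives the stated bound, with the contrapositive supplying the ``otherwise entangled'' clause. Apart from the trace-norm maximisation everything is bookkeeping; the only subtlety there is that \cref{th:opt_lur_crit} already minimises over $O^{A}$ and $O^{B}$ independently, which is precisely what allows the two rotations to bring $\gamma$ to its singular-value form.
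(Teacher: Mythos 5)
Your proposal is correct and follows essentially the same route as the paper: it derives the bound from \cref{th:opt_lur_crit} via the identity \cref{eq:sep_gamma}, the orbit invariance of the marginal variances, the transformation $\gamma\mapsto O^{A}\gamma(O^{B})^{\mathrm{T}}$, and the evaluation $\max_{O\in O(m)}\tr[\gamma O]=\|\gamma\|_{\mathrm{tr}}$, which is exactly the content of \cref{appen:gamma_orbit_opt}. The only cosmetic difference is that you prove the trace-norm maximisation directly from the SVD (diagonal entries of an orthogonal matrix bounded by $1$) rather than citing von Neumann's trace theorem, which is a perfectly adequate substitute.
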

\begin{table}
\caption{\label{tab:various_measurement_vs}The values $\mathcal{S}$ and $\mathcal{V}(\rho,\boldsymbol{X})$ for the aforementioned measurements. If we take the orthogonal measurement $\{X_{\mu}|\tr[X_{\mu}X_{\nu}]=\delta_{\mu\nu}\}$, we have $\|\gamma\|_{\mathrm{tr}}\leq \frac{2-(\tr[\rho_{A}^2]+\tr[\rho_{B}^2])}{2}$ for separable states, which is Theorem 2 in Ref. \cite{zhang07}.}
\begin{tabular*}{\hsize}{@{}@{\extracolsep{\fill}}cccc@{}}
\hline\hline   
Measurement & SCM & $\{\frac{h}{\sqrt{d}}\mathds{1},\frac{\pi_{\mu}}{\sqrt{2}}\}$ & OM \\ \midrule
$\mathcal{S}$ & $\frac{2d^2\alpha^2}{d+1}$ & $d-1$ & $d-1$ \\
$\mathcal{V}(\rho,\boldsymbol{X})$ & $\frac{2d^2\left(d-\tr[\rho^2]\right)}{d^2-1}\alpha^2$ & $d-\tr[\rho^2]$ & $d-\tr[\rho^2]$ \\
\hline\hline 
\end{tabular*}
\end{table}

Similar to \cref{th:corre_c_sep_cri}, \cref{th:univ_sep_cri_gamma} leads to the same criteria in the same measurement orbit. In the \cref{tab:various_measurement_vs}, we list the values $\mathcal{S}$ and $\mathcal{V}(\rho,\boldsymbol{X})$ for the aforementioned measurements, which gives new separability criteria via \cref{th:univ_sep_cri_gamma}. And then, we compare the relations between these criteria based on these three measurements. Since the separability is invariant under the SLOCC transformation \cite{verstraete03,li18-frAQB}, we consider the normal form of bipartite state \cref{eq:bipartite_state}
\begin{align}
\tilde{\rho} = \frac{1}{d^2}\mathds{1}\otimes\mathds{1} + \sum_{\mu\nu}\widetilde{\chi}_{\mu\nu}\pi_{\mu}\otimes\pi_{\nu} \; .
\label{eq:normal_form}
\end{align}
The following corollaries give the relations between criteria based on \cref{th:univ_sep_cri} and \cref{th:univ_sep_cri_gamma} respectively (see \cref{appen:coro_proof} for proof).
\begin{corollary}\label{coro:sep_norm}
Under the normal form, the \cref{th:univ_sep_cri} gives the equivalent separability criteria for measurements SCM, $\{\frac{h}{\sqrt{d}}\mathds{1},\frac{\pi_{\mu}}{\sqrt{2}}\}$ and $\{X_{\mu}|\tr[X_{\mu}X_{\nu}]=\delta_{\mu\nu}\}$
\begin{align}
&\|\widetilde{\chi}\|_{\mathrm{tr}} \leq \frac{2(d-1)}{d} \; .
\end{align}
\end{corollary}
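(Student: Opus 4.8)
The plan is to evaluate, for a state $\widetilde\rho$ that is already in the normal form \cref{eq:normal_form}, the trace norm $\|\mathcal C\|_{\mathrm{tr}}$ of the correlation matrix $\mathcal C=M_A^{\mathrm T}\chi M_B$ (see \cref{eq:corre_matrix_c}) produced by each of the three measurements, and then to substitute the result into \cref{th:univ_sep_cri}. The basic simplification is that a normal-form state has maximally mixed marginals, $\widetilde\rho_A=\widetilde\rho_B=\mathds 1/d$, so in the $\Pi$-expansion \cref{eq:bipartite_state} the coefficient array $\chi$ is block diagonal: the $00$-entry is $\tfrac2d$ (from $\tr\widetilde\rho=1$), the entries $\chi_{0\nu}$ and $\chi_{\mu 0}$ with $\mu,\nu\geq 1$ vanish because $\tr[\widetilde\rho_{A/B}\,\pi_\mu]=0$, and the remaining $(d^2-1)\times(d^2-1)$ core block is a fixed scalar multiple of $\widetilde\chi$. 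Hence in $\mathcal C=M_A^{\mathrm T}\chi M_B$ the identity direction $\Pi_0$ decouples from the traceless directions $\pi_\mu$.

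Next I would use the explicit geometry of each measurement matrix $M=(\vec n_1,\dots,\vec n_m)$. For the orthogonal measurement, $\tr[X_\mu X_\nu]=\delta_{\mu\nu}$ means $M^{\mathrm T}M=\tfrac12\mathds 1_{d^2}$, so $M=\tfrac1{\sqrt2}O$ with $O\in O(d^2)$ and $\mathcal C=\tfrac12 O_A^{\mathrm T}\chi O_B$; since the trace norm is orthogonally invariant, $\|\mathcal C\|_{\mathrm{tr}}=\tfrac12\|\chi\|_{\mathrm{tr}}$, which by the block structure above is a fixed constant plus a fixed multiple of $\|\widetilde\chi\|_{\mathrm{tr}}$. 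For $\{\tfrac{h}{\sqrt d}\mathds 1,\tfrac{\pi_\mu}{\sqrt2}\}$ the matrix $M$ is simply diagonal, $M=\operatorname{diag}\!\big(\tfrac{h}{\sqrt2},\tfrac1{\sqrt2}\mathds 1_{d^2-1}\big)$, so $\mathcal C$ again splits as an identity piece plus a multiple of $\widetilde\chi$. For the SCM, \cref{prop:scm_prope} gives the Gram relations, which geometrically say that the $\hat n_\mu$ form a regular simplex with $\sum_\mu\hat n_\mu=0$ and $\sum_\mu\hat n_\mu\hat n_\mu^{\mathrm T}=\tfrac{d^2\alpha^2}{d^2-1}\mathds 1_{d^2-1}$; writing $M_A$ as a constant first row together with the block $\widehat M_A$ of columns $\hat n_\mu$, one finds $\mathcal C=\tfrac{h^2}{d^2}\mathbf 1\mathbf 1^{\mathrm T}+(\mathrm{const})\,\widehat M_A^{\mathrm T}\widetilde\chi\,\widehat M_B$, where $\widehat M_A=\sqrt{\tfrac{d^2\alpha^2}{d^2-1}}\,U_A$ with $U_AU_A^{\mathrm T}=\mathds 1_{d^2-1}$ (orthonormal rows). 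Because $\widehat M_A\mathbf 1=0$, the two summands have mutually orthogonal row and column spaces, so their trace norms add; since multiplication by a partial isometry preserves singular values, the second summand contributes exactly $\tfrac{(\mathrm{const})\,d^2\alpha^2}{d^2-1}\|\widetilde\chi\|_{\mathrm{tr}}$, while $\|\tfrac{h^2}{d^2}\mathbf 1\mathbf 1^{\mathrm T}\|_{\mathrm{tr}}=h^2$.

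Finally I would insert each expression into $\|\mathcal C\|_{\mathrm{tr}}\le\kappa_A\kappa_B$, reading the values of $\kappa$ off \cref{tab:various_measurement} ($\kappa=1$ for OM, $\kappa=\tfrac{d-1+h^2}{d}$ for $\{\tfrac{h}{\sqrt d}\mathds 1,\tfrac{\pi_\mu}{\sqrt2}\}$, and $\kappa=\tfrac{2d\alpha^2}{d+1}+h^2$ for SCM). In every case the ``identity'' contribution to $\|\mathcal C\|_{\mathrm{tr}}$ cancels the $h$- and $\tfrac1d$-dependent pieces of $\kappa$, and the positive prefactor of $\|\widetilde\chi\|_{\mathrm{tr}}$ is exactly the one multiplying the surviving part of $\kappa$; dividing through, all dependence on $\alpha$ and $h$ drops out and each case collapses to the single inequality $\|\widetilde\chi\|_{\mathrm{tr}}\le\tfrac{2(d-1)}{d}$.

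The part I expect to demand genuine care is the SCM case: one must pass from the Gram identities of \cref{prop:scm_prope} to the simplex relations $\sum_\mu\hat n_\mu=0$ and $\sum_\mu\hat n_\mu\hat n_\mu^{\mathrm T}\propto\mathds 1_{d^2-1}$, then verify that the rank-one identity block $\tfrac{h^2}{d^2}\mathbf 1\mathbf 1^{\mathrm T}$ and the $\widetilde\chi$-block of $\mathcal C$ really are orthogonal in both senses needed for trace-norm additivity (this is where $\widehat M_A\mathbf 1=0$ enters), and finally keep the several $d$-dependent constants straight so that they cancel cleanly against $\kappa_{\mathrm{SCM}}$. By contrast, once the block-diagonal form of $\chi$ for normal-form states is established, the OM and $\{\tfrac{h}{\sqrt d}\mathds 1,\tfrac{\pi_\mu}{\sqrt2}\}$ cases are essentially bookkeeping.
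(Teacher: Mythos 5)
Your proposal is correct and follows essentially the same route as the paper: exploit the block form $\chi=\tfrac{2}{d}\oplus\widetilde{\chi}$ of normal-form states, write out $M_A,M_B$ explicitly for each of the three measurements, evaluate $\|\mathcal{C}\|_{\mathrm{tr}}=\|M_A^{\mathrm T}\chi M_B\|_{\mathrm{tr}}$, and substitute into \cref{th:univ_sep_cri} so that the $h$- and $\alpha$-dependent pieces cancel. The only (immaterial) difference is in the SCM case, where you obtain $\|\mathcal{C}\|_{\mathrm{tr}}=h^2+\tfrac{d^2\alpha^2}{d^2-1}\|\widetilde{\chi}\|_{\mathrm{tr}}$ via trace-norm additivity for summands with orthogonal row/column supports (using $\sum_\mu\hat{n}_\mu=0$) plus isometry invariance, whereas the paper computes the same value directly from $\tr\bigl[\sqrt{\mathcal{C}\mathcal{C}^{\mathrm T}}\bigr]$ using the Gram matrices $M_AM_A^{\mathrm T}=\tfrac{dh^2}{2}\oplus\tfrac{d^2\alpha^2}{d^2-1}\mathds{1}$.
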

\begin{corollary}\label{coro:gene_sep}
\cref{th:univ_sep_cri_gamma} gives the equivalent separability criteria for measurements SCM, $\{\frac{h}{\sqrt{d}}\mathds{1},\frac{\pi_{\mu}}{\sqrt{2}}\}$ and $\{X_{\mu}|\tr[X_{\mu}X_{\nu}]=\delta_{\mu\nu}\}$
\begin{align}
\|\chi'-\chi\|_{\mathrm{tr}} \leq 2 - \tr[\rho_{A}^2] - \tr[\rho_{B}^2] \; .
\end{align}
\end{corollary}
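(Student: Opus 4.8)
The plan is to specialise \cref{th:univ_sep_cri_gamma} to each of the three measurements and show that, in every case, both the left-hand side $\|\gamma\|_{\mathrm{tr}}$ and the right-hand side carry one and the same measurement-dependent positive scalar, which then cancels, leaving the single inequality $\|\chi'-\chi\|_{\mathrm{tr}}\le 2-\tr[\rho_A^2]-\tr[\rho_B^2]$. So no appeal to the normal form is needed; I only have to (i) reduce $\gamma$ to the traceless block of $\chi'-\chi$, (ii) identify that block's ``frame constant'' for each measurement, and (iii) read the uncertainty data off \cref{tab:various_measurement_vs}.

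First I would record a fact valid for every bipartite $\rho$: since $\chi_{00}=\tfrac2d$ and $\chi'_{\mu\nu}=\tfrac d2\chi_{\mu0}\chi_{0\nu}$, the whole ``trace border'' of $\chi'-\chi$ vanishes, $(\chi'-\chi)_{\mu0}=(\chi'-\chi)_{0\nu}=(\chi'-\chi)_{00}=0$. Writing $\Delta$ for the surviving $(d^2-1)\times(d^2-1)$ block, $\|\chi'-\chi\|_{\mathrm{tr}}=\|\Delta\|_{\mathrm{tr}}$. Splitting $M_A$ into its trace row $\vec t_A^{\mathrm T}$ (the $\Pi_0$-components) and its traceless block $\widehat M_A$ of size $(d^2-1)\times d^2$, and likewise $M_B$, \cref{eq:corre_matrix_gamma} collapses to $\gamma=\widehat M_A^{\mathrm T}\Delta\,\widehat M_B$; in particular the trace rows drop out, and with them the parameter $h$ of the measurement $\{\tfrac h{\sqrt d}\mathds 1,\tfrac{\pi_\mu}{\sqrt2}\}$.

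Next I would compute $\widehat M\widehat M^{\mathrm T}=N$ for one orbit representative of each measurement, which suffices because $\|\gamma\|_{\mathrm{tr}}$ is orbit-invariant (\cref{ob:measu_orbit_invar}). For the orthogonal measurement $\tr[X_\mu X_\nu]=\delta_{\mu\nu}$ forces the Gram relation $M^{\mathrm T}M=\tfrac12\mathds 1$, hence $M$ is $\tfrac1{\sqrt2}$ times an orthogonal matrix and $\widehat M\widehat M^{\mathrm T}=\tfrac12\mathds 1_{d^2-1}$; for $\{\tfrac h{\sqrt d}\mathds 1,\tfrac{\pi_\mu}{\sqrt2}\}$ one has $\widehat M=\tfrac1{\sqrt2}(\,0\mid\mathds 1_{d^2-1}\,)$ and again $\widehat M\widehat M^{\mathrm T}=\tfrac12\mathds 1_{d^2-1}$; for the SCM, $\widehat M\widehat M^{\mathrm T}=\sum_\mu\hat n_\mu\hat n_\mu^{\mathrm T}$, which by the symmetry of the regular $(d^2-1)$-simplex (equivalently, from $\hat n_\mu\cdot\hat n_\nu=\alpha^2\delta_{\mu\nu}-\tfrac{\alpha^2}{d^2-1}(1-\delta_{\mu\nu})$ together with $\sum_\mu\hat n_\mu=0$) is $c\,\mathds 1_{d^2-1}$ with $c=\tfrac{d^2\alpha^2}{d^2-1}$, the constant fixed by $\tr N=\sum_\mu|\hat n_\mu|^2=d^2\alpha^2$. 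In each case $\widehat M_A=\sqrt c\,\widetilde O_A$ and $\widehat M_B=\sqrt c\,\widetilde O_B$ with $\widetilde O\widetilde O^{\mathrm T}=\mathds 1_{d^2-1}$, so $\gamma=c\,\widetilde O_A^{\mathrm T}\Delta\,\widetilde O_B$; since left-multiplication by a matrix with orthonormal columns and right-multiplication by a matrix with orthonormal rows preserve singular values, $\|\gamma\|_{\mathrm{tr}}=c\,\|\Delta\|_{\mathrm{tr}}=c\,\|\chi'-\chi\|_{\mathrm{tr}}$, with $c=\tfrac12$ for the first two measurements.

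Finally I would read the bound of \cref{th:univ_sep_cri_gamma} off \cref{tab:various_measurement_vs}: $\mathcal V(\rho,\boldsymbol X)-\mathcal S$ equals $1-\tr[\rho^2]$ for the orthogonal measurement and for $\{\tfrac h{\sqrt d}\mathds 1,\tfrac{\pi_\mu}{\sqrt2}\}$, and equals $\tfrac{2d^2\alpha^2}{d^2-1}(1-\tr[\rho^2])=2c\,(1-\tr[\rho^2])$ for the SCM (using $\tfrac{2d^2\alpha^2}{d+1}=\tfrac{2d^2\alpha^2(d-1)}{d^2-1}$). Hence in all three cases the right-hand side of \cref{th:univ_sep_cri_gamma} is exactly $c\,(2-\tr[\rho_A^2]-\tr[\rho_B^2])$, and $\|\gamma\|_{\mathrm{tr}}\le c\,(2-\tr[\rho_A^2]-\tr[\rho_B^2])$ reduces, after cancelling the common $c>0$, to the measurement-independent criterion $\|\chi'-\chi\|_{\mathrm{tr}}\le 2-\tr[\rho_A^2]-\tr[\rho_B^2]$. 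The only delicate point I anticipate is the SCM bookkeeping: verifying $N\propto\mathds 1_{d^2-1}$ from the simplex geometry and making sure the deleted trace row does not leak into $\gamma=\widehat M_A^{\mathrm T}\Delta\,\widehat M_B$; everything else is substitution from the tables together with the trace-norm invariance already invoked for \cref{ob:measu_orbit_invar}.
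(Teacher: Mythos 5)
Your proposal is correct and follows essentially the same route as the paper: the paper's own argument also rests on the observation that the first row and column of $\chi'-\chi$ vanish (so $\chi'-\chi=0\oplus\kappa$), then reuses the per-measurement computations from \cref{coro:sep_norm} (orthogonality of $\sqrt{2}M$ for the OM, the block form of $M$ for $\{\tfrac{h}{\sqrt d}\mathds 1,\tfrac{\pi_\mu}{\sqrt 2}\}$, and $\widehat M\widehat M^{\mathrm T}=\tfrac{d^2\alpha^2}{d^2-1}\mathds 1$ for the SCM from \cref{eq:sup_simp_scm}) together with the values in \cref{tab:various_measurement_vs}. Your packaging of these facts into a single frame constant $c$ that cancels between $\|\gamma\|_{\mathrm{tr}}=c\,\|\chi'-\chi\|_{\mathrm{tr}}$ and the bound $c\,(2-\tr[\rho_A^2]-\tr[\rho_B^2])$, and your explicit note that no normal form is needed, are faithful elaborations of the paper's (much terser) proof rather than a different method.
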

In Ref. \cite{shang18}, authors conjecture that ESIC criterion \cite{shang18} is stronger than CCNR criterion \cite{rudolph03,chen03}. \cref{coro:sep_norm} shows that ESIC criterion is equivalent to CCNR criterion under the normal form. For illustrations, we calculate the entangled regions for Bell diagonal states $\rho=\frac{1}{4}(\mathds{1}\otimes\mathds{1}+\sum_{\mu}t_{\mu}\sigma_{\mu}\otimes\sigma_{\mu})$ \cite{horodecki96-1tBDK} by \cref{coro:sep_norm}, i.e. $|t_1|+|t_2|+|t_3|> 1$, which is a necessary and sufficient condition. Thus, considering that SLOCC transformation, \cref{coro:sep_norm} is a necessary and sufficient condition of entanglement for 2-qubit system. Next, we compare \cref{coro:gene_sep} with the known criteria by $3\times 3$ Horodecki's bound entangled states with the white noise
\begin{align}
\rho(t,p) = p\rho_{\mathrm{\sss H}}(t) + (1-p)\frac{\mathds{1}}{9} \; .
\end{align}
Here, $\rho_{\mathrm{\sss H}}(t)$ is Horodecki's bound entangled states \cite{horodecki97}. \cref{fig:noise_horodecki} shows detection results of the three criteria (CCNR, ESIC, and \cref{coro:gene_sep}). In this case, Corollary 5 detects the more entangled states.
\begin{figure}
\centering
\includegraphics[width=0.6\linewidth]{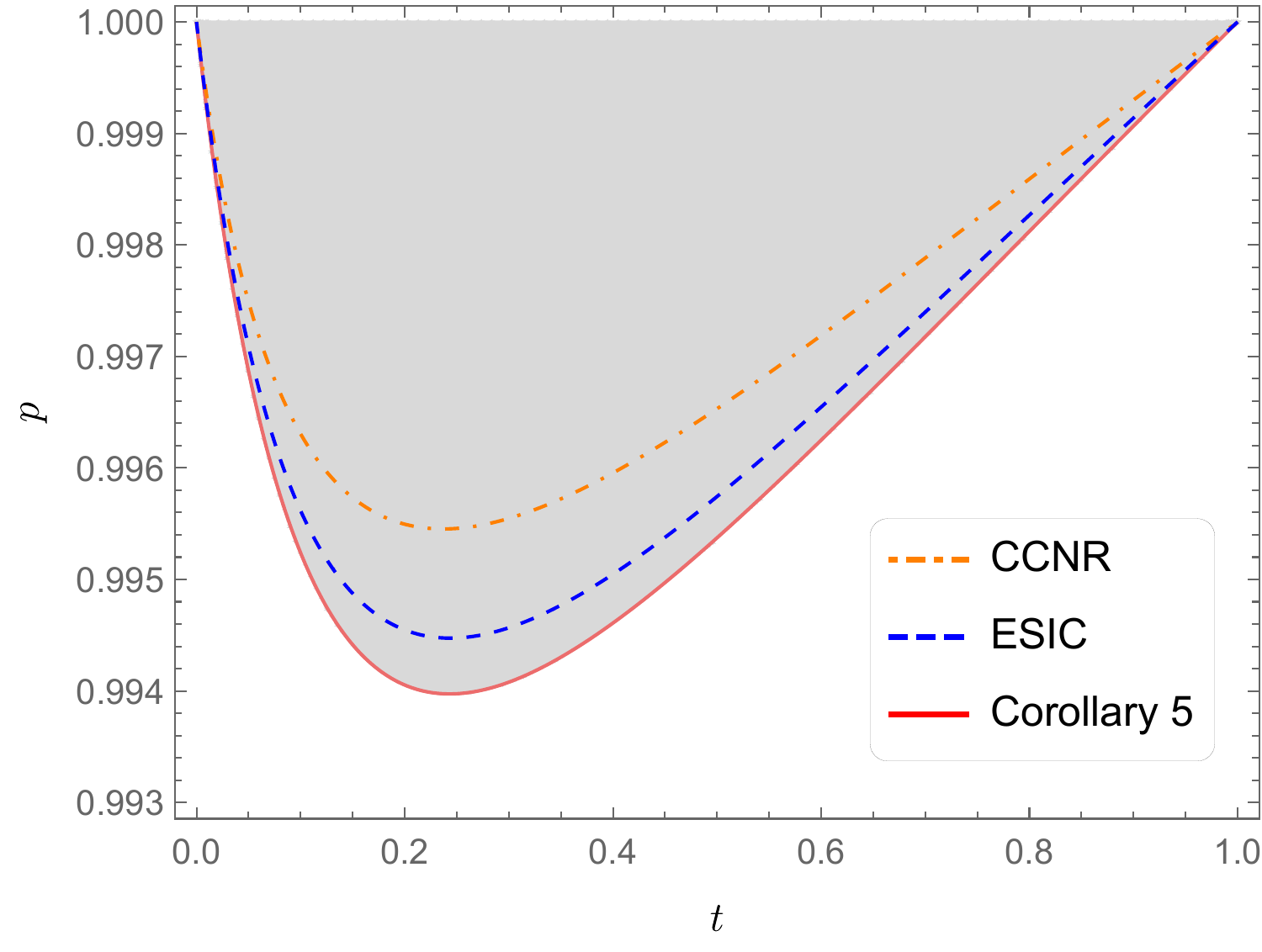}
\caption{Entanglement detection of Horodecki's $3\times 3$ bound entangled states with white noise $\rho(t,p)$. The orange dashdotted, blue dashed, and red solid curves are detection results of CCNR, ESIC, and \cref{coro:gene_sep} respectively. The light gray region denotes the entangled region. Corollary 5 detects the more entangled states.}
\label{fig:noise_horodecki}
\end{figure}

\subsection{Steering witness}
\noindent
In 2007, Wiseman \emph{et al.} \cite{wiseman07} formulated the concept of Schrödinger's steering \cite{schrodinger35,schrodinger36}, i.e. the quantum state $\rho$ will be steerable (by Alice), if the joint probability distribution $P(a,b|X^{A},X^{B};\rho)$ cannot be expressed as the following local hidden state model
\begin{align}
P(a,b|X^{A},X^{B};\rho) = \sum_{\xi}\wp(a|X^{A},\xi)\tr[\Pi_{b}^{B}\rho_{\xi}]\wp_{\xi} \; .
\label{eq:steering_def}
\end{align}
Here, $\wp(a|X^{A},\xi)$ is any possible probability distribution of Alice and $\Pi_{b}^{B}$ is the projector on the Bob's subspace associated to the measurement result $b$. $\wp_{\xi}$ is a normalized distribution involving hidden variable $\xi$. Equivalently, the separable states \cref{eq:sep_den_mat} also can be defined by the following similar joint probability distribution
\begin{align}
P(a,b|X^{A},X^{B};\rho) = \sum_{\xi}\tr[\Pi_{a}^{A}\rho_{\xi}^{A}]\tr[\Pi_{b}^{B}\rho_{\xi}^{B}]\wp_{\xi} \; .
\label{eq:sep_prob_def}
\end{align}
Here, $\Pi_{a}^{A}$ is similar to $\Pi_{b}^{B}$ and $\rho_{\xi}^{A}=\ket{\psi_{\xi}}\bra{\psi_{\xi}},\rho_{\xi}^{B}=\ket{\phi_{\xi}}\bra{\phi_{\xi}}$. Considering that the similarity between definitions of entanglement and steering, we can detect steering by similar methods. It is found that the entanglement criteria based on correlation matrices $\mathcal{C}$ and $\gamma$ can be generalized to detect quantum steering. We obtain the following steering criterion based on correlation $\mathcal{C}$ (see \cref{appen:univ_steering_cri} for proof).
\begin{theorem}\label{th:univ_steering_cri}
For arbitrary measurements $\boldsymbol{X}^{A}$ and $\boldsymbol{X}^{B}$, the correlation matrix $\mathcal{C}$ of the unsteerable states satisfies
\begin{align}
\|\mathcal{C}\|_{\mathrm{tr}} \leq \kappa \; ,
\end{align}
otherwise it is a steerable state (by Alice). Here, $\kappa=\kappa_{A}\kappa_{B}$, $\displaystyle\kappa_{A}=\min_{\boldsymbol{X}^{A}\in\mathcal{O}(\boldsymbol{X}^A)}\sqrt{\sum_{\mu}\lambda_{\mathrm{max}}((X_{\mu}^{A})^2)}$, $\displaystyle\kappa_{B}=\max_{\vec{x}^{B}\in\mathcal{B}(\boldsymbol{X}^{B})}|\vec{x}^{B}|$ and $\lambda_{\mathrm{max}}(X)$ refers to the maximal eigenvalue of $X$.
\end{theorem}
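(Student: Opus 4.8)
The plan is to run the argument behind \cref{th:corre_c_sep_cri}, but with Alice's quantum marginal replaced by the weaker ``classical'' description forced by the local hidden state model \cref{eq:steering_def}. First I would spectrally decompose the observables, $X_{\mu}^{A}=\sum_{a}a\,\Pi_{a}^{A}$ and $X_{\nu}^{B}=\sum_{b}b\,\Pi_{b}^{B}$, so that for an unsteerable $\rho$
\begin{align}
\mathcal{C}_{\mu\nu}=\braket{X_{\mu}^{A}\otimes X_{\nu}^{B}}=\sum_{a,b}ab\,P(a,b|X_{\mu}^{A},X_{\nu}^{B};\rho)=\sum_{\xi}\wp_{\xi}\,a_{\mu}^{(\xi)}b_{\nu}^{(\xi)}\; ,\notag
\end{align}
where $a_{\mu}^{(\xi)}:=\sum_{a}a\,\wp(a|X_{\mu}^{A},\xi)$ and $b_{\nu}^{(\xi)}:=\sum_{b}b\,\tr[\Pi_{b}^{B}\rho_{\xi}]=\tr[X_{\nu}^{B}\rho_{\xi}]$. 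Hence $\mathcal{C}=\sum_{\xi}\wp_{\xi}\,\vec{a}^{(\xi)}(\vec{b}^{(\xi)})^{\mathrm{T}}$ is a convex combination of rank-one matrices.

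Next I would estimate $\|\mathcal{C}\|_{\mathrm{tr}}$. A rank-one matrix $\vec{a}\vec{b}^{\mathrm{T}}$ has the single singular value $|\vec{a}|\,|\vec{b}|$, so convexity together with the triangle inequality for the trace norm gives $\|\mathcal{C}\|_{\mathrm{tr}}\le\sum_{\xi}\wp_{\xi}|\vec{a}^{(\xi)}|\,|\vec{b}^{(\xi)}|$. The vector $\vec{b}^{(\xi)}$ is exactly the image of the genuine state $\rho_{\xi}$ under $\boldsymbol{X}^{B}$, hence $\vec{b}^{(\xi)}\in\mathcal{B}(\boldsymbol{X}^{B})$ and $|\vec{b}^{(\xi)}|\le\kappa_{B}$. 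By contrast each $a_{\mu}^{(\xi)}$ is only a convex combination of eigenvalues of $X_{\mu}^{A}$, so $|a_{\mu}^{(\xi)}|$ is at most the largest modulus of an eigenvalue of $X_{\mu}^{A}$, namely $\sqrt{\lambda_{\max}((X_{\mu}^{A})^{2})}$, and therefore $|\vec{a}^{(\xi)}|\le\sqrt{\sum_{\mu}\lambda_{\max}((X_{\mu}^{A})^{2})}$. Using $\sum_{\xi}\wp_{\xi}=1$ we obtain $\|\mathcal{C}\|_{\mathrm{tr}}\le\sqrt{\sum_{\mu}\lambda_{\max}((X_{\mu}^{A})^{2})}\,\kappa_{B}$ for the particular measurement chosen on $A$.

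Finally I would promote this to the orbit-optimal form. By \cref{ob:measu_orbit_invar}, $\|\mathcal{C}\|_{\mathrm{tr}}$ is invariant under $\boldsymbol{X}^{A}\mapsto O\boldsymbol{X}^{A}$, $O\in O(m)$, while \cref{prop:state_rep_property} makes $\kappa_{B}$ an orbit invariant already. Since an unsteerable state admits an LHS model for \emph{every} measurement setting, the bound of the previous paragraph applies with $\boldsymbol{X}^{A}$ replaced by any representative of its orbit; minimizing over $O(m)$ yields
\begin{align}
\|\mathcal{C}\|_{\mathrm{tr}}\le\Bigl(\min_{\boldsymbol{X}^{A}\in\mathcal{O}(\boldsymbol{X}^{A})}\sqrt{\sum_{\mu}\lambda_{\max}((X_{\mu}^{A})^{2})}\Bigr)\kappa_{B}=\kappa_{A}\kappa_{B}=\kappa\; ,\notag
\end{align}
and any violation certifies steerability by Alice. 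I expect the crux — and the sole place where steering genuinely departs from entanglement — to be the asymmetric bounding step: the hidden-variable response $a_{\mu}^{(\xi)}$ is constrained only by the spectrum (operator norm) of $X_{\mu}^{A}$, not by membership in $\mathcal{B}(\boldsymbol{X}^{A})$ the way a true quantum expectation would be, which is precisely why the $A$-side factor $\kappa_{A}$ acquires the orbit minimization that the symmetric criterion \cref{th:corre_c_sep_cri} does not need. Minor points to check are that mixed hidden states $\rho_{\xi}$ cause no trouble, since $\mathcal{B}(\boldsymbol{X}^{B})$ contains all density matrices, and that reading the $\Pi^{B}_{b}$ of \cref{eq:steering_def} as the spectral projectors of $X^{B}_{\nu}$ is the pertinent specialization.
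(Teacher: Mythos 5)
Your proposal is correct and follows essentially the same route as the paper's proof: write $\mathcal{C}=\sum_{\xi}\wp_{\xi}\,\vec{\alpha}_{\xi}(\vec{x}_{\xi}^{B})^{\mathrm{T}}$ from the LHS model, bound the trace norm by convexity with $|\vec{x}_{\xi}^{B}|\leq\kappa_{B}$ and $|\vec{\alpha}_{\xi}|\leq\sqrt{\sum_{\mu}\lambda_{\max}((X_{\mu}^{A})^{2})}$, then use the orbit invariance of $\|\mathcal{C}\|_{\mathrm{tr}}$ (\cref{ob:measu_orbit_invar}) to minimize the Alice-side factor over $\mathcal{O}(\boldsymbol{X}^{A})$. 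The only cosmetic difference is that you bound each $\alpha_{\xi\mu}$ directly as a convex combination of eigenvalues, whereas the paper passes through $(\sum_{a}a\,\wp)^{2}\leq\sum_{a}a^{2}\wp\leq\max a^{2}$; both give the same bound.
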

For illustration, we calculate 2-qubit Werner states $\rho_{\mathrm{W}}=p\ket{\psi_{s}}\bra{\psi_{s}}+\frac{1-p}{4}\mathds{1}$ and $\ket{\psi_{s}}$ is the singlet state. In case of two settings, i.e. $\boldsymbol{X}^{A}=\boldsymbol{X}^{B}=(\vec{n}_{1}\cdot\vec{\sigma},\vec{n}_{2}\cdot\vec{\sigma})^{\mathrm{T}}$ and the included angle $\braket{\vec{n}_{1},\vec{n}_{2}}=\delta$, we have $\kappa_{A}\leq \sqrt{2},\kappa_{B}=\sqrt{2+\sin2\delta}$ and $\|\mathcal{C}\|_{\mathrm{tr}}=2p$. Thus, Werner states are steerable (by Alice) if $p>\sqrt{1+\sin2\delta/2}$ and when $\delta=3\pi/4$, we can detect the most steerable states $p>1/\sqrt{2}$ in this scenario which agrees with the result in Ref. \cite{saunders10}. For orthogonal measurement $\boldsymbol{X}^{A}=\boldsymbol{X}^{B}=(\frac{\mathds{1}}{\sqrt{2}},\frac{\sigma_{x}}{\sqrt{2}},\frac{\sigma_{y}}{\sqrt{2}},\frac{\sigma_{z}}{\sqrt{2}})^{\mathrm{T}}$, we have $\kappa_{A}\leq \sqrt{2},\kappa_{B}=1$ and $\|\mathcal{C}\|_{\mathrm{tr}}=(3p+1)/2$, which shows that Werner states are steerable (by Alice) if $p>(2\sqrt{2}-1)/3$. In Ref. \cite{wiseman07,li21}, authors show that Werner states are steerable iff $p>1/2$ with infinite settings. It is possible to detect the more steerable states if we employ the more observables. Our method provides alternative simple method for steering witness.

And then, we give the steering criteria based on the correlation matrix $\gamma$ and LUR. Via \cref{eq:steering_def} and the concavity of the variance, it is readily to reach the following inequality of the joint observable $X^{A}+X^{B}$ for the unsteerable state $\rho_{\mathrm{ns}}$
\begin{align}
V(\rho_{\mathrm{ns}},X^{A}+X^{B}) \geq \sum_{\xi}\wp_{\xi}\left[V(\wp(a|X^{A},\xi))+V(\rho_{\xi},X^{B})\right] \; ,
\end{align}
where $V(\wp(a|X^{A},\xi))$ is the variance of distribution $\wp(a|X^{A},\xi)$. Immediately, for arbitrary measurements $\boldsymbol{X}^{A}$ and $\boldsymbol{X}^{B}$, we have
\begin{align}
\mathcal{V}(&\rho_{\mathrm{ns}},\boldsymbol{X}^{A}+\boldsymbol{X}^{B}) \geq \sum_{\xi}\wp_{\xi}\left[\sum_{\mu}V(\wp(a_{\mu}|X_{\mu}^{A},\xi))+\mathcal{V}(\rho_{\xi},\boldsymbol{X}^{B})\right] \; .
\end{align}
Here, $\{a_{\mu}\}$ are eigenvalues of $X_{\mu}^{A}$. Since Alice's site can design any possible probability distribution, the incompatibility of the measurement $\boldsymbol{X}^{A}$ can reach zero i.e., $\sum_{\mu}V(\wp(a|X_{\mu}^{A},\xi))=0$. Thus, we have
\begin{align}
\mathcal{V}(\rho_{\mathrm{ns}},\boldsymbol{X}^{A}+\boldsymbol{X}^{B}) \geq \mathcal{S}_{B} \; .
\label{eq:steering_lur_crit}
\end{align}
Thus, we recover the steering criterion based on LUR in Ref. \cite{zhen16}, which is very similar to the entanglement case \cref{eq:lur_crit}. Accordingly, we also obtain the following more effective LUR steering criterion.
\begin{theorem}
For arbitrary measurements $\boldsymbol{X}^{A}$ and $\boldsymbol{X}^{B}$, the unsteerable states satisfy the following local uncertainty relation
\begin{align}
\min_{\boldsymbol{X}^{A}\in\mathcal{O}(\boldsymbol{X}^A),\boldsymbol{X}^{B}\in\mathcal{O}(\boldsymbol{X}^B)}\mathcal{V}(\rho,\boldsymbol{X}^{A}+\boldsymbol{X}^{B}) \geq \mathcal{S}_{B} \; ,
\label{eq:steering_opt_lur_crit}
\end{align}
otherwise, it will be steerable (by Alice). Here, $\mathcal{S}_{B}$ is the optimal uncertainty bound of the Bob's subsystem, that is, $\mathcal{S}_{B}=\min_{\rho^{B}}\mathcal{V}(\rho^{B},\boldsymbol{X}^{B})$.
\label{th:opt_steering_lur_crit}
\end{theorem}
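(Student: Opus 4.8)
The plan is to deduce \cref{th:opt_steering_lur_crit} from the single‑pair steering bound \cref{eq:steering_lur_crit} by quantifying it over orbit representatives and then using that $\mathcal{V}$, hence $\mathcal{S}_{B}$, is a measurement‑orbit invariant. Since \cref{eq:steering_lur_crit}, $\mathcal{V}(\rho_{\mathrm{ns}},\boldsymbol{X}^{A}+\boldsymbol{X}^{B})\ge\mathcal{S}_{B}$, holds for an \emph{arbitrary} pair of measurements on $H_{A}\otimes H_{B}$, I would first apply it to every rotated pair $\boldsymbol{Y}^{A}=O^{A}\boldsymbol{X}^{A}$, $\boldsymbol{Y}^{B}=O^{B}\boldsymbol{X}^{B}$ with $O^{A},O^{B}\in O(m)$, which gives
\[
\mathcal{V}\bigl(\rho_{\mathrm{ns}},\boldsymbol{Y}^{A}+\boldsymbol{Y}^{B}\bigr)\ \ge\ \mathcal{S}_{B}(\boldsymbol{Y}^{B}):=\min_{\rho^{B}}\mathcal{V}(\rho^{B},\boldsymbol{Y}^{B}).
\]
Here the $\mu$‑th component of $\boldsymbol{Y}^{A}+\boldsymbol{Y}^{B}$ is $\sum_{\nu}O^{A}_{\mu\nu}X^{A}_{\nu}\otimes\mathds{1}+\sum_{\nu}O^{B}_{\mu\nu}\mathds{1}\otimes X^{B}_{\nu}$, which in general is not an orthogonal image of the single bipartite measurement $\boldsymbol{X}^{A}+\boldsymbol{X}^{B}$, so the left‑hand side genuinely varies over $\mathcal{O}(\boldsymbol{X}^{A})\times\mathcal{O}(\boldsymbol{X}^{B})$ --- this is exactly the degree of freedom being optimized.

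The decisive observation is that the right‑hand side does \emph{not} vary: by part~2 of \cref{prop:variance_sum_prop}, $\mathcal{V}(\rho^{B},\cdot)$ is constant on $\mathcal{O}(\boldsymbol{X}^{B})$ for \emph{every} state $\rho^{B}$, so $\mathcal{V}(\rho^{B},\boldsymbol{Y}^{B})=\mathcal{V}(\rho^{B},\boldsymbol{X}^{B})$ and, minimizing over $\rho^{B}$, $\mathcal{S}_{B}(\boldsymbol{Y}^{B})=\mathcal{S}_{B}$. Hence $\mathcal{V}(\rho_{\mathrm{ns}},\boldsymbol{Y}^{A}+\boldsymbol{Y}^{B})\ge\mathcal{S}_{B}$ holds uniformly over $\mathcal{O}(\boldsymbol{X}^{A})\times\mathcal{O}(\boldsymbol{X}^{B})$, and since $O(m)\times O(m)$ is compact and $(\boldsymbol{Y}^{A},\boldsymbol{Y}^{B})\mapsto\mathcal{V}(\rho_{\mathrm{ns}},\boldsymbol{Y}^{A}+\boldsymbol{Y}^{B})$ is continuous, the minimum is attained and
\[
\min_{\boldsymbol{Y}^{A}\in\mathcal{O}(\boldsymbol{X}^{A}),\,\boldsymbol{Y}^{B}\in\mathcal{O}(\boldsymbol{X}^{B})}\mathcal{V}\bigl(\rho_{\mathrm{ns}},\boldsymbol{Y}^{A}+\boldsymbol{Y}^{B}\bigr)\ \ge\ \mathcal{S}_{B}.
\]
Taking the contrapositive yields the stated witness: if this minimum falls below $\mathcal{S}_{B}$, then $\rho$ admits no decomposition of the form \cref{eq:steering_def} and is steerable by Alice.

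I do not expect a genuine obstacle; this is the steering counterpart of the step from \cref{eq:lur_crit} to \cref{th:opt_lur_crit}, with $\mathcal{S}_{A}+\mathcal{S}_{B}$ replaced by $\mathcal{S}_{B}$ because Alice's deterministic response to the hidden variable can drive the $\boldsymbol{X}^{A}$ contribution to zero. The only point deserving careful phrasing is the orbit invariance of $\mathcal{S}_{B}$, which must be invoked at the level of all states $\rho^{B}$ before it can be commuted through $\min_{\rho^{B}}$. In writing it up it would also be natural to record the equivalent $\gamma$‑form, in parallel with \cref{th:univ_sep_cri_gamma}: combining \cref{eq:sep_gamma} with the orbit invariance of $\mathcal{V}(\rho_{A},\boldsymbol{X}^{A})$, $\mathcal{V}(\rho_{B},\boldsymbol{X}^{B})$ and the identity $\max\sum_{\mu}\gamma_{\mu\mu}=\|\gamma\|_{\mathrm{tr}}$ from \cref{appen:gamma_orbit_opt} turns the displayed inequality into $\|\gamma\|_{\mathrm{tr}}\le\tfrac12\bigl(\mathcal{V}(\rho_{A},\boldsymbol{X}^{A})+\mathcal{V}(\rho_{B},\boldsymbol{X}^{B})-\mathcal{S}_{B}\bigr)$ for unsteerable states.
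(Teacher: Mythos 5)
Your proposal is correct and follows essentially the same route as the paper: the single-pair bound \cref{eq:steering_lur_crit} holds for every pair of measurements on an unsteerable state, and since $\mathcal{V}(\rho^{B},\cdot)$ (hence $\mathcal{S}_{B}$) is constant on $\mathcal{O}(\boldsymbol{X}^{B})$ by \cref{prop:variance_sum_prop}, the bound is uniform over the product of orbits and therefore survives the minimization. Your added remarks — that $\boldsymbol{Y}^{A}+\boldsymbol{Y}^{B}$ under independent rotations is not an orthogonal image of $\boldsymbol{X}^{A}+\boldsymbol{X}^{B}$ (so the left side genuinely varies), and that orbit invariance of $\mathcal{S}_{B}$ must be established for all $\rho^{B}$ before commuting with the minimum — are correct clarifications of points the paper leaves implicit.
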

Similar to entanglement case, if we detect steering by \cref{eq:steering_lur_crit} straightforwardly, it is likely to fail. \cref{th:opt_steering_lur_crit} automatically optimizes measurements in the measurement orbits. In Ref. \cite{ji15,zhen16}, authors also derive steering criteria based on uncertainty relations and they show that how to optimize measurement of Alice's site which results in strong criteria. However, the optimization procedures rely on quantum state to be tested and when there is a lack of knowledge about quantum state it will be not effective enough. \cref{th:opt_steering_lur_crit} can automatically optimize measurements in the measurement orbits which ensures the effectiveness of this criterion in most cases even though quantum states are unknown. Equivalently, we can reformulate the criterion via correlation matrix $\gamma$, i.e.
\begin{align}
\|\gamma\|_{\mathrm{tr}} \leq \frac{\mathcal{V}(\rho_{A},\boldsymbol{X}^{A}) + \mathcal{V}(\rho_{B},\boldsymbol{X}^{B}) - \mathcal{S}_{B}}{2} \; .
\end{align}
It is possible to enhance the above steering criteria again if we allow to adjust Alice's measurement $\boldsymbol{X}^{A}$. To be specific, let Alice's observables contain real parameters $\xi_{\mu}$, i.e. $\xi_{\mu}X_{\mu}^{A}$ and then we have
\begin{align}
\gamma'_{\mu\nu} &= \braket{\xi_{\mu}X_{\mu}^{A}}\braket{X_{\nu}^{B}} - \braket{\xi_{\mu}X_{\mu}^{A}\otimes X_{\nu}^{B}} \\
&= \xi_{\mu}\gamma_{\mu\nu} \; ,
\end{align}
whose matrix form is $\gamma'=\Xi\circ\gamma$. Here, $\Xi=(\vec{\xi},\cdots,\vec{\xi})$ is a $m\times m$ matrix and $\vec{\xi}=(\xi_{1},\cdots,\xi_{m})^{\mathrm{T}}$. $A\circ B$ denotes Hadamard product, i.e $(A\circ B)_{\mu\nu}=A_{\mu\nu}B_{\mu\nu}$. The enhanced steering criterion follows.
\begin{theorem}
For arbitrary measurements $\boldsymbol{X}^{A}$ and $\boldsymbol{X}^{B}$, the correlation matrix $\gamma$ of the unsteerable states satisfies
\begin{align}
\|\Xi\circ\gamma\|_{\mathrm{tr}} \leq \frac{\mathcal{V}(\rho_{A},\vec{\xi}\circ\boldsymbol{X}^{A}) + \mathcal{V}(\rho_{B},\boldsymbol{X}^{B}) - \mathcal{S}_{B}}{2} \; ,
\end{align}
otherwise it is a steerable state (by Alice) and $\vec{\xi}\circ\boldsymbol{X}^{A}=(\xi_{1}X_{1}^{A},\cdots,\xi_{m}X_{m}^{A})^{\mathrm{T}}$.
\label{th:univ_steering_cri_gamma}
\end{theorem}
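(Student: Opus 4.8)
\emph{Proof proposal.} The plan is to recognize \cref{th:univ_steering_cri_gamma} as nothing more than the $\gamma$-form of the orbit-optimized steering criterion (\cref{th:opt_steering_lur_crit} together with the unnumbered inequality $\|\gamma\|_{\mathrm{tr}}\le\tfrac12[\mathcal{V}(\rho_A,\boldsymbol{X}^A)+\mathcal{V}(\rho_B,\boldsymbol{X}^B)-\mathcal{S}_B]$ that precedes it), applied to a \emph{rescaled} Alice measurement. First I would observe that for any real vector $\vec\xi$ the operators $\xi_\mu X_\mu^A$ are still Hermitian, so $\vec\xi\circ\boldsymbol{X}^A=(\xi_1X_1^A,\dots,\xi_mX_m^A)^{\mathrm T}$ is again a legitimate element of $\mathcal{M}_m^A$, while Bob's measurement is untouched. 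Since \cref{th:opt_steering_lur_crit} holds for \emph{arbitrary} measurement pairs, we are entitled to run it with $(\vec\xi\circ\boldsymbol{X}^A,\boldsymbol{X}^B)$ in place of $(\boldsymbol{X}^A,\boldsymbol{X}^B)$. The only thing to check here is that the underlying LUR inequality \cref{eq:steering_lur_crit} survives the rescaling: its derivation uses only that Alice can output a response distribution of vanishing spread, and relabelling her outcomes $a_\mu\mapsto\xi_\mu a_\mu$ does not affect this, so the right-hand bound is still $\mathcal{S}_B=\min_{\rho^B}\mathcal{V}(\rho^B,\boldsymbol{X}^B)$ — unchanged because it depends on Bob's measurement alone.

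Next I would identify how the relevant quantities transform under $X_\mu^A\mapsto\xi_\mu X_\mu^A$. By linearity of the expectation value, $\braket{\xi_\mu X_\mu^A}\braket{X_\nu^B}-\braket{\xi_\mu X_\mu^A\otimes X_\nu^B}=\xi_\mu\gamma_{\mu\nu}$, so the new correlation matrix is exactly $\Xi\circ\gamma$ with $\Xi=(\vec\xi,\dots,\vec\xi)$; and $V(\rho_A,\xi_\mu X_\mu^A)=\xi_\mu^2 V(\rho_A,X_\mu^A)$, so the new local uncertainty sum is $\mathcal{V}(\rho_A,\vec\xi\circ\boldsymbol{X}^A)$. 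Feeding these into the algebraic identity \cref{eq:sep_gamma} applied to the pair $(\vec\xi\circ\boldsymbol{X}^A,\boldsymbol{X}^B)$ gives
\begin{align}
2\sum_\mu(\Xi\circ\gamma)_{\mu\mu}=\mathcal{V}(\rho_A,\vec\xi\circ\boldsymbol{X}^A)+\mathcal{V}(\rho_B,\boldsymbol{X}^B)-\mathcal{V}\!\left(\rho,(\vec\xi\circ\boldsymbol{X}^A)+\boldsymbol{X}^B\right)\;,\notag
\end{align}
and combining with $\mathcal{V}(\rho_{\mathrm{ns}},(\vec\xi\circ\boldsymbol{X}^A)+\boldsymbol{X}^B)\ge\mathcal{S}_B$ from the previous step yields $2\sum_\mu(\Xi\circ\gamma)_{\mu\mu}\le\mathcal{V}(\rho_A,\vec\xi\circ\boldsymbol{X}^A)+\mathcal{V}(\rho_B,\boldsymbol{X}^B)-\mathcal{S}_B$ for every unsteerable $\rho$.

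Finally, to upgrade the diagonal sum to the trace norm I would optimize over the orbits $\mathcal{O}(\vec\xi\circ\boldsymbol{X}^A)$ and $\mathcal{O}(\boldsymbol{X}^B)$: the three variance terms on the right are orbit invariants by \cref{prop:variance_sum_prop}(2), whereas $\max\sum_\mu(\Xi\circ\gamma)_{\mu\mu}=\|\Xi\circ\gamma\|_{\mathrm{tr}}$ by the singular-value-alignment argument already used for \cref{th:univ_sep_cri_gamma} (see \cref{appen:gamma_orbit_opt}), now invoked for the pair $(\vec\xi\circ\boldsymbol{X}^A,\boldsymbol{X}^B)$. Taking the contrapositive gives the stated criterion. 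The proof is therefore short, and the only point needing genuine care — not calculation — is that a rotated measurement $O^A(\vec\xi\circ\boldsymbol{X}^A)$ is in general no longer of the product form $\vec\xi\,'\circ\boldsymbol{X}^A$; one simply notes this is irrelevant, since the argument uses only that it lies in $\mathcal{M}_m^A$ and in the same orbit, on which $\mathcal{V}(\rho_A,\cdot)$ is constant. Everything else is the bookkeeping of \cref{th:opt_steering_lur_crit,th:univ_sep_cri_gamma} transcribed verbatim.
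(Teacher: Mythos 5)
Your proposal is correct and follows essentially the same route as the paper: the paper also obtains \cref{th:univ_steering_cri_gamma} by substituting the rescaled Alice measurement $\vec{\xi}\circ\boldsymbol{X}^{A}$ into the $\gamma$-form of the orbit-optimized steering LUR criterion, noting that the correlation matrix becomes $\Xi\circ\gamma$, Alice's variance sum becomes $\mathcal{V}(\rho_{A},\vec{\xi}\circ\boldsymbol{X}^{A})$, and the bound $\mathcal{S}_{B}$ is unaffected since it depends only on Bob's measurement. Your extra remarks (that the derivation of \cref{eq:steering_lur_crit} is insensitive to the rescaling, and that the orbit optimization via \cref{appen:gamma_orbit_opt} only needs orbit invariance of the variance sums, not preservation of the product form) are exactly the points the paper leaves implicit.
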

If only consider the single parameter, i.e. $\xi_{1}=\cdots=\xi_{m}=\xi$, we have the following simple corollary.
\begin{corollary}
For arbitrary measurements $\boldsymbol{X}^{A}$ and $\boldsymbol{X}^{B}$, the correlation matrix $\gamma$ of the unsteerable states satisfies
\begin{align}
\|\gamma\|_{\mathrm{tr}} \leq \frac{\xi^2\mathcal{V}(\rho_{A},\boldsymbol{X}^{A}) + \mathcal{V}(\rho_{B},\boldsymbol{X}^{B}) - \mathcal{S}_{B}}{2\xi} \; ,
\end{align}
otherwise it is a steerable state (by Alice) and $\xi$ is a positive real parameter.
\label{coro:univ_steering_cri_gamma}
\end{corollary}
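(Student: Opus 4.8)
The plan is to derive this corollary directly as the single-parameter specialization of \cref{th:univ_steering_cri_gamma}. First I would set $\xi_{1}=\cdots=\xi_{m}=\xi$ with $\xi$ a fixed positive real number in the statement of \cref{th:univ_steering_cri_gamma}. Under this choice the auxiliary matrix $\Xi=(\vec{\xi},\cdots,\vec{\xi})$ becomes the constant $m\times m$ matrix all of whose entries equal $\xi$, so the Hadamard product collapses to an ordinary scalar multiple, $\Xi\circ\gamma=\xi\gamma$; since $\xi>0$ and the trace norm is positively homogeneous, this gives $\|\Xi\circ\gamma\|_{\mathrm{tr}}=\xi\|\gamma\|_{\mathrm{tr}}$.

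Next I would simplify the right-hand side of \cref{th:univ_steering_cri_gamma}. With all $\xi_{\mu}$ equal, the rescaled measurement reads $\vec{\xi}\circ\boldsymbol{X}^{A}=(\xi X_{1}^{A},\cdots,\xi X_{m}^{A})^{\mathrm{T}}=\xi\boldsymbol{X}^{A}$, and because the variance scales quadratically under rescaling of the observable, $V(\rho_{A},\xi X_{\mu}^{A})=\xi^{2}V(\rho_{A},X_{\mu}^{A})$, summing over $\mu$ yields $\mathcal{V}(\rho_{A},\vec{\xi}\circ\boldsymbol{X}^{A})=\xi^{2}\mathcal{V}(\rho_{A},\boldsymbol{X}^{A})$. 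Substituting both simplifications into the inequality of \cref{th:univ_steering_cri_gamma} gives
\[
\xi\,\|\gamma\|_{\mathrm{tr}} \leq \frac{\xi^{2}\mathcal{V}(\rho_{A},\boldsymbol{X}^{A}) + \mathcal{V}(\rho_{B},\boldsymbol{X}^{B}) - \mathcal{S}_{B}}{2} \; ,
\]
and dividing through by the positive number $\xi$ produces exactly the claimed bound; the "otherwise steerable" clause is inherited verbatim from \cref{th:univ_steering_cri_gamma}.

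There is essentially no substantive obstacle here — the corollary is a pure specialization — so the only points requiring a little care are (i) insisting $\xi>0$, which is what permits pulling $\xi$ out of the trace norm with the correct sign and dividing the inequality without reversing it, and (ii) the quadratic scaling of the variance, which is immediate from $V(\rho,cX)=\tr[\rho c^{2}X^{2}]-\tr[\rho cX]^{2}=c^{2}V(\rho,X)$. If desired, I would append the observation that the resulting bound has the form $\tfrac{1}{2}(\xi a + b/\xi)$ with $a=\mathcal{V}(\rho_{A},\boldsymbol{X}^{A})$ and $b=\mathcal{V}(\rho_{B},\boldsymbol{X}^{B})-\mathcal{S}_{B}$, so minimizing over $\xi>0$ via the AM--GM inequality (optimum at $\xi=\sqrt{b/a}$) yields the parameter-free sufficient condition $\|\gamma\|_{\mathrm{tr}}\leq\sqrt{\mathcal{V}(\rho_{A},\boldsymbol{X}^{A})\big(\mathcal{V}(\rho_{B},\boldsymbol{X}^{B})-\mathcal{S}_{B}\big)}$ for unsteerability.
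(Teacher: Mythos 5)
Your proposal is correct and follows exactly the route the paper intends: the corollary is stated as the single-parameter specialization of \cref{th:univ_steering_cri_gamma}, obtained by setting $\xi_{1}=\cdots=\xi_{m}=\xi>0$, using $\|\Xi\circ\gamma\|_{\mathrm{tr}}=\xi\|\gamma\|_{\mathrm{tr}}$ and $\mathcal{V}(\rho_{A},\xi\boldsymbol{X}^{A})=\xi^{2}\mathcal{V}(\rho_{A},\boldsymbol{X}^{A})$, and dividing by $2\xi$. Your added remark that optimizing over $\xi$ via AM--GM gives the parameter-free bound $\|\gamma\|_{\mathrm{tr}}\leq\sqrt{\mathcal{V}(\rho_{A},\boldsymbol{X}^{A})\left(\mathcal{V}(\rho_{B},\boldsymbol{X}^{B})-\mathcal{S}_{B}\right)}$ is a correct and useful observation beyond what the paper states.
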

Considering that quantum states with normal form \cref{eq:normal_form}, they have the maximally disordered subsystems, i.e. $\rho_{A}=\rho_{B}=\frac{1}{d}\mathds{1}$, which implies $\mathcal{V}(\rho_{A},\vec{\xi}\circ\boldsymbol{X}^{A})=2\sum_{\mu}\xi_{\mu}^{2}(\hat{n}_{\mu}^{A})^{2}/d$, $\mathcal{V}(\rho_{B},\boldsymbol{X}^{B})=2\sum_{\mu}(\hat{n}_{\mu}^{B})^{2}/d$ and $X_{\mu}^{A}=\vec{n}_{\mu}^{A}\cdot \boldsymbol{\Pi}=n_{0\mu}^{A}\Pi_{0} + \hat{n}_{\mu}^{A}\cdot\vec{\pi}$, $X_{\mu}^{B}=\vec{n}_{\mu}^{B}\cdot \boldsymbol{\Pi}=n_{0\mu}^{B}\Pi_{0} + \hat{n}_{\mu}^{B}\cdot\vec{\pi}$. Therefore, there is corollary as follows.
\begin{corollary}
For arbitrary measurements $\boldsymbol{X}^{A}$ and $\boldsymbol{X}^{B}$, the correlation matrix $\gamma$ of the unsteerable states with normal form \cref{eq:normal_form} satisfies
\begin{align}
\|\Xi\circ\gamma\|_{\mathrm{tr}} \leq \frac{\displaystyle 2\sum_{\mu=1}^{m}\xi_{\mu}^{2}(\hat{n}_{\mu}^{A})^{2} + 2\sum_{\mu=1}^{m}(\hat{n}_{\mu}^{B})^{2} - d\mathcal{S}_{B}}{2d} \; ,
\end{align}
otherwise it is a steerable state (by Alice).
\label{coro:slocc_steering_cri_gamma}
\end{corollary}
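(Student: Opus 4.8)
The plan is to obtain \cref{coro:slocc_steering_cri_gamma} as an immediate specialization of \cref{th:univ_steering_cri_gamma} to states already in the normal form \cref{eq:normal_form}. The only computation needed is the value of the variance-sum functional $\mathcal{V}(\,\cdot\,,\,\cdot\,)$ on the maximally mixed state, because a normal-form state satisfies $\rho_{A}=\rho_{B}=\frac{1}{d}\mathds{1}$ by construction, so both of the state-dependent terms appearing in \cref{th:univ_steering_cri_gamma} are determined purely by the measurements.

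First I would invoke item 1 of \cref{prop:variance_sum_prop}: $\mathcal{V}$ is blind to the trace part of the operators, so in $\vec{\xi}\circ\boldsymbol{X}^{A}$ I may replace $\xi_{\mu}X_{\mu}^{A}$ by its traceless part $\xi_{\mu}\hat{n}_{\mu}^{A}\cdot\vec{\pi}$. On $\rho_{A}=\frac{1}{d}\mathds{1}$ the expectation $\braket{\xi_{\mu}X_{\mu}^{A}}$ vanishes since $\tr\pi_{k}=0$, hence $V(\tfrac{1}{d}\mathds{1},\xi_{\mu}X_{\mu}^{A})=\tfrac{1}{d}\tr[(\xi_{\mu}\hat{n}_{\mu}^{A}\cdot\vec{\pi})^{2}]=\tfrac{2}{d}\xi_{\mu}^{2}|\hat{n}_{\mu}^{A}|^{2}$ using $\tr[\pi_{k}\pi_{l}]=2\delta_{kl}$; summing over $\mu$ gives $\mathcal{V}(\tfrac{1}{d}\mathds{1},\vec{\xi}\circ\boldsymbol{X}^{A})=\tfrac{2}{d}\sum_{\mu}\xi_{\mu}^{2}(\hat{n}_{\mu}^{A})^{2}$, and the same calculation with every $\xi_{\mu}=1$ on Bob's side yields $\mathcal{V}(\tfrac{1}{d}\mathds{1},\boldsymbol{X}^{B})=\tfrac{2}{d}\sum_{\mu}(\hat{n}_{\mu}^{B})^{2}$. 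Equivalently, this is just \cref{eq:uncer_equality} evaluated with $\vec{x}=0$.

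Plugging these two values into the bound of \cref{th:univ_steering_cri_gamma}, namely $\|\Xi\circ\gamma\|_{\mathrm{tr}}\le\frac12\bigl(\mathcal{V}(\rho_{A},\vec{\xi}\circ\boldsymbol{X}^{A})+\mathcal{V}(\rho_{B},\boldsymbol{X}^{B})-\mathcal{S}_{B}\bigr)$, and clearing the common factor $1/d$ from the numerator, produces exactly $\|\Xi\circ\gamma\|_{\mathrm{tr}}\le\bigl(2\sum_{\mu}\xi_{\mu}^{2}(\hat{n}_{\mu}^{A})^{2}+2\sum_{\mu}(\hat{n}_{\mu}^{B})^{2}-d\,\mathcal{S}_{B}\bigr)/(2d)$; a violation therefore certifies steerability by Alice, exactly as in \cref{th:univ_steering_cri_gamma}.

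There is essentially no analytic obstacle: the corollary is a substitution, and indeed the paragraph preceding it already records the two variance evaluations. The only point meriting a word of care is conceptual rather than computational — confirming that passing to the normal form \cref{eq:normal_form} costs nothing for the intended use, i.e. that bringing a state to its normal form by local invertible (SLOCC) filters neither creates nor destroys steerability by Alice, so that detecting steering of $\tilde{\rho}$ certifies steering of the original $\rho$. Apart from this remark, the proof consists of \cref{th:univ_steering_cri_gamma} together with the two-line trace computation above.
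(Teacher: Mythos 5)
Your proposal is correct and follows essentially the same route as the paper: the corollary is obtained by substituting the normal form's maximally mixed marginals $\rho_{A}=\rho_{B}=\frac{1}{d}\mathds{1}$ into \cref{th:univ_steering_cri_gamma}, with the two variance evaluations $\mathcal{V}(\tfrac{1}{d}\mathds{1},\vec{\xi}\circ\boldsymbol{X}^{A})=\tfrac{2}{d}\sum_{\mu}\xi_{\mu}^{2}(\hat{n}_{\mu}^{A})^{2}$ and $\mathcal{V}(\tfrac{1}{d}\mathds{1},\boldsymbol{X}^{B})=\tfrac{2}{d}\sum_{\mu}(\hat{n}_{\mu}^{B})^{2}$ carried out exactly as in the paragraph preceding the corollary. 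Your closing remark on SLOCC filters concerns only the intended application to general states, not the validity of the statement as given.
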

And then, we test the strength of our steering criteria via some specific examples.

\emph{Random two-qubit states with the Hilbert-Schmidt distribution.} If a square random matrix $A$ whose entries are drawn independently according to the standard complex Gaussian distribution, then the density matrix
$\rho = AA^{\dagger}/\tr[AA^{\dagger}]$ obeys the Hilbert-Schmidt distribution \cite{shang18}. To test \cref{coro:univ_steering_cri_gamma}, we choose measurement $\boldsymbol{X}^{A}=\boldsymbol{X}^{B}=\vec{\sigma}$. \cref{fig:random_qubits_cg} illustrates the results on 50 000 two-qubit states which are generated randomly according to the Hilbert-Schmidt distribution. As can be seen, the strength of criterion depends on parameter $\xi$ and the most about $4\%$ steerable states can be detected with parameter about $0.54$.
\begin{figure}
\centering
\includegraphics[width=0.6\linewidth]{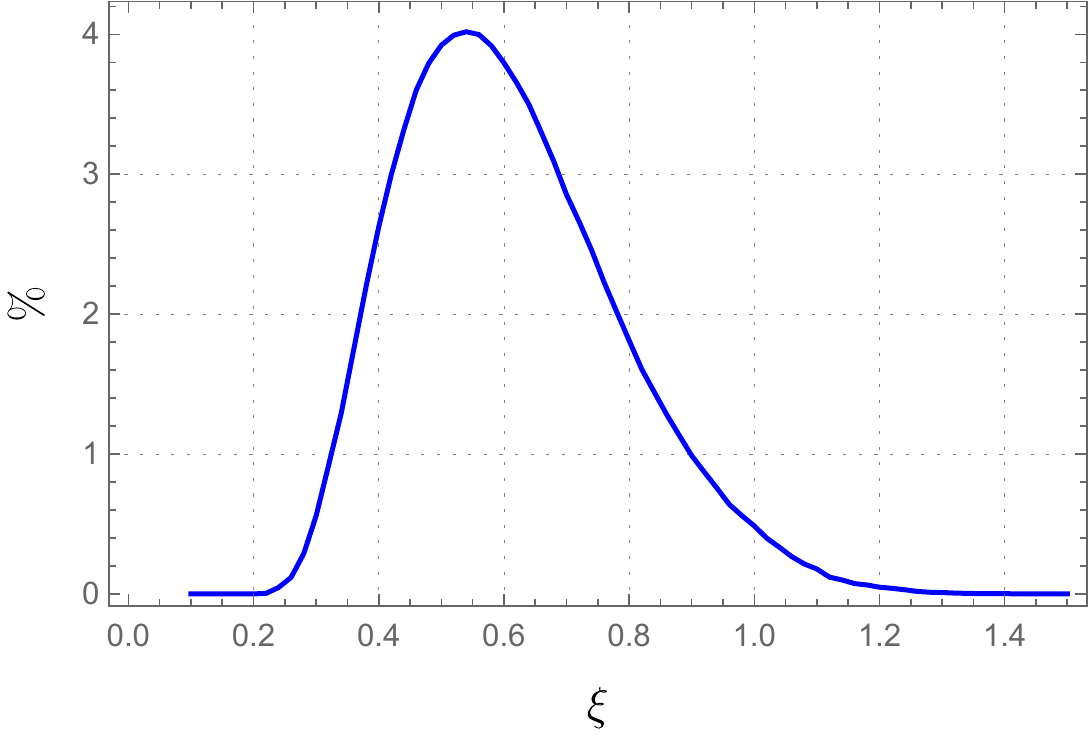}
\caption{Randomly generated two-qubit states. The plot illustrates the strength of \cref{coro:univ_steering_cri_gamma} on 50 000 two-qubit states which are generated randomly according to the Hilbert-Schmidt distribution. The strength of criterion depends on parameter $\xi$ and the most about $4\%$ steerable states can be detected with parameter about $0.54$.}
\label{fig:random_qubits_cg}
\end{figure}

\emph{Bell diagonal states.} We consider the following Bell diagonal states \cite{horodecki96-1tBDK}
\begin{align}
\rho_{\mathrm{BD}} = \frac{1}{4}\left(\mathds{1}\otimes \mathds{1} + \sum_{\mu=1}^{3}t_{\mu}\sigma_{\mu}\otimes \sigma_{\mu}\right) \; .
\label{eq:belldiag}
\end{align}
We still choose Pauli matrix $\boldsymbol{X}^{A}=\boldsymbol{X}^{B}=\vec{\sigma}$ and let $\xi_{\mu}=t_{\mu}$, then we have $\|\Xi\circ\gamma\|_{\mathrm{tr}}=\sum_{\mu=1}^{3}t_{\mu}^{2}$. Via \cref{coro:slocc_steering_cri_gamma}, if $\rho_{\mathrm{BD}}$ is steerable (by Alice), then we have
$t_{1}^2+t_{2}^2+t_{3}^2> 1$.

\emph{Isotropic states.} Via the Bloch representation, isotropic states can be reformulated as \cite{yang21}
\begin{align}
\rho_{\mathrm{I}} = \frac{1}{d^2} \mathds{1}\otimes \mathds{1} + \frac{1}{4} \sum_{\mu=1}^{d^2-1} \frac{2\eta}{d} \pi_{\mu} \otimes \pi_{\mu}^{\mathrm{T}} \; .
\label{eq:bloch_isotropic}
\end{align}
Here, $0<\eta\leq 1$. Choosing measurement $\boldsymbol{X}^{A}=\boldsymbol{X}^{B}=\vec{\pi}$, then we have $\mathcal{V}(\rho,\vec{\pi})=2(d-\tr[\rho^2])$. Let $\xi=\eta$, then $\rho_{\mathrm{I}}$ is steerable (by Alice), if $\|\gamma\|_{\mathrm{tr}}=2\eta(d^2-1)/d>\left(\eta^2(d^2-1)+(d-1)\right)/d\eta$, i.e. $\eta >1/\sqrt{d+1}$ which agrees with the steering criteria from entropic uncertainty relation \cite{costa18}.

\section{Conclusions}
\noindent
In this paper, we consider the entanglement and steering detection in the perspective of measurement. Criteria for Entanglement are established based on the correlation matrices $\mathcal{C}$ and $\gamma$ of arbitrary measurements, which in some sense provides a unified framework for the study of entanglement. The concept of measurement orbit is proposed, which enables the understanding of entanglement detection more transparent. As a simple corollary of the correlation matrix $\mathcal{C}$ criterion, we find ESIC and CCNR criteria are equivalent in normal form. The criteria via correlation matrix $\mathcal{C}$ are found equivalent to the entanglement witness, which paves a way to construct entanglement witness for arbitrary measurement. Variance uncertainty relation is obtained by means of measurement orbit, applicable to the development of effective entanglement criteria. For qudit system, we introduce the concept of SCM, a generalization of SIC-POVM, employed to reconstruct quantum state analytically. With measurement orbit, the equivalence of the entanglement criteria from matrix $\gamma$ and LUR is proved. Naturally, these methods can be transplanted to the steering detection. Steering criteria based on the correlation matrices $\mathcal{C}$ and $\gamma$ are obtained, and the equivalent LUR criteria are established. To detect steering effectively, performing proper measurement is crucial. The criteria in this work can automatically optimize the measurement, even without the knowledge of quantum state.

\section*{Acknowledgements}
\noindent
This work was supported in part by the National Natural Science Foundation of China(NSFC) under the Grants 11975236 and by the University of Chinese Academy of Sciences.

\section*{Author Contributions}
\noindent
All authors have equally contributed to the main result, the examples and the writing. All authors have given approval for the final version of the manuscript.

\section*{Competing Interests}
\noindent
The authors declare no competing interests.

\section*{Data Availability}
\noindent
All relevant data used for Examples and Figs. are available from the authors.

\appendix
\section{The proof of \cref{prop:state_rep_property}}\label{appen:state_rep_property}
\begin{proof}
---The convexity of the set $\mathcal{B}(\boldsymbol{X})$ is related to the convexity of quantum state $\rho$. We know that the set of quantum states is the convex hull of pure states $\ket{\psi_{k}}\bra{\psi_{k}}$, i.e.
\begin{align}
\rho = \sum_{k}p_{k}\ket{\psi_{k}}\bra{\psi_{k}} \; .
\end{align}
Here, $\sum_{k}p_{k}=1$ and $p_{k}>0$. So, we have $x_{\mu}=\sum_{k}p_{k}\braket{\psi_{k}|X_{\mu}|\psi_{k}}$ and $|\vec{x}|=\sqrt{\sum_{\mu}\tr[\rho X_{\mu}]^2}<\infty$, which demonstrates that $\mathcal{B}(\boldsymbol{X})$ is bounded by a convex hull constructed by vectors $\vec{x}$ of corresponding to pure states. Assuming that $X'_{\mu}=\sum_{\nu}O_{\mu\nu}X_{\nu}$, we have
\begin{align}
x'_{\mu} &= \tr[\rho X'_{\mu}] = \sum_{\nu}O_{\mu\nu}\tr[\rho X_{\nu}] \\
&= \sum_{\nu}O_{\mu\nu}x_{\nu} \; ,
\end{align}
whose matrix form is $\vec{x}'=O\,\vec{x}$. Thus, $\mathcal{B}(\boldsymbol{X}')=\{\vec{x}'=O\,\vec{x}|x_{\mu}=\tr[\rho X_{\mu}],\forall \rho\}$.
\end{proof}

\section{The proof of \cref{eq:purity_equality}}\label{appen:purity_equality}
\noindent
The Eq. (15) in the main text is
\begin{align}
\tr[\rho^2] = \vec{\omega}^{\mathrm{T}}\Omega\vec{\omega} = \vec{x}^{\mathrm{T}}\Omega^{-}\vec{x} \; .
\end{align}
\begin{proof}
---A density matrix can be expanded by $X_{\mu}$ linearly
\begin{align}
\rho = \sum_{\mu}\omega_{\mu}X_{\mu} \; .
\end{align}
Thus
\begin{align}
\tr[\rho^2] &= \sum_{\mu\nu}\omega_{\mu}\tr[X_{\mu}X_{\nu}]\omega_{\nu} \\
&= \sum_{\mu\nu}\omega_{\mu}\Omega_{\mu\nu}\omega_{\nu} \\
&= \vec{\omega}^{\mathrm{T}}\Omega\vec{\omega} \; .
\end{align}
Here, $\Omega_{\mu\nu}=\tr[X_{\mu}X_{\nu}]$. $\vec{\omega}$ can be viewed as the general solution of linear equation $\Omega \vec{\omega}=\vec{x}$ which can be obtained via the Moore-Penrose inverse $\Omega^{-}$ \cite{penrose55,ben-israel03}
\begin{align}
\vec{\omega} = \Omega^{-}\vec{x} + (\mathds{1}-\Omega^{-}\Omega)\vec{x}_{0} \; .
\end{align}
Here, $\Omega^{-}$ is referred as the Moore-Penrose inverse of $\Omega$ and satisfies the following Moore-Penrose conditons \cite{penrose55}
\begin{align}
\label{eq:moore_penrose_cond1}
&\Omega\Omega^{-}\Omega = \Omega \; , \\
\label{eq:moore_penrose_cond2}
&\Omega^{-}\Omega\Omega^{-} = \Omega^{-} \; , \\
\label{eq:moore_penrose_cond3}
&(\Omega\Omega^{-})^{\dagger} = \Omega\Omega^{-} \; , \\
\label{eq:moore_penrose_cond4}
&(\Omega^{-}\Omega)^{\dagger} = \Omega^{-}\Omega \; .
\end{align}
It is noted that the Moore-Penrose inverse $\Omega^{-}$ satisfying the four conditions is unique. In light of these conditions of the Moore-Penrose inverse $\Omega^{-}$, we have
\begin{align}
\tr[\rho^2] &= \vec{\omega}^{\mathrm{T}}\Omega\vec{\omega} \notag \\
&= \left(\Omega^{-}\vec{x} + (\mathds{1}-\Omega^{-}\Omega)\vec{x}_{0}\right)^{\mathrm{T}}\Omega\left(\Omega^{-}\vec{x} + (\mathds{1}-\Omega^{-}\Omega)\vec{x}_{0}\right) \notag \\
&= \left(\vec{x}^{\mathrm{T}}\Omega^{-} + \vec{x}_{0}^{\mathrm{T}}(\mathds{1}-\Omega\Omega^{-})\right)\Omega\left(\Omega^{-}\vec{x} + (\mathds{1}-\Omega^{-}\Omega)\vec{x}_{0}\right) \notag \\
&= \vec{x}^{\mathrm{T}}\Omega^{-}\Omega\Omega^{-}\vec{x} + \vec{x}^{\mathrm{T}}\Omega^{-}\Omega(\mathds{1}-\Omega^{-}\Omega)\vec{x}_{0} +  \notag \\
&\vec{x}_{0}^{\mathrm{T}}(\mathds{1}-\Omega\Omega^{-})\Omega\Omega^{-}\vec{x} + \vec{x}_{0}^{\mathrm{T}}(\mathds{1}-\Omega\Omega^{-})\Omega(\mathds{1}-\Omega^{-}\Omega)\vec{x}_{0} \notag \\
&= \vec{x}^{\mathrm{T}}\Omega^{-}\vec{x}
\end{align}
Here, we also make use of $\Omega=\Omega^{\mathrm{T}}$ and $(\Omega^{-})^{\mathrm{T}}=(\Omega^{\mathrm{T}})^{-}$.
\end{proof}

\section{The proof of \cref{prop:variance_sum_prop}}\label{appen:variance_sum_prop}
\begin{proof}
---The first conclusion can be straightforwardly proved by definition of variance. For the second one, considering that two measurements $\boldsymbol{X},\boldsymbol{X}'$ and $X'_{\mu}=\sum_{\nu}O_{\mu\nu}X_{\nu}$, we have
\begin{align}
\sum_{\mu}\tr[\rho X_{\mu}'^2] &= \sum_{\mu}\tr[\rho\sum_{\nu\nu'}O_{\mu\nu}O_{\mu\nu'}X_{\nu}X_{\nu'}] \\
&= \sum_{\nu\nu'}\tr[\rho\sum_{\mu}O_{\mu\nu}O_{\mu\nu'}X_{\nu}X_{\nu'}] \\
&= \sum_{\nu\nu'}\tr[\rho\delta_{\nu\nu'}X_{\nu}X_{\nu'}] \\
&= \sum_{\nu}\tr[\rho X_{\nu}^2] \; .
\end{align}
Here, $OO^{\mathrm{T}}=\mathds{1}$. Similarly, we have $\sum_{\mu}\tr[\rho X_{\mu}]^2=\sum_{\mu}\tr[\rho X'_{\mu}]^2$. Thus,
\begin{align}
\mathcal{V}(\rho,\boldsymbol{X}) = \mathcal{V}(\rho,\boldsymbol{X}') \; .
\end{align}
The third one is due to the additivity of variance i.e.
\begin{align}
V&\left(\bigotimes_{k}\rho^{(k)},\sum_{i}X^{(i)}\right) \notag \\
=& \tr\left[\bigotimes_{k}\rho^{(k)}\left(\sum_{i}X^{(i)}\right)^2\right] - \tr\left[\bigotimes_{k}\rho^{(k)}\sum_{i}X^{(i)}\right]^2 \notag \\
=& \sum_{ij}\tr\left[\bigotimes_{k}\rho^{(k)}X^{(i)}X^{(j)}\right] - \notag \\ &\sum_{ij}\tr\left[\bigotimes_{k}\rho^{(k)}X^{(i)}\right]\tr\left[\bigotimes_{k}\rho^{(k)}X^{(j)}\right] \notag \\
=& \sum_{i}\tr\left[\rho^{(i)}(X^{(i)})^2\right] + \sum_{i\neq j}\tr\left[\bigotimes_{k}\rho^{(k)}X^{(i)}X^{(j)}\right] \notag \\
& - \sum_{i}\tr\left[\rho^{(i)}X^{(i)}\right]^2 - \sum_{i\neq j}\tr\left[\bigotimes_{k}\rho^{(k)}X^{(i)}X^{(j)}\right] \notag \\
=& \sum_{i}V(\rho^{(i)},X^{(i)}) \; .
\end{align}
The fourth one is due to the concavity of variance i.e.
\begin{align}
V\left(\sum_{i}p_i\rho_i,X\right) &= \sum_{i}p_{i}\mathrm{Tr}[\rho_{i} X^2] - \left(\sum_{i}p_{i}\mathrm{Tr}[\rho_{i} X]\right)^2 \notag \\
&\geq \sum_{i}p_{i}\mathrm{Tr}[\rho_{i} X^2] - \sum_{i}p_{i}\mathrm{Tr}[\rho_{i} X]^2 \notag \\
&= \sum_{i}p_{i}V\left(\rho_{i},X\right)
\end{align}
The inequality is due to the nonnegativity of the variance, that is, $\sum_{i}p_{i}\alpha_{i}^2-(\sum_{i}p_{i}\alpha_{i})^2\geq 0$, and $\alpha_{i}=\mathrm{Tr}[\rho_{i} X]$.
\end{proof}

\section{The proof of \cref{prop:scm_prope}}\label{appen:scm_prope}
\noindent
Firstly, we give an interesting lemma which is used in the next proof.
\begin{lemma}
Provided that we have a matrix in such a form
\begin{align}
A=\left(
\begin{matrix}
\vec{a}_{1} & \vec{a}_{2} & \cdots & \vec{a}_{d+1} \\
1/\sqrt{d+1} & 1/\sqrt{d+1} & \cdots & 1/\sqrt{d+1} \\
\end{matrix}
\right) \; .
\end{align}
Here, $\vec{a}_{i}$ are $d$-dimensional real column vectors and $|\vec{a}_{i}|=\sqrt{\frac{d}{d+1}}$. Then, $A$ is invertible if and only if $\{\vec{a}_{1}, \vec{a}_{2}, \cdots, \vec{a}_{d+1}\}$ configures a $d$-simplex and $A$ is orthogonal if and only if $\{\vec{a}_{1}, \vec{a}_{2}, \cdots, \vec{a}_{d+1}\}$ configures a regular $d$-simplex.
\label{lem:simplex_orth}
\end{lemma}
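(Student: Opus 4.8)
The plan is to work directly with the $(d+1)\times(d+1)$ matrix $A$, with its columns $w_i := (\vec{a}_i^{\mathrm{T}},\, 1/\sqrt{d+1})^{\mathrm{T}}\in\mathbb{R}^{d+1}$, and with the Gram matrix $A^{\mathrm{T}}A$, whose $(i,j)$ entry is $\vec{a}_i\cdot\vec{a}_j + \tfrac{1}{d+1}$. The normalization $|\vec{a}_i|^2=\tfrac{d}{d+1}$ already makes every diagonal entry of $A^{\mathrm{T}}A$ equal to $1$, so the whole statement will be governed by the off-diagonal inner products $\vec{a}_i\cdot\vec{a}_j$.

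For the invertibility claim I would observe that a linear relation $\sum_i c_i w_i = 0$ has last coordinate $\tfrac{1}{\sqrt{d+1}}\sum_i c_i = 0$, hence $\sum_i c_i = 0$, together with first-$d$ coordinates $\sum_i c_i\vec{a}_i = 0$; these two together say exactly that $\{\vec{a}_i\}$ admits a nontrivial affine dependence. Thus the columns of $A$ are linearly independent (i.e. $A$ is invertible) if and only if the points $\vec{a}_1,\dots,\vec{a}_{d+1}$ are affinely independent, which is precisely the statement that they configure a (non-degenerate) $d$-simplex.

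For the orthogonality claim the forward direction is immediate: $A^{\mathrm{T}}A=\mathds{1}$ forces $\vec{a}_i\cdot\vec{a}_j = -\tfrac{1}{d+1}$ for $i\neq j$, whence every squared edge length $|\vec{a}_i-\vec{a}_j|^2 = \tfrac{2d}{d+1}+\tfrac{2}{d+1}=2$ is the same, and since an orthogonal matrix is invertible the points already form a $d$-simplex by the first part — an equilateral simplex being a regular one. For the converse, a regular $d$-simplex has all edge lengths equal, so $\vec{a}_i\cdot\vec{a}_j = \tfrac{1}{2}(|\vec{a}_i|^2+|\vec{a}_j|^2-|\vec{a}_i-\vec{a}_j|^2)$ is a single constant $c$ for every $i\neq j$; it remains only to show $c=-\tfrac{1}{d+1}$, after which $A^{\mathrm{T}}A=\mathds{1}$ and $A$ is orthogonal. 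I would pin $c$ down by noting that the Gram matrix of the $\vec{a}_i$ equals $(\tfrac{d}{d+1}-c)\mathds{1}+cJ$, with $J$ the all-ones matrix, and has rank $\le d$ because the $d+1$ vectors live in $\mathbb{R}^d$, hence is singular; since $\tfrac{d}{d+1}-c\neq 0$ (otherwise all $\vec{a}_i$ coincide, contradicting non-degeneracy) the remaining eigenvalue $\tfrac{d}{d+1}+dc$ must vanish, giving $c=-\tfrac{1}{d+1}$. Equivalently, equal norms make the origin equidistant from all vertices, hence the circumcenter, hence (for a regular simplex) the centroid, so $\sum_i\vec{a}_i=0$ and expanding $0=|\sum_i\vec{a}_i|^2 = d + d(d+1)c$ yields the same value.

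The only genuinely non-routine point is fixing the constant $c$ in the converse direction; everything else is bookkeeping with the block structure of $A$ and with edge lengths. The dimension count — $d+1$ equinorm vectors in $\mathbb{R}^d$ — is exactly what is needed there, entering either through the singularity of the $(d+1)\times(d+1)$ Gram matrix or through the uniqueness of the circumcenter.
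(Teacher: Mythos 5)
Your proof is correct and takes essentially the same route as the paper: invertibility is reduced to affine independence of the points $\vec{a}_i$ via the kernel of $A$, and orthogonality is read off from the Gram matrix $A^{\mathrm{T}}A$ whose off-diagonal entries are $\vec{a}_i\cdot\vec{a}_j+\tfrac{1}{d+1}$. If anything, you fill in steps the paper leaves implicit, namely the direction that orthogonality of $A$ forces a regular simplex, and the derivation of $\vec{a}_i\cdot\vec{a}_j=-\tfrac{1}{d+1}$ from regularity via the singular Gram matrix (or the centroid argument), where the paper simply quotes $\cos\theta=-1/d$.
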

\begin{proof}
---If $A$ is invertible, the homogeneous linear equation group $A\vec{x} = 0$ has only null solution, that is
\begin{align}
\sum_{i=1}^{d+1}x_{i}\vec{a}_{i} = 0 \; , \; \sum_{i=1}^{d+1}x_{i} = 0 \; ,
\end{align}
if and only if $x_{1} = x_{2} = \cdots = x_{d+1} = 0$, which is equivalent to $\vec{a}_{1}, \vec{a}_{2}, \cdots, \vec{a}_{d+1}$ are affine independent and they constitute a $d$-simplex \cite{boyd06}. And if $\{\vec{a}_{1}, \vec{a}_{2}, \cdots , \vec{a}_{d+1}\}$ configures a regular $d$-simplex,
the angles $\theta$ between any two vectors of $\{\vec{a}_{1}, \vec{a}_{2}, \cdots, \vec{a}_{d+1}\}$ satisfy $\cos\theta = \frac{-1}{d}$. Then, it is readily to check $A^{\mathrm{T}}A=\mathds{1}$ as follow
\begin{align}
[A^{\mathrm{T}}A]_{ij} &= \vec{a}_{i}\cdot\vec{a}_{j} + \frac{1}{d+1} = \delta_{ij} \; .
\end{align}
Here, we have used $|\vec{a}_{i}|^2+\frac{1}{d+1}=1$ and $\vec{a}_{i}\cdot\vec{a}_{j}+\frac{1}{d}=\frac{d}{d+1}\cos\theta+\frac{1}{d+1}=0, i\neq j$. On the contrary, we can prove that $\{\vec{a}_{1}, \vec{a}_{2}, \cdots, \vec{a}_{d+1}\}$ configures a regular $d$-simplex via the columns orthonormality of $A$.
\end{proof}
The proof of Proposition 4.
\begin{proof}
---In light of representation introduced in the main text, an element of SCM can be parameterized as
\begin{align}
X_{\mu} = \vec{n}_{\mu}\cdot \Pi = \frac{h}{d}\mathds{1} + \hat{n}_{\mu}\cdot \vec{\pi} \; .
\end{align}
Considering that $\{\hat{n}_{\mu}\}_{\mu=1}^{d^2}$ constitutes a regular $d$-simplex, which implies $\sum_{\mu=1}^{d^2}\hat{n}_{\mu}=0$, we reach
\begin{align}
\sum_{\mu=1}^{d^2}X_{\mu} = dh\mathds{1} \; .
\end{align}
For an arbitrary measurements $\boldsymbol{X}$ with $\tr[X_{\mu}]=h$
\begin{align}
\sum_{\mu=1}^{m}X_{\mu}^2 &= \sum_{\mu=1}^{m}\left(\vec{n}_{\mu}\cdot\boldsymbol{\Pi}\right)^2 \\
&= \sum_{\mu=1}^{m}(n_{0}\Pi_{0} + \hat{n}_{\mu}\cdot\vec{\pi})^2 \\
&= mn_{0}^2\Pi_{0}^2 + 2n_{0}\Pi_{0}\sum_{\mu}\hat{n}_{\mu}\cdot\vec{\pi} + \sum_{\mu}(\hat{n}_{\mu}\cdot\vec{\pi})^2 \\
&= \frac{mh^2}{d^2}\mathds{1} + \frac{2h}{d}\sum_{\mu}\hat{n}_{\mu}\cdot\vec{\pi} + \mathcal{C}' \; .
\end{align}
Here, $\mathcal{C}'=\sum_{\mu}(\hat{n}_{\mu}\cdot\vec{\pi})^2=\sum_{k,l}N_{kl}\pi_{k}\pi_{l}$ and $N=\sum_{\mu}\hat{n}_{\mu}\hat{n}_{\mu}^{\mathrm{T}}$. The first term is constant matrix and the last two terms depend on structure of $\{\hat{n}_{\mu}\}$. The following matrix is an orthogonal matrix via \cref{lem:simplex_orth}
\begin{align}
A=\left(
\begin{matrix}
\vec{a}_{1} & \vec{a}_{2} & \cdots & \vec{a}_{d^2} \\
1/d & 1/d & \cdots & 1/d \\
\end{matrix}
\right) \; .
\end{align}
Here, $\vec{a}_{i}=\frac{\sqrt{d^2-1}}{d}\vec{e}_{i}, |\vec{e}_{i}|=1$ and $\{\vec{e}_{i}\}$ constitutes a regular $(d^2-1)$-simplex.

Thus
\begin{align}
AA^{\mathrm{T}} &= \left(
\begin{matrix}
\sum_{\mu}\vec{a}_{\mu}\vec{a}_{\mu}^{\mathrm{T}} & 0 \\
0 & 1 \\
\end{matrix}
\right) = \mathds{1} \; ,
\end{align}
and then
\begin{align}
\sum_{\mu}\vec{e}_{\mu}\vec{e}_{\mu}^{\mathrm{T}} = \frac{d^2}{d^2-1} \mathds{1} \; .
\end{align}
Thoughout the whole proof, $\mathds{1}$ denotes identity matrix in corresponding Hilbert space.
If $\boldsymbol{X}$ is an SCM, that is, $\{\hat{n}_{\mu}\}$ constitutes a regular $(d^2-1)$-simplex, we have
\begin{align}
N &=\sum_{\mu}\hat{n}_{\mu}\hat{n}_{\mu}^{\mathrm{T}} = \alpha^2\sum_{\mu}\vec{e}_{\mu}\vec{e}_{\mu}^{\mathrm{T}} \notag \\
&= \frac{d^2\alpha^2}{d^2-1} \mathds{1} \; .
\label{eq:sup_simp_scm}
\end{align}
Thus
\begin{align}
\mathcal{C}' &= \sum_{k,l}N_{kl}\pi_{k}\pi_{l} = \frac{d^2\alpha^2}{d^2-1}\sum_{k}\pi_{k}^2 \\
&= 2d\alpha^2\mathds{1} \; ,
\end{align}
where $\sum_{k}\pi_{k}^2=\frac{2(d^2-1)}{d}\mathds{1}$ is the quadratic Casimir operator of $\mathfrak{su}(d)$ Lie algebra. Then we reach
\begin{align}
\sum_{\mu=1}^{d^2}X_{\mu}^2 = (h^2 + 2d\alpha^2)\mathds{1} \; .
\end{align}
The third relation can be calculated straightforwardly
\begin{align}
\tr[X_{\mu}X_{\nu}] &= 2\vec{n}_{\mu}\cdot\vec{n}_{\nu} = 2n_{0}^2 + 2\hat{n}_{\mu}\cdot\hat{n}_{\nu} \\
&= \frac{h^2}{d} + \frac{2\alpha^2(d^2\delta_{\mu\nu}-1)}{d^2-1} \; .
\end{align}
Here, we have used $\cos\braket{\hat{n}_{\mu},\hat{n}_{\nu}}=\frac{-1}{d^2-1},\mu\neq\nu$. If $\alpha^2=\frac{d-1}{2d^3},h=\frac{1}{d}$ and $X_{\mu}\geq 0$,
\begin{align}
\sum_{\mu=1}^{d^2}X_{\mu} = \mathds{1} \; , \; \sum_{\mu=1}^{d^2}X_{\mu}^2 = \frac{1}{d}\mathds{1} \; , \; \tr[X_{\mu}X_{\nu}] = \frac{d\delta_{\mu\nu}+1}{d(d+1)} \; ,
\end{align}
which just is an SIC-POVM.
\end{proof}

\section{The proof of \cref{th:state_recon_scm}}\label{appen:state_recon_scm}
\noindent
Firstly, we calculate the Moore-Penrose inverse for the following form of matrix
\begin{align}
A = \xi \mathds{1} + \eta J_{d} \; ,
\end{align}
where, $J_{d}$ is a $d$-dimentional all-ones matrix all whose elements are equal to one and $\xi,\eta$ are two real parameters. Assuming that $A$ is invertible, its inverse can be expressed as
\begin{align}
A^{-1} = \xi' \mathds{1} + \eta' J_{d} \; .
\end{align}
Taking into matrix equation $AA^{-1}=\mathds{1}$, we have
\begin{align}
\xi' = \frac{1}{\xi} \; , \; \eta' = -\frac{\eta}{\xi(\xi+d\eta)} \; .
\end{align}
Thus, if $\xi\neq 0, -d\eta$, $A$ is invertible and $A^{-1}=\frac{1}{\xi} \mathds{1} - \frac{\eta}{\xi(\xi+d\eta)} J_{d}$. Here, we don't care about $\xi=0$. If $\xi=-d\eta$, $A=\eta(J_{d}-d\mathds{1})$ is singular but we can calculate its Moore-Penrose inverse
\begin{align}
A^{-} = \frac{1}{\eta}(\frac{1}{d^2}J_{d}-\frac{1}{d}\mathds{1}) \; .
\label{eq:moore_penrose_inv}
\end{align}
Here, if $\eta=0$, $A^{-}=0$. It can be checked that \cref{eq:moore_penrose_inv} satisfies the four Moore-Penrose conditions \cref{eq:moore_penrose_cond1,eq:moore_penrose_cond2,eq:moore_penrose_cond3,eq:moore_penrose_cond4}. The proof of Theorem 2 is as follow.
\begin{proof}
---As a complete measurement set, a density matrix can be reconstructed by an SCM
\begin{align}
\rho = \sum_{\mu}\omega_{\mu}X_{\mu} \; .
\end{align}
Here, $\sum_{\mu}\omega_{\mu}=\frac{1}{h}$ due to $\tr[\rho]=1$. For an SCM, we have
\begin{align}
\Omega_{\mu\nu} = \tr[X_{\mu}X_{\nu}] = \frac{h^2}{d} + \frac{2\alpha^2(d^2\delta_{\mu\nu}-1)}{d^2-1} \; ,
\end{align}
that is
\begin{align}
\Omega = \frac{2\alpha^2d^2}{d^2-1}\mathds{1} + (\frac{h^2}{d}-\frac{2\alpha^2}{d^2-1})J_{d^2} \; .
\end{align}
In light of above discussions, if $h\neq 0$, $\Omega$ is invertible and
\begin{align}
\Omega^{-1} = \frac{d^2-1}{2\alpha^2d^2}\mathds{1} + (\frac{1}{d^3h^2}-\frac{d^2-1}{2\alpha^2d^4})J_{d^2} \; .
\label{eq:scm_inve}
\end{align}
Thus
\begin{align}
\rho &= \sum_{\mu}\omega_{\mu}X_{\mu} = \sum_{\mu\nu}\Omega_{\mu\nu}^{-1}x_{\nu}X_{\mu} \\
&= \frac{d^2-1}{2\alpha^2d^2}\sum_{\mu}x_{\mu}X_{\mu} + (\frac{1}{d^3h^2}-\frac{d^2-1}{2\alpha^2d^4})\sum_{\nu}x_{\nu}\sum_{\mu}X_{\mu} \\
&= \frac{d^2-1}{2\alpha^2d^2}\sum_{\mu}x_{\mu}X_{\mu} + (\frac{1}{d^3h^2}-\frac{d^2-1}{2\alpha^2d^4})d^2h^2\mathds{1} \\
&= \frac{d^2-1}{2\alpha^2d^2}\sum_{\mu}x_{\mu}X_{\mu} + (\frac{1}{d}-\frac{(d^2-1)h^2}{2\alpha^2d^2})d^2h^2\mathds{1}
\end{align}
Here, we make use of $\sum_{\mu}X_{\mu}=dh\mathds{1}$ in the second line. In the main text, we reach the following equation
\begin{align}
\tr[\rho^2] = \vec{x}^{\mathrm{T}}\Omega^{-}\vec{x} \; .
\label{eq:tr_rho_square}
\end{align}
Taking \cref{eq:scm_inve} into \cref{eq:tr_rho_square}
\begin{align}
\tr[\rho^2] &= \frac{d^2-1}{2\alpha^2d^2}|\vec{x}|^2 + (\frac{1}{d^3h^2}-\frac{d^2-1}{2\alpha^2d^4})\vec{x}^{\mathrm{T}}J_{d^2}\vec{x} \\
&= \frac{d^2-1}{2\alpha^2d^2}|\vec{x}|^2 + (\frac{1}{d}-\frac{h^2(d^2-1)}{2\alpha^2d^2}) \; .
\end{align}
Here, $\vec{x}^{\mathrm{T}}J_{d^2}\vec{x}=(\sum_{\mu}x_{\mu})^2=d^2h^2$. Thus,
\begin{align}
|\vec{x}|^2 = \frac{2d\alpha^2\left(d\tr[\rho^2]-1\right)}{d^2-1} + h^2 \; .
\end{align}
If $\tr[X_{\mu}]=h=0$, in order to expand density matrix, we need identity matrix $\mathds{1}$ additional, i.e.
\begin{align}
\rho = \sum_{\mu=0}^{d^2}\omega_{\mu}X_{\mu} = \frac{1}{d}\mathds{1} + \sum_{\mu=1}^{d^2}\omega_{\mu}X_{\mu} \; .
\end{align}
Here, $X_{0}=\mathds{1}$ and $\omega_{0}=\frac{1}{d}$. Since $X_{\mu}$ are linearly dependent, the coefficients $\omega_{\mu}$ will be no unique. But we can still calculate $\tr[\rho^2]$ by \cref{eq:tr_rho_square}. When $h=0$, we have
\begin{align}
\Omega = d\oplus \Omega' \; ,
\end{align}
where, $\Omega'=-\frac{2\alpha^2}{d^2-1}( J_{d^2} - d^2\mathds{1})$. Via \cref{eq:moore_penrose_inv}
\begin{align}
\Omega'^{-} = \frac{d^2-1}{2\alpha^2d^2}(\mathds{1} - \frac{1}{d^2}J_{d^2}) \; .
\end{align}
Making use of $(A\oplus B)^{-}=A^{-}\oplus B^{-}$, we have
\begin{align}
\Omega^{-} = \frac{1}{d}\oplus \Omega'^{-} \; .
\end{align}
Thus
\begin{align}
\tr[\rho^2] &= \vec{x}^{\mathrm{T}}\Omega^{-}\vec{x} = \frac{1}{d}x_{o}^2 + \vec{x}^{\mathrm{T}}\Omega'^{-}\vec{x} \\
&=\frac{d^2-1}{2\alpha^2d^2}|\vec{x}|^2 + \frac{1}{d} \; .
\end{align}
Here, $x_{0}=\tr[\rho\mathds{1}]=1$. Thus,
\begin{align}
|\vec{x}|^2 = \frac{2d\alpha^2\left(d\tr[\rho^2]-1\right)}{d^2-1} \; .
\end{align}
\end{proof}

\section{The proof of \cref{eq:corre_matrix_c,eq:corre_matrix_gamma}}\label{appen:corre_matrix}
\noindent
The Eqs. (51), (52) in the main text is
\begin{align}
&\mathcal{C} = M_{A}^\mathrm{T}\chi M_{B} \; , \\
&\gamma = M_{A}^\mathrm{T}(\chi'-\chi)M_{B} \; .
\end{align}
Here, $M_{A}=(\vec{n}_{1}^{A},\vec{n}_{2}^{A},\cdots)$, $M_{B}=(\vec{n}_{1}^{B},\vec{n}_{2}^{B},\cdots)$ correspond to the implemented measurements in the two subsystems respectively and $\chi'_{\mu\nu}=\tr[(\rho_{A}\otimes\rho_{B})(\Pi_{\mu}\otimes\Pi_{\nu})]=\frac{d}{2}\chi_{\mu 0}\chi_{0\nu}$.
\begin{proof}
---The correlation matrices $\mathcal{C}$ and $\gamma$ are defined as
\begin{align}
&\mathcal{C}_{\mu\nu} = \braket{X_{\mu}^{A}\otimes X_{\nu}^{B}} \; , \\
&\gamma_{\mu\nu} = \braket{X_{\mu}^{A}}\braket{X_{\nu}^{B}} - \braket{X_{\mu}^{A}\otimes X_{\nu}^{B}} \; .
\end{align}
Here, $X_{\mu}^{A}=\vec{n}_{\mu}^{A}\cdot\boldsymbol{\Pi}$ and $X_{\nu}^{B}=\vec{n}_{\nu}^{B}\cdot\boldsymbol{\Pi}$. Considering that $\rho=\frac{1}{4}\sum_{ij}\chi_{ij}\Pi_{i}\otimes\Pi_{j}$
\begin{align}
\mathcal{C}_{\mu\nu} &= \tr\left[\rho(X_{\mu}^{A}\otimes X_{\nu}^{B})\right] \\
&= \tr\left[\frac{1}{4}\sum_{ij}\chi_{ij}\Pi_{i}\otimes\Pi_{j}\left(\sum_{kl}n_{\mu k}^{A}n_{\nu l}^{B}\Pi_{k}\otimes\Pi_{l}\right)\right] \\
&= \frac{1}{4}\sum_{ij}\chi_{ij}\sum_{kl}n_{\mu k}^{A}n_{\nu l}^{B}\tr\left[\Pi_{i}\Pi_{k}\otimes\Pi_{j}\Pi_{l}\right] \\
&= \sum_{ij}\chi_{ij}\sum_{kl}n_{\mu k}^{A}n_{\nu l}^{B}\delta_{ik}\delta_{jl} \\
&= \sum_{ij}n_{\mu i}^{A}\chi_{ij}n_{\nu j}^{B} \; ,
\end{align}
whose matrix form is
\begin{align}
\mathcal{C} = M_{A}^\mathrm{T}\chi M_{B} \; ,
\end{align}
and $M_{A}=(\vec{n}_{1}^{A},\vec{n}_{2}^{A},\cdots)$, $M_{B}=(\vec{n}_{1}^{B},\vec{n}_{2}^{B},\cdots)$. Here, we make use of the orthogonal relation $\tr[\Pi_{\mu}\Pi_{\nu}]=2\delta_{\mu\nu}$. For $\gamma$, we only need to notice
\begin{align}
\braket{X_{\mu}^{A}}\braket{X_{\nu}^{B}} &= \tr[(\rho_{A}\otimes\rho_{B})(X_{\mu}^{A}\otimes X_{\nu}^{B})] \\
&= \sum_{ij}n_{\mu i}^{A}\chi'_{ij}n_{\nu j}^{B}
\end{align}
Here, $\chi'_{ij}=\tr[(\rho_{A}\otimes\rho_{B})(\Pi_{i}\otimes \Pi_{j})]$. In addition, we have $\rho_{A}=\tr_{B}[\rho]=\frac{\sqrt{2d}}{4}\sum_{i}\chi_{i0}\Pi_{i}$ and $\rho_{B}=\tr_{A}[\rho]=\frac{\sqrt{2d}}{4}\sum_{j}\chi_{0j}\Pi_{j}$, thus $\chi'_{ij}=\frac{d}{2}\chi_{i0}\chi_{0j}$. Finally, we have
\begin{align}
\gamma = M_{A}^\mathrm{T}(\chi'-\chi)M_{B} \; .
\end{align}
\end{proof}

\section{The proof of \cref{ob:measu_orbit_invar}}\label{appen:measu_orbit_invar}
\begin{proof}
---We have proved that the correlation matrix $\mathcal{C}$ and $\gamma$ can be written as
\begin{align}
&\mathcal{C} = M_{A}^\mathrm{T}\chi M_{B} \; , \\
&\gamma = M_{A}^\mathrm{T}(\chi'-\chi)M_{B} \; .
\end{align}
Here, $M_{A}=(\vec{n}_{1}^{A},\vec{n}_{2}^{A},\cdots,\vec{n}_{m}^{A})$, $M_{B}=(\vec{n}_{1}^{B},\vec{n}_{2}^{B},\cdots,\vec{n}_{m}^{B})$ and $X_{\mu}^{A}=\vec{n}_{\mu}^{A}\cdot\boldsymbol{\Pi},X_{\nu}^{B}=\vec{n}_{\nu}^{B}\cdot\boldsymbol{\Pi}$. Assuming that there are two measurements $\boldsymbol{X}^{A}$ and $\boldsymbol{X}^{A'}$ lying in the same orbit, i.e. $X^{A'}_{\mu}=\sum_{\nu}O_{\mu\nu}^{A}X_{\nu}^{A}$ and $O^{A}(O^{A})^{\mathrm{T}}=\mathds{1}$. Via the orthogonal relation $\tr[\Pi_{\mu}\Pi_{\nu}]=2\delta_{\mu\nu}$, we have
\begin{align}
n_{\mu k}^{A'} = \sum_{\nu}O_{\mu\nu}^{A}n_{\nu k}^{A} \; ,
\end{align}
which can be rewritten as the following matrix form $M_{A}'^{\mathrm{T}} = O^{A}M_{A}^{\mathrm{T}}$. Similarly, we have $M_{B}'^{\mathrm{T}} = O^{B}M_{B}^{\mathrm{T}}$. Thus
\begin{align}
\|\mathcal{C}'\|_{\mathrm{tr}} &= \|M_{A}'^{\mathrm{T}}\chi M_{B}'\|_{\mathrm{tr}} \\
&= \|O^{A}(M_{A}^{\mathrm{T}}\chi M_{B})(O^{B})^{\mathrm{T}}\|_{\mathrm{tr}} \\
&= \|\mathcal{C}\|_{\mathrm{tr}} \; .
\end{align}
Similarly, we can prove that $\|\gamma\|_{\mathrm{tr}}$ also is a measurement orbit invariant.
\end{proof}

\section{The proof of \cref{th:corre_c_sep_cri}}\label{appen:corre_c_sep_cri}
\begin{proof}
---For a separable state $\rho = \sum_{k}p_{k}\ket{\psi_{k}}\bra{\psi_{k}}\otimes \ket{\phi_{k}}\bra{\phi_{k}}$, we have
\begin{align}
\mathcal{C}_{\mu\nu} &= \braket{X_{\mu}^{A}\otimes X_{\nu}^{B}} = \sum_{k}p_{k}\braket{X_{\mu}^{A}}_{k}\braket{X_{\nu}^{B}}_{k} \\
&= \sum_{k}p_{k}x_{\mu k}^{A}x_{\nu k}^{B} \; ,
\end{align}
which can be rewritten as the following matrix form
\begin{align}
\mathcal{C} = \sum_{k}p_{k}\vec{x}_{k}^{A}(\vec{x}_{k}^{B})^{\mathrm{T}} \; .
\end{align}
Here, $\vec{x}_{k}^{A}=(\braket{X_{1}^{A}}_{k},\braket{X_{2}^{A}}_{k},\cdots)^{\mathrm{T}}$ and similarly for $\vec{x}_{k}^{B}$. Thus
\begin{align}
\|\mathcal{C}\|_{\mathrm{tr}} &\leq \sum_{k}p_{k}\|\vec{x}_{k}^{A}(\vec{x}_{k}^{B})^{\mathrm{T}}\|_{\mathrm{tr}} = \sum_{k}p_{k}|\vec{x}_{k}^{A}||\vec{x}_{k}^{B}| \\
&\leq \max_{\vec{x}^{A}\in\mathcal{B}(\boldsymbol{X}^{A})}|\vec{x}^{A}| \max_{\vec{x}^{B}\in\mathcal{B}(\boldsymbol{X}^{B})}|\vec{x}^{B}| \; .
\end{align}
Here, we have used the convexity of norm for the first inequality and the boundedness of the convex set $\mathcal{B}(\boldsymbol{X})$ for the second inequality.
\end{proof}

\section{The proof of \cref{eq:ent_witness,eq:ent_witness_opt}}\label{appen:ent_witness}
\noindent
The Eqs. (56), (57) in the main text is
\begin{align}
&\mathcal{W} = \kappa\mathds{1} - \sum_{\mu}X_{\mu}^{A}\otimes X_{\mu}^{B} \; , \\
&\mathcal{W}_{O} = \kappa\mathds{1} - \sum_{\mu}\widetilde{X}_{\mu}^{A}\otimes \widetilde{X}_{\mu}^{B} \; .
\end{align}
Here, $\boldsymbol{\widetilde{X}}^{A}=(O^{A})^{\mathrm{T}}\boldsymbol{X}^{A}$, $\boldsymbol{\widetilde{X}}^{B}=(O^{B})^{\mathrm{T}}\boldsymbol{X}^{B}$ and $O^{A},O^{B}$ correspond to a singular value decomposition of $\mathcal{C}$, i.e. $\mathcal{C}=O^{A}\Sigma (O^{B})^{\mathrm{T}}$.
\begin{proof}
---Obviously, $\mathcal{W}$ is a Hermitian operator. Thus, we only need to prove $\tr[\mathcal{W}\rho_{\mathrm{sep}}]\geq 0$.
\begin{align}
\tr[\mathcal{W}\rho_{\mathrm{sep}}] &= \kappa - \sum_{\mu}\braket{X_{\mu}^{A}\otimes X_{\mu}^{B}} \notag \\
&= \kappa - \sum_{i}p_{i}\sum_{\mu}\braket{X_{\mu}^{A}}_{i}\braket{X_{\mu}^{B}}_{i} \notag \\
&= \kappa - \sum_{i}p_{i}\vec{x}_{i}^{A}\cdot\vec{x}_{i}^{B} \\
&\geq \kappa - \sum_{i}p_{i}|\vec{x}_{i}^{A}||\vec{x}_{i}^{B}| \geq 0 \; .
\end{align}
And then we prove that Theorem 2 is equivalent to the optimal entanglement witness $\mathcal{W}_{O}$.
\begin{align}
\tr[\mathcal{W}_{\sss O}\rho_{\mathrm{sep}}] &= \kappa - \sum_{\mu}\braket{\widetilde{X}_{\mu}^{A}\otimes \widetilde{X}_{\mu}^{B}} \notag \\
&= \kappa - \sum_{\mu}\Braket{\sum_{\mu'}(O^{A})^{\mathrm{T}}_{\mu\mu'}X_{\mu'}^{A}\otimes \sum_{\nu'}(O^{B})^{\mathrm{T}}_{\mu\nu'}X_{\nu'}^{B}} \\
&= \kappa - \sum_{\mu}\sum_{\mu'\nu'}(O^{A})^{\mathrm{T}}_{\mu\mu'}(O^{B})^{\mathrm{T}}_{\mu\nu'}\braket{X_{\mu'}^{A}\otimes X_{\nu'}^{B}} \\
&= \kappa - \sum_{\mu}\sum_{\mu'\nu'}(O^{A})^{\mathrm{T}}_{\mu\mu'}\mathcal{C}_{\mu'\nu'}O^{B}_{\nu'\mu} \\
&= \kappa - \tr[(O^{A})^{\mathrm{T}}\mathcal{C} O^{B}] \\
&= \kappa - \tr[\Sigma] \\
&= \kappa - \|\mathcal{C}\|_{\mathrm{tr}} \; .
\end{align}
Here, we make use of the singular value decomposition of $\mathcal{C}=O^{A}\Sigma (O^{B})^{\mathrm{T}}$.
\end{proof}

\section{The proof of $\max\sum_{\mu}\gamma_{\mu\mu}=\|\gamma\|_{\mathrm{tr}}$}\label{appen:gamma_orbit_opt}
\noindent
\textbf{von Neumann's trace theorem \cite{horn12_asdfd}.} Let the ordered singular values of $A,B\in M_{n}$ be $\sigma_{1}(A)\geq\cdots\geq\sigma_{n}(A)$ and $\sigma_{1}(B)\geq\cdots\geq\sigma_{n}(B)$. Then
\begin{align}
\operatorname{Re} \tr[AB] \leq \sum_{i}^{n}\sigma_{i}(A)\sigma_{i}(B) \; .
\end{align}
If $B=U$ and $UU^{\dagger}=\mathds{1}$, we reach the following corollary from von Neumann's trace theorem
\begin{align}
\operatorname{Re} \tr[AU] \leq \sum_{i}^{n}\sigma_{i}(A) = \|A\|_{\mathrm{tr}} \; ,
\end{align}
and the maximum can be reached with $U=WV^{\dagger}$, where $A=V\Sigma W^{\dagger}$ is a singular value decomposition of $A$. Thus, if $A$ is a real matrix, we have
\begin{align}
\max_{O\in O(n)}\tr[AO] = \|A\|_{\mathrm{tr}} \; .
\label{eq:neum_theore_coro}
\end{align}
Here, $O(n)$ is orthogonal group in dimension $n$. Thus
\begin{align}
\max_{\boldsymbol{X}^{A}\in\mathcal{O}(\boldsymbol{X}^A),\boldsymbol{X}^{B}\in\mathcal{O}(\boldsymbol{X}^B)}\sum_{\mu}\gamma_{\mu\mu} =& \max_{O^{A},O^{B}\in O(m)}\sum_{\mu}\Bigg[\Braket{\sum_{\mu'}O_{\mu\mu'}^{A}X_{\mu'}^{A}}\Braket{\sum_{\nu'}O_{\mu\nu'}^{B}X_{\nu'}^{B}} \\
&- \Braket{\sum_{\mu'}O_{\mu\mu'}^{A}X_{\mu'}^{A}\otimes \sum_{\nu'}O_{\mu\nu'}^{B}X_{\nu'}^{B}}\Bigg] \\
&= \max_{O^{A},O^{B}\in O(m)}\sum_{\mu}\sum_{\mu'\nu'}O_{\mu\mu'}^{A}O_{\mu\nu'}^{B}\left[\braket{X_{\mu'}^{A}}\braket{X_{\nu'}^{B}} - \braket{X_{\mu'}^{A}\otimes X_{\nu'}^{B}}\right] \\
&= \max_{O^{A},O^{B}\in O(m)}\sum_{\mu}\sum_{\mu'\nu'}O_{\mu\mu'}^{A}\gamma_{\mu'\nu'}(O^{B})^{\mathrm{T}}_{\nu'\mu} \\
&= \max_{O^{A},O^{B}\in O(m)}\tr[O^{A}\gamma (O^{B})^{\mathrm{T}}] \\
&= \max_{O\in O(m)}\tr[\gamma O] \\
&= \|\gamma\|_{\mathrm{tr}} \; .
\end{align}
Here, $O=(O^{B})^{\mathrm{T}}O^{A}$ and we make use of the corollary \cref{eq:neum_theore_coro} of von Neumann trace theorem.

\section{The proof of \cref{coro:sep_norm,coro:gene_sep}}\label{appen:coro_proof}
\begin{proof}
---Corollary 4: For the orthogonal measurement $\{X_{\mu}|\tr[X_{\mu}X_{\nu}]=\delta_{\mu\nu}\}$, $\sqrt{2}M_{A}$ and $\sqrt{2}M_{B}$ are orthogonal matrices and $\|\mathcal{C}\|_{\mathrm{tr}} = \|M_{A}^\mathrm{T}\chi M_{B}\|_{\mathrm{tr}}=\frac{1}{2}\|\chi\|_{\mathrm{tr}}$. Thus, CCNR criterion can be reformulated as $\|\chi\|_{\mathrm{tr}}\leq 2$. For normal form, we have
\begin{align}
\tilde{\rho} = \frac{1}{d^2}\mathds{1}\otimes\mathds{1} + \widetilde{\chi}_{\mu\nu}\pi_{\mu}\otimes\pi_{\nu} \; .
\end{align}
Thus, $\chi=\chi_{00}\oplus\widetilde{\chi}=\frac{2}{d}\oplus\widetilde{\chi}$ and $\|\chi\|_{\mathrm{tr}}=\|\frac{2}{d}\oplus\widetilde{\chi}\|_{\mathrm{tr}}=\frac{2}{d}+\|\tilde{\chi}\|_{\mathrm{tr}}\leq 2$, that is, $\|\widetilde{\chi}\|_{\mathrm{tr}} \leq \frac{2(d-1)}{d}$. For measurement $\{\frac{h}{\sqrt{d}}\mathds{1},\frac{\pi_{\mu}}{\sqrt{2}}\}$, we have the separability condition via Theorem 1
\begin{align}
\|\mathcal{C}\|_{\mathrm{tr}} = \|M_{A}^\mathrm{T}\chi M_{B}\|_{\mathrm{tr}} \leq \frac{d-1+h^2}{d} \; .
\end{align}
Here, $M_{A}=M_{B}=\frac{h}{\sqrt{2}}\oplus\frac{1}{\sqrt{2}}\mathds{1}$ and $\chi=\frac{2}{d}\oplus\widetilde{\chi}$ for normal form. Thus, $\|\mathcal{C}\|_{\mathrm{tr}}=\|\frac{h^2}{d}\oplus\frac{1}{2}\widetilde{\chi}\|_{\mathrm{tr}}=\frac{h^2}{d}+\frac{1}{2}\|\widetilde{\chi}\|_{\mathrm{tr}}\leq \frac{d-1+h^2}{d}$, that is, $\|\widetilde{\chi}\|_{\mathrm{tr}}\leq \frac{2(d-1)}{d}$. For measurement SCM, we have the separability condition via Theorem 1
\begin{align}
\|\mathcal{C}\|_{\mathrm{tr}} = \|M_{A}^\mathrm{T}\chi M_{B}\|_{\mathrm{tr}} \leq \frac{2d\alpha^2}{d+1}+h^2 \; .
\end{align}
Here,
\begin{align}
M_{A}=(\vec{n}_{1}^{A},\vec{n}_{2}^{A},\cdots,\vec{n}_{d^2}^{A})=\left(
\begin{matrix}
\frac{h}{\sqrt{2d}} & \frac{h}{\sqrt{2d}} & \cdots & \frac{h}{\sqrt{2d}} \\
\hat{n}_{1}^{A} & \hat{n}_{2}^{A} & \cdots & \hat{n}_{d^2}^{A}
\end{matrix}
\right) \; ,
\end{align}
and $\hat{n}_{1}^{A},\hat{n}_{2}^{A},\cdots,\hat{n}_{d^2}^{A}$ constitute a regular $(d^2-1)$-simplex, similarly for $M_{B}$. Thus, in light of \cref{eq:sup_simp_scm}, we have
\begin{align}
M_{A}M_{A}^{\mathrm{T}} = \left(
\begin{matrix}
\frac{dh^2}{2} & 0 \\
0 & \sum_{\mu}\hat{n}_{\mu}^{A}(\hat{n}_{\mu}^{A})^{\mathrm{T}}
\end{matrix}
\right) = \frac{dh^2}{2}\oplus\frac{d^2\alpha^2}{d^2-1} \mathds{1} \; .
\end{align}
Similarly, $M_{B}M_{B}^{\mathrm{T}}=\frac{dh^2}{2}\oplus\frac{d^2\alpha^2}{d^2-1} \mathds{1}$. Trace norm can be rewritten as $\|\mathcal{C}\|_{\mathrm{tr}}=\tr\left[\sqrt{\mathcal{C}\mathcal{C}^{\mathrm{T}}}\right]$, thus
\begin{align}
\|\mathcal{C}\|_{\mathrm{tr}}
&= \tr\left[\sqrt{\chi M_{B}M_{B}^{\mathrm{T}}\chi^{\mathrm{T}} M_{A}M_{A}^{\mathrm{T}}}\right] \\
&= \tr\left[\sqrt{h^4\oplus(\frac{d^2\alpha^2}{d^2-1})^2\widetilde{\chi}\widetilde{\chi}^{\mathrm{T}}}\right] \\
&= h^2 + \frac{d^2\alpha^2}{d^2-1}\|\widetilde{\chi}\|_{\mathrm{tr}} \; ,
\end{align}
where, $\chi=\frac{2}{d}\oplus\widetilde{\chi}$ for normal form. Finally, we obtain $\|\widetilde{\chi}\|_{\mathrm{tr}}\leq \frac{2(d-1)}{d}$.

\noindent
Corollary 5: It is very similar to the proof of Corollary 4. We only need to notice that the first column and row of $\chi'-\chi$ are zero, that is $\chi'-\chi=0\oplus \kappa$ and $\kappa_{\mu\nu}=\chi'_{\mu\nu}-\chi_{\mu\nu}$, $\mu,\nu=1,2,\cdots,d^2-1$.
\end{proof}

\section{The proof of \cref{th:univ_steering_cri}}\label{appen:univ_steering_cri}
\begin{proof}
---For a unsteerable state, the joint probability distribution always can be expressed
\begin{align}
P(a,b|X^{A},X^{B};\rho) = \sum_{\xi}\wp(a|X^{A},\xi)\tr[\Pi_{b}^{B}\rho_{\xi}]\wp_{\xi} \; .
\end{align}
So, the correlation matrix elements can be written as
\begin{align}
\mathcal{C}_{\mu\nu} &= \braket{X_{\mu}^{A}\otimes X_{\nu}^{B}} = \sum_{a_{\mu},b_{\nu}}a_{\mu}b_{\nu}P(a_{\mu},b_{\nu}|X_{\mu}^{A},X_{\nu}^{B};\rho) \notag \\
&= \sum_{a_{\mu},b_{\nu}}a_{\mu}b_{\nu}\left(\sum_{\xi}\wp(a_{\mu}|X_{\mu}^{A},\xi)\tr[\Pi_{b_{\nu}}^{B}\rho_{\xi}]\wp_{\xi}\right) \notag \\
&= \sum_{\xi}\left(\sum_{a_{\mu}}a_{\mu}\wp(a_{\mu}|X_{\mu}^{A},\xi)\right)\tr\left[\sum_{b_{\nu}}b_{\nu}\Pi_{b}^{B}\rho_{\xi}\right]\wp_{\xi} \notag \\
&= \sum_{\xi}\alpha_{\xi\mu}\tr[X_{\nu}^{B}\rho_{\xi}]\wp_{\xi} \notag \\
&= \sum_{\xi}\wp_{\xi}\alpha_{\xi\mu}x_{\xi\nu}^{B} \; .
\end{align}
Here, $\{a_{\mu}\},\{b_{\nu}\}$ are eigenvalues of $X_{\mu}^{A}$ and $X_{\nu}^{B}$ respectively and $\alpha_{\xi\mu}=\sum_{a_{\mu}}a_{\mu}\wp(a_{\mu}|X_{\mu}^{A},\xi)$, $x_{\xi\nu}^{B}=\tr[X_{\nu}^{B}\rho_{\xi}]$. Thus, the correlation matrix $\mathcal{C}$ is
\begin{align}
\mathcal{C} = \sum_{\xi}\wp_{\xi}\vec{\alpha}_{\xi}(\vec{x}_{\xi}^{B})^{\mathrm{T}} \; .
\end{align}
Here, $\vec{\alpha}_{\xi}=(\alpha_{\xi 1},\alpha_{\xi 2},\cdots)^{\mathrm{T}}$ and $\vec{x}_{\xi}^{B}=(\braket{X_{1}^{B}}_{\xi},\braket{X_{2}^{B}}_{\xi},\cdots)^{\mathrm{T}}$. In order to reach the final result, we need to calculate the bound of $|\vec{\alpha}_{\xi}|$,
\begin{align}
|\vec{\alpha}_{\xi}| &= \sqrt{\sum_{\mu}\alpha_{\xi\mu}^2} \\
&= \sqrt{\sum_{\mu}\left(\sum_{a_{\mu}}a_{\mu}\wp(a_{\mu}|X_{\mu}^{A},\xi)\right)^2} \\
&\leq \sqrt{\sum_{\mu}\sum_{a_{\mu}}a_{\mu}^2\wp(a_{\mu}|X_{\mu}^{A},\xi)} \\
&\leq \sqrt{\sum_{\mu}\max{\{a_{\mu}^2\}}} \\
&= \sqrt{\sum_{\mu}\lambda_{\mathrm{max}}((X_{\mu}^{A})^2)} \; .
\end{align}
Here, $\lambda_{\mathrm{max}}(X)$ refers to the maximal eigenvalue of $X$ and the first inequality is due to the nonnegativity of the variance. Thus
\begin{align}
\|\mathcal{C}\|_{\mathrm{tr}} &\leq \sum_{\xi}\wp_{\xi}\|\vec{\alpha}_{\xi}(\vec{x}_{\xi}^{B})^{\mathrm{T}}\|_{\mathrm{tr}} = \sum_{\xi}\wp_{\xi}|\vec{\alpha}_{\xi}||\vec{x}_{\xi}^{B}| \\
&\leq \sqrt{\sum_{\mu}\lambda_{\mathrm{max}}((X_{\mu}^{A})^2)} \max_{\vec{x}^{B}\in\mathcal{B}(\boldsymbol{X}^{B})}|\vec{x}^{B}| \; .
\end{align}
We have proved that $\|\mathcal{C}\|_{\mathrm{tr}}$ is a measurement orbit invariant in \cref{ob:measu_orbit_invar}. So, the optimal result is to optimize the $\sqrt{\sum_{\mu}\lambda_{\mathrm{max}}((X_{\mu}^{A})^2)}$ in the measurement orbit, that is, $\displaystyle\min_{\boldsymbol{X}^{A}\in\mathcal{O}(\boldsymbol{X}^A)}\sqrt{\sum_{\mu}\lambda_{\mathrm{max}}((X_{\mu}^{A})^2)}$. Finally, we have
\begin{align}
\|\mathcal{C}\|_{\mathrm{tr}} \leq \min_{\boldsymbol{X}^{A}\in\mathcal{O}(\boldsymbol{X}^A)}\sqrt{\sum_{\mu}\lambda_{\mathrm{max}}((X_{\mu}^{A})^2)}\max_{\vec{x}^{B}\in\mathcal{B}(\boldsymbol{X}^{B})}|\vec{x}^{B}| \; .
\end{align}
\end{proof}


\end{document}